\newlength{\figurewidth}
\newlength{\figureheight}
\newtheorem{property}{Proposition}[section]
\newtheorem{definition}{Definition}[section]
\newtheorem{theorem}{Theorem}[section]
\newtheorem{lemma}{Lemma}[section]
\begin{document}
	\setcounter{page}{1}
	
	\title{Optimal market making with persistent order flow}

	\author{
		Paul Jusselin\footnote{paul.jusselin@polytechnique.edu} \\
		\'Ecole Polytechnique, CMAP
		}

	\maketitle
	
	\begin{abstract}
\noindent We address the issue of market making on electronic markets when taking into account the clustering and long memory properties of market order flows. We consider a market model with one market maker and order flows driven by general Hawkes processes. We formulate the market maker's objective as a stochastic control problem. We characterize an optimal control by proving existence and uniqueness of a viscosity solution to the associated Hamilton-Jacobi-Bellman equation. Finally we propose a fully consistent numerical method allowing to implement this optimal strategy in practice.
	\end{abstract}

\noindent \textbf{Keywords:} Hawkes processes, market making, high frequency trading, stochastic control, partial differential equations, viscosity solutions.

\section{Introduction}
\label{sec:introduction}

Most electronic exchanges are organized as anonymous continuous double auction systems. Market participants can send limit orders to a central limit order book (LOB for short) displaying the volume of shares and the price at which they stand ready to buy or sell those shares. Market participants can also use market orders specifying a volume to buy or sell instantaneously at the best available price. In a very stylized view we can consider that there are two types of market participants: market takers seeking to buy or sell shares for strategic purposes using market orders and market makers filling the LOB with limit orders. Market makers play the role of intermediaries between buyers and sellers market takers.\\

In practice one of the main risks faced by a market maker is the inventory risk. For example if he has a large positive inventory, price may decrease to his disadvantage. Market makers thus design their strategies in order to mitigate this risk. Basically we expect a market maker with a large positive inventory to set attractive ask prices and less competitive bid prices to attract more buy than sell market orders. More generally he must adapt his strategy to the main statistical features of order flows. Two key stylized facts market makers should take into account in their trading strategies are the clustering and long memory properties of market order flows. The clustering property refers to the fact that buy and sell market orders are not distributed homogeneously in time but tend to be clustered, see \cite{hewlett2006clustering}. In practice it means that after a buy (for say) market order it is likely that a new one is going to be sent shortly. The long memory of market order flows means the autocorrelation function of trades signs ($+1$ for a buy order and $-1$ for a sell order) has a power-law tail, see \cite{lillo2004long}. In this paper our goal is to propose a method to design market making strategies that take into account those two features of market order flows.\\

The issue of market making while managing inventory risk has been notably addressed in \cite{avellaneda2008high, gueant2013dealing} where market order flows are modeled using Poisson processes, see also the books \cite{cartea2015algorithmic, gueant2016financial}. But Poisson processes neither reproduce the clustering nor the long memory property of market order flows. Dealing with the same topic the authors of \cite{cartea2014buy, chen2020optimal} use a refined model based on Hawkes processes with exponential kernels. The same modeling is also used in \cite{alfonsi2016dynamic, hewlett2006clustering} to design optimal liquidation strategies. With such kernels Hawkes processes reproduce the clustering property of market order flows but not its long memory. However when the kernel has a power law tail, both properties are reproduced, see \cite{bacry2016estimation, jaisson2016rough}. Hence in this paper we extend the works \cite{avellaneda2008high, cartea2014buy, gueant2013dealing} to market order flows driven by Hawkes processes with general kernels.\\

We now make precise the market model we use. We consider a market with one market maker controlling the best bid and ask prices and with market takers sending only market orders of unit volume. We denote by $N^a_t$ (resp. $N^b_t$) the total number of buy (resp. sell) market orders sent between time $0$ and time $t$ and $i_t=N_t^b - N_t^a$ the market maker's inventory, which is null at time $0$. As in \cite{avellaneda2008high} the market maker controls the bid and ask spreads, denoted by $\delta^a$ and $\delta^b$. The corresponding best ask and bid prices are $P+\delta^a$ and $P-\delta^b$, where $P$ is the fundamental price of the underlying asset. The set of admissible controls is then 
$$
\mathcal{A} = \{\delta=(\delta^a, \delta^b) \in \mathbb{R}_+^2,\text{ s.t. }\delta\text{ is predictable}\},
$$
where predictability is relative to the natural filtration generated by $(P,N^a, N^b)$, see Section \ref{sec:formal_definition} for more details. Since market takers are seeking for low transaction costs, their trading intensity is decreasing with the spreads. More precisely we know, from classical financial economics results, see \cite{dayri2015large,madhavan1997security, wyart2008relation}, that the average number of trades per unit of time is a decreasing function of the ratio between spread and volatility. To model this we consider that market order intensities are given by
$$
\lambda^{a, \delta}_t = e^{-\frac{k}{\sigma}\delta^a_t}\lambda^{a, 0}_t\text{ and }\lambda^{b, \delta}_t = e^{-\frac{k}{\sigma}\delta^b_t}\lambda^{b, 0}_t,
$$
where $k$ is a positive constant, $\sigma$ the price volatility and
$$
\lambda^{a,0}_t = \Phi\big( \int_0^t K(t-s) \mathrm{d}N^a_s \big),~\lambda^{b,0}_t = \Phi\big( \int_0^t K(t-s) \mathrm{d}N^b_s \big),
$$
where $\Phi$ is a continuous function and $K$ a completely monotone $L^1$ function\footnote{In this paper we consider complete monotony on $\mathbb{R}_+$.}. Note that if the spreads are null, the processes $N^a$ and $N^b$ are, as intended, generalized Hawkes processes with kernel $K$. Regarding the dynamics of $P$ we assume it is given by
	\begin{equation}
	\label{eq:price_dynamic}
	\mathrm{d}P_t = d(t, P_t)\mathrm{d}t + \sigma \mathrm{d}W_t
	\end{equation}
where $d$ is a Lipschitz function.\\

Inspired by \cite{avellaneda2008high, cartea2014buy, gueant2013dealing} we consider that the market maker's problem is equivalent to solve the following stochastic control problem:
	\begin{equation}
	\label{eq:optimal_control_intro}
	\underset{\delta \in \mathcal{A}}{\sup}~ 	\mathbb{E}^{\delta}\Big[G(i_T, P_T) e^{-rT} + \int_0^T e^{-rs}\big(  g(i_s, P_s)\mathrm{d}s + \delta^a_s \mathrm{d}N^a_s+ \delta^b_s\mathrm{d}N^b_s\big)\Big],
	\end{equation}
where $r$ is a positive constant and $g$ and $G$ are two continuous functions with at most quadratic growth. The former represents a continuous reward received by the market maker (besides its P\&L) and the latter is a final lump sum payment received at the end of the trading horizon. Typical choices would be $G(x, y) = xy$ and $g(x, y) = -x^2$. The notation $\mathbb{E}^{\delta}$ denotes the expectation under the law corresponding to the control $\delta$ (see Section \ref{sec:formal_definition} for details).\\

When the order flows are Poisson processes (corresponding to $\phi$ constant) the process $(P, i, N^a, N^b)$ is Markovian. Hence to solve the market maker's optimization problem the authors of \cite{avellaneda2008high} study the associated Hamilton-Jacobi-Bellman equation (HJB for short). However when $N^a$ and $N^b$ are general Hawkes processes $(P,i, N^a, N^b)$ is not Markovian\footnote{In the exponential case the process $(P,i,N^a, N^b, \lambda^b, \lambda^a)$ is Markovian.}. In order to circumvent this issue we need to consider auxiliary state variables enabling us to work in a Markovian setting. More precisely we consider the process $X^K=(P, i, \theta^{K,a}, \theta^{K,b})$ where
	$$
	\theta^{K,a}_t(u) = \int_0^t K(u-s)\mathrm{d}N^a_s\text{ and }\theta^{K,b}_t(u) = \int_0^t K(u-s)\mathrm{d}N^b_s.
	$$
Note\footnote{To define $\theta^{K,a}_t$ and $\theta^{K,b}_t$ we consider that $K$ is extended to $\mathbb{R}$ with value $0$ on $\mathbb{R}_-^*$.} that $\theta^{K,a}_t$ and $\theta^{K,b}_t$ are random functions from $\mathbb{R}_+$ into $\mathbb{R}_+$ and that the process $(t, X^K_t)_{t\geq 0}$ is Markovian. Studying the HJB equation associated with this representation, we prove in Section \ref{sec:solve_problem} that the stochastic control problem \eqref{eq:optimal_control_intro} admits a solution of the form $\delta^{K,*}_t = \delta^K(t, X^K_t)$, where $\delta^K$ is a feedback control function.\\

The HJB equation associated with \eqref{eq:optimal_control_intro} and the representation $(t, X^K_t)_{t\geq 0}$ is defined on a subset of an infinite dimensional vector space. So we cannot rely on classical numerical methods to approximate $\delta^K$. To tackle this issue we propose the following strategy.
\begin{itemize}
\item[1.]We show that if $(K_n)_{n\geq 0}$ converges towards $K$ in $L^1$ and uniformly on $[0, T]$ then $(\delta^{K_n,*})_{n\geq 0}$ converges almost surely towards $\delta^{K,*}$.
\item[2.]We show that when $K(t)= \sum_{i = 1}^n \alpha_i e^{-\gamma_i t } $, there exists a Markovian representation of the model in dimension $2n + 2$. Therefore in this case the feedback control $\delta^{K}$ can be approximated numerically.
\item[3.]Inspired by \cite{jaber2018lifting}, we prove that for any completely monotone kernel $K$ in $L^1$, we can find a sequence $(K_n)_{n\geq 0}$, converging towards $K$ in $L^1$ and uniformly on $[0, T]$, such that for any $n$, $K_n$ is a linear combination of decreasing exponential functions.
\end{itemize}
Those three points show that if $K_n$ is close from $K$ then $\delta^{K_n,*}$, which is related to a finite dimensional HJB equation, is good approximation of $\delta^{K,*}$. However when $n$ is large we cannot rely on finite differences methods to compute the feedback control $\delta^{K_n}$ since the dimension of the associated HJB equation is too large. Hence for numerical experiments we use the probabilistic representation of semi-linear partial differential equations (PDEs for short) introduced in \cite{henry2019branching}. \\

The paper is organized as follows. In Section \ref{sec:solve_problem} we prove existence of a solution to Problem \eqref{eq:optimal_control_intro} based on the study of its associated HJB equation. In Section \ref{sec:approach_optimal_control} we explain how to approximate the optimal control obtained in Section \ref{sec:solve_problem}. Finally in Section \ref{sec:numerical}, we present some numerical experiments. Main proofs are relegated to Section \ref{sec:proofs}.

\section{Solving the market maker problem}
\label{sec:solve_problem}
In this section we prove existence of a solution to Problem \eqref{eq:optimal_control_intro}. First we define an appropriate domain for the process $X^K$. Then we show that the HJB equation associated to \eqref{eq:optimal_control_intro} has a unique viscosity solution with polynomial growth on this domain. Finally we prove existence and characterize an optimal control solving \eqref{eq:optimal_control_intro}.

\subsection{Appropriate set for the process $X^K$}
\label{subsec:state_space}

To study uniqueness of solution to a PDE in the sense of viscosity, it is convenient to work with locally compact domain. We have $X^K = (P, i, \theta^{K,a}, \theta^{K,b})\in \mathbb{R} \times \mathbb{Z}\times L^1 \times L^1$, but $L^1$ is not locally compact. Hence we need to make precise the set in which the processes $\theta^{K,a}$ and $\theta^{K,b}$ belong. Obviously for $j=a$ or $b$ we have
$$
\theta^{K,j}_t \in \Theta_t^K = \{ \sum_{i = 1}^nK(\cdot - T_i), ~ n\in \mathbb{N}, ~ T_1\leq \dots \leq T_n\leq t  \}\subset \Theta^K_T.
$$
We naturally endow $\Theta^K_T$ with the $L^1$ topology and prove in Appendix \ref{proof:lemma_topology_a} that it has the following topological properties.
\begin{lemma}
	\label{lemma:topology_a}~
	\begin{itemize}
		\item[(i)]The set $\Theta^K_T$ is a locally compact closed subset of $L^1$.
		\item[(ii)]For any sequence $(s_n, \theta_n)_{n\geq 0}$ with values in $[0, T]\times \Theta^K_T$ such that for any $n$, $\theta_n \in \Theta^K_{s_n}$, if $(s_n, \theta_n)_{n\geq 0}$ converges towards $(s, \theta)$ then we have $\theta \in \Theta_s^K$ and $\theta_n(s_n) \rightarrow \theta(s)$ when $n\rightarrow + \infty$.
		\item[(iii)]Moreover if $K$ is a sum of exponential functions then we have for any $l\geq 0,$ 
		$$
		\theta_n^{(l)}(T) \rightarrow \theta^{(l)}(T)\text{, when }n\rightarrow + \infty.
		$$
	\end{itemize}
	\end{lemma}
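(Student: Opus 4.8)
The plan is to exploit the structure of elements of $\Theta_T^K$ as finite sums $\sum_{i=1}^n K(\cdot - T_i)$ with ordered atoms $0 \le T_1 \le \dots \le T_n \le T$, together with the fact that $K$ is completely monotone and in $L^1$, hence bounded, continuous on $(0,\infty)$, and decreasing on $\mathbb{R}_+^*$. The key quantitative observation for (i) is that $\|K(\cdot - T_i)\|_{L^1} = \|K\|_{L^1}$ is independent of $i$, but an individual summand cannot be made to "escape to infinity" inside $[0,T]$; more importantly, a bound on the $L^1$ norm of $\theta = \sum_{i=1}^n K(\cdot - T_i)$ forces a bound on $n$, since evaluating near each atom or integrating against a suitable test function shows $\|\theta\|_{L^1}$ grows at least linearly in $n$ (using $K \ge 0$ and $K \not\equiv 0$). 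Thus bounded subsets of $\Theta_T^K$ involve only boundedly many atoms, and then compactness reduces to compactness of the simplex $\{0 \le T_1 \le \dots \le T_n \le T\}$ together with continuity of $(T_1,\dots,T_n) \mapsto \sum_i K(\cdot - T_i)$ from this simplex into $L^1$. That continuity is where complete monotonicity helps: $K$ is continuous on $(0,\infty)$ and the only possible discontinuity is at $0$, and translation is $L^1$-continuous for $L^1$ functions, so $T_i \mapsto K(\cdot - T_i)$ is $L^1$-continuous. For local compactness and closedness one then shows that a limit in $L^1$ of such sums is again such a sum with at most the same number of atoms: passing to a subsequence, the atom vectors converge in the compact simplex, and one identifies the limit.

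For (ii), given $(s_n,\theta_n) \to (s,\theta)$ with $\theta_n \in \Theta_{s_n}^K$, I would first use part (i) — applied with horizon $T$ — to get that the number of atoms is bounded along the sequence (since $L^1$-convergent sequences are bounded), then pass to a subsequence along which the number of atoms is constant, say $n$, and the ordered atom vectors $(T_1^m,\dots,T_n^m)$ converge to some $(T_1,\dots,T_n)$ with $T_j \le s$ (each $T_j^m \le s_m \to s$). Identifying the $L^1$-limit gives $\theta = \sum_j K(\cdot - T_j) \in \Theta_s^K$. For the pointwise convergence $\theta_n(s_n) \to \theta(s)$, I would write $\theta_n(s_n) = \sum_j K(s_n - T_j^m)$ and $\theta(s) = \sum_j K(s - T_j)$ and use continuity of $K$; the only subtlety is when some $T_j = s$, i.e. evaluation at the boundary where $K$ may jump. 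Here the convention $K \equiv 0$ on $\mathbb{R}_-^*$ and one-sided continuity/monotonicity of $K$ at $0$, combined with the ordering constraint $T_j^m \le s_m$, must be used to control $K(s_n - T_j^m)$; I expect one needs that $K$ has a limit at $0^+$ (finite since $K \in L^1$ is completely monotone, hence bounded) and to argue that the total mass carried by atoms accumulating at $s$ is accounted for consistently. This boundary case is the main obstacle of the lemma: away from it everything is routine continuity, but at $u = s$ one is evaluating a sum of translates of $K$ exactly at the edge of its support.

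For (iii), when $K(t) = \sum_{k=1}^p \alpha_k e^{-\gamma_k t}$ is a finite sum of exponentials, each $\theta_n = \sum_i K(\cdot - T_i^n)$ is, on the region $u \ge T_i^n$ for all $i$ — in particular in a neighborhood of $T$ once the atoms are bounded below $T$ — a finite combination of exponentials $e^{-\gamma_k u}$ with coefficients $c_k^n = \sum_i \alpha_k e^{\gamma_k T_i^n}$. By parts (i)–(ii) the atoms converge (along subsequences, but the limit is unique by (ii)), so these finitely many coefficients converge, $c_k^n \to c_k$, and hence $\theta_n$ converges to $\theta$ in $C^\infty$ on any compact subset of $(\max_i T_i, \infty) \ni T$ — provided no atom equals $T$ in the limit, which one handles as in (ii) or by noting that if an atom converges to $T$ the corresponding exponential contributions are still continuous from the relevant side. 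Then $\theta_n^{(l)}(T) = \sum_k c_k^n (-\gamma_k)^l e^{-\gamma_k T} \to \sum_k c_k (-\gamma_k)^l e^{-\gamma_k T} = \theta^{(l)}(T)$ for every $l \ge 0$. The work here is essentially bookkeeping once (i) and (ii) are in hand; the delicate point remains, as before, the treatment of atoms landing exactly at $T$, which is why I would prove (ii) carefully and then reuse its argument.
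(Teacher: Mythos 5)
Your plan follows essentially the same route as the paper's proof: the $L^1$ norm of $\sum_{i}K(\cdot-T_i)$ pins down the number of atoms (so it is eventually constant along convergent or bounded sequences), compactness of the ordered-time simplex plus $L^1$-continuity of translation identifies limits and gives (i), the constraint $T^n_j\le s_n$ together with continuity of $K$ on $[0,\infty)$ gives (ii), and the explicit exponential formula for the derivatives gives (iii). Your treatment of the boundary case (atoms accumulating at $s$, using $s_n-T^n_j\ge 0$ and finiteness/right-continuity of $K$ at $0$) is in fact spelled out more carefully than in the paper, which simply invokes continuity of $K$; so the proposal is correct and matches the paper's argument.
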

Points $(ii)$ and $(iii)$ are purely technical and are used in Section \ref{sec:approach_optimal_control}. We now define a locally compact domain for the process $X^K$. More precisely for any $t\in [0, T]$ we consider
	\begin{equation*}
	\mathcal{Z}^K_t = \{(i, \theta^a, \theta^b) \in  \mathbb{Z} \times \Theta^K_t \times \Theta^K_t \} \text{ and }	\mathcal{X}^K_t = \{(p, i, \theta^a, \theta^b) \text{ s.t. }  p\in \mathbb{R}\text{ and } (i, \theta^a, \theta^b)\in \mathcal{Z}_t^K \} .
	\end{equation*}
According to Lemma \ref{lemma:topology_a} $(i)$ the set $\mathcal{Z}^K_t$ (resp. $\mathcal{X}^K_t$) is a locally compact closed subset of $\mathbb{Z}\times L^1 \times L^1$ (resp. $\mathbb{R} \times \mathbb{Z}\times L^1 \times L^1$). We also define
	\begin{equation*}
	\mathcal{E}^K = \{(t, x)\in [0, T]\times\mathcal{X}^K_T\text{ s.t } x\in \mathcal{X}^K_t \},
	\end{equation*}
which is a locally compact closed subset of $[0, T]\times \mathbb{R} \times \mathbb{Z} \times L^1 \times L^1$. Obviously for any $t\in [0, T]$ we have $(t, X^K_t)\in \mathcal{E}^K$. Hence $\mathcal{E}^K$ is the appropriate domain to use the theory of viscosity solutions.\\

Before going to the next section we give additional definitions that we use later on. For $ x = (p, i,\theta^a,\theta^b) \in \mathbb{R} \times \mathbb{Z}\times L^1 \times L^1$ we define the norm $$\|x\| = \sqrt{ p^2 + i^2 + \|\theta^a \|^2_{1} + \|\theta^b \|^2_{1} }.$$ Then for any positive $R$ the following set is a compact subset of $\mathcal{E}^K$, as consequence of Lemma \ref{lemma:topology_a} $(i)$,
$$
\mathcal{E}^K_R = \{ (t, x)\in \mathcal{E}^K\text{, s.t. } \|x\|\leq R\}.
$$
Finally in order to lighten the notations from now on when we consider $x \in \mathcal{E}^K$ (resp. $x\in \mathcal{X}^K_t$, $z\in \mathcal{Z}^K_t$) we implicitly assume that $x = (t,p, i,\theta^a, \theta^b)$ $\big($resp. $x = (p, i,\theta^a, \theta^b)$, $z = (i, \theta^a, \theta^b)$ $\big)$.\\

Now that we have defined an adapted domain for PDE analysis we derive in the next section the HJB equation related to the stochastic control problem \eqref{eq:optimal_control_intro}.

\subsection{Hamilton-Jacobi-Bellman equation associated to the control problem}
\label{subsec:HJB_equation}

We start by rewriting the stochastic control problem \eqref{eq:optimal_control_intro}. We note that up to a $\mathbb{P}^\delta$-local martingale the integrals
$$
\int_0^T \delta^a_s \mathrm{d}N^a_s\text{ and }\int_0^T \delta^b_s \mathrm{d}N^b_s.
$$
are respectively equal to $ \int_0^T \delta^a_s \lambda^{a, \delta}_s \mathrm{d}s$ and $\int_0^T \delta^b_s \lambda^{b, \delta}_s\mathrm{d}s$ . Hence as a consequence of Appendix \ref{appendix:apriori_inequalities_integral} for any $\delta \in \mathcal{A}$ we have
\begin{align*}
\mathbb{E}^{\delta}[G(i_T,P_T) e^{-rT} + &\int_0^T e^{-rs}\Big(  g(i_s, P_s)\mathrm{d}s + \delta^a_s \mathrm{d}N^a_s+ \delta^b_s\mathrm{d}N^b_s\Big)]\\
& = 	\mathbb{E}^{\delta}[G(i_T,P_T) e^{-rT} + \int_0^T e^{-rs}\Big(  g(i_s,P_s) + \delta^a_s \lambda^{a,\delta}_s + \delta^b_s\lambda^{b, \delta}_s\Big) \mathrm{d}s] .
\end{align*}
Thus \eqref{eq:optimal_control_intro} is equivalent to the stochastic control problem
\begin{equation}
\label{eq:optimal_control_pb}
\underset{\delta \in \mathcal{A}}{\sup} ~ \mathbb{E}^{\delta}[G(i_T, P_T) e^{-rT} + \int_0^T e^{-rs}\Big(  g(i_s, P_s) + \delta^a_s \lambda^{a, \delta}_s+ \delta^b_s\lambda^{b, \delta}_s\Big)\mathrm{d}s].
\end{equation}
In order to give intuition on the HJB equation related to this stochastic control problem we write the Ito formula related to $X^K$ for a fixed control $\delta\in \mathcal{A}$. We consider a function $\varphi$ defined on $[0, T]\times \mathbb{R} \times \mathbb{Z} \times L^1 \times L^1$ that is $C^{2, 2, 0, 0, 0}$. We call any function with such regularity a \textit{test function}. For any $s<t\in [0, T]$ we have
\begin{align*}
\varphi(t, X^K_t) - \varphi(s, X^K_s) = \int_s^t& \Big( \partial_t\varphi(u, X^K_{u-}) + \mathcal{L}^P\varphi(u, X^K_{u-}) + \sum_{j = a, b} D^K_j \varphi(u, X^K_{u-})e^{-\frac{k}{\sigma}\delta^j_u}\Phi\big(\theta_u^{K,j}(u)\big) \Big)\mathrm{d}u\\
& + \partial_p \varphi(u, X^K_{u-}) \sigma \mathrm{d}W_u + D^K_a\varphi(u, X^K_{u-})\mathrm{d}M^{a;\delta}_u+ D^K_b\varphi(u, X^K_{u-})\mathrm{d}M^{b;\delta}_u,
\end{align*}
where
$$
M^{a;\delta}_t = N^a_t -\int_0^t \lambda^{a,\delta}_s \mathrm{d}s \text{ and }M^{b;\delta}_t = N^b_t -\int_0^t \lambda^{b,\delta}_s \mathrm{d}s
$$
are $\mathbb{P}^{\delta}$-uniformly integrable martingales, see Appendix \ref{appendix:subsec_martingale_hawkes} for details. The operator $\mathcal{L}^P$ is the infinitesimal generator related to the diffusion of $P$ and is defined for any test function $\varphi$ and $(t, x)\in \mathcal{E}^K$ by
$$
\mathcal{L}^P\varphi(t, x) = d(t, p)\partial_p \varphi(t, x) + \frac{1}{2} \sigma^2 \partial^2_{pp}\varphi(t, x).
$$
The operators $D^K_a$ and $D^K_b$ correspond to the infinitesimal generators related to the diffusion of $N^a$ and $N^b$. They are defined for $(t, x)\in \mathcal{E}^K$ by
\begin{align*}
D^K_a \varphi(t, x) &= \varphi\big(t, p, i-1, \theta^a + K(\cdot-t), \theta^b\big) - \varphi(t, p, i, \theta^a, \theta^b),\\
D^K_b \varphi(t, x) &= \varphi\big(t, p, i+1, \theta^a, \theta^b + K(\cdot-t)\big) - \varphi(t, p, i, \theta^a, \theta^b).
\end{align*}
Hence the HJB equation associated to the control problem \eqref{eq:optimal_control_pb} is
\begin{equation*}
(\mathbf{HJB})_K:\left\{ \begin{array}{ll}
&F\big( x, U(x), \nabla U(x), \partial^2_{pp}U(x),  D^KU(x)\big) =0\text{ for }x\in \mathcal{E}^K,\\
&U(T, y) = G(i, p)\text{ for }y \in \mathcal{X}_T^K,
\end{array}\right.
\end{equation*}
with  $\nabla U = \big( \partial_t U, \partial_p U \big)$, $D^KU = \big(D^K_aU, D^K_bU\big)$ and where the function $F$ is defined for $(x, u, q, A, I)\in  \mathcal{E}^K_t \times \mathbb{R}\times \mathbb{R}^2 \times \mathbb{R} \times \mathbb{R}^2$ by
\begin{align*}
F(x, u, q, A, I) = & ru -  q_1  - d(t, p)q_2 - \frac{1}{2} \sigma^2 A - g(i, p)\\
& - \underset{\delta\in \mathbb{R}_+}{\sup} \Phi\big(\theta^a(t)\big)e^{-\frac{k}{\sigma}\delta}(\delta + I_1) - \underset{\delta \in \mathbb{R}_+}{\sup}  \Phi\big(\theta^b(t)\big)e^{-\frac{k}{\sigma}\delta}(\delta + I_2).
\end{align*}
A straightforward computation gives the maximizers
\begin{equation}
\label{eq:maximizers}
\delta^{*a} = \big(\sigma/k - I_1\big)_{+}\text{ and }\delta^{*b} = \big(\sigma/k - I_2\big)_{+}.
\end{equation}
Note that the dependence in $K$ of $(\mathbf{HJB})_K$ lies in the operator $D^K$.\\

For such general partial-integro differential equation (PIDE for short) it is a priori impossible to prove existence of a smooth solution. Therefore in the next section we look for viscosity solutions.

\subsection{Viscosity solutions: definitions}
\label{subsec:definitions}
Since we are dealing with a PIDE defined on an unusual domain and in order to make things precise we define the notion of viscosity solution in our framework. First we give the classical definition and then its counterparts based on semi jets.

\begin{definition}
\label{def:definition}~
\begin{itemize}
	\item[-]A locally bounded function $U\in USC(\mathcal{E}^K)$ (the set of upper semi-continuous functions on $\mathcal{E}^K$) is a viscosity sub-solution of $(\mathbf{HJB})_K$ if for all $x \in \mathcal{E}^K$ and test function $\phi$ such that $x$ is a maximum of $U-\phi$ we have
	\begin{equation*}
	F\big(x, \phi(x), \nabla \phi(x), \partial^2_{pp}\phi(x), D^KU(x) \big) \leq 0.
	\end{equation*}
	\item[-]A locally bounded function $U\in LSC(\mathcal{E}^K)$ (the set of lower semi-continuous functions on $\mathcal{E}^K$) is a viscosity super-solution of $(\mathbf{HJB})_K$ if for all $x \in \mathcal{E}^K$ and test function $\phi$ such that $x$  is a minimum of $U-\phi$ we have
	\begin{equation*}
	F\big( x, \phi( x), \nabla \phi( x), \partial^2_{pp}\phi(x), D^KU(x) \big) \geq 0.
	\end{equation*}
	\item[-]A continuous function $U$ defined on $\mathcal{E}^K$ is a viscosity solution of $(\mathbf{HJB})_K$ if it is a viscosity super-solution and a viscosity sub-solution.
\end{itemize}
\end{definition}

Note that in the above definition it is equivalent to consider local (or local strict) extrema. Also note that we have not replaced $U$ by $\phi$ for the last operator $D^K$. This is because $D^K U$ only requires finiteness of $U$ to be defined. Of course it is equivalent to replace $D^K U$ by $D^K \phi$ in Definition \ref{def:definition}. Indeed consider a sub-solution $U$ and a test function $\phi$ at point $x$. Since $D^K$ is a non local operator we can always build a sequence of test functions $(\phi_n)_{n\geq 0}$ satisfying $U \leq \phi_n$ with equality at point $x$ and such that
	$$
	(\nabla \phi_n(x) , \partial^2_{pp}\phi_n(x) ) = (\nabla \phi(x) , \partial^2_{pp}\phi(x) ) \text{ with } D^K\phi_n(x) \underset{n\rightarrow + \infty}{\rightarrow} D^KU(x).
	$$
By continuity of $F$ we get the equivalence. This also holds for super-solution.\\

We now introduce the notions of semi super and sub-jets in our framework. For $U$ in $USC(\mathcal{E}^K)$ and $x=(t, p, z) \in \mathcal{E}^K$, the super-jet of $U$ at point $x$ is the set
	\begin{align*}
	\mathcal{J}^{+}U(x) = &\{ (g, A, h )\in \mathbb{R}^2  \times\mathbb{R}  \times C^{0}(\mathcal{Z}^K_T),\text{ s.t. for any } y=(s, q, v) \in \mathcal{E}^K \text{ we have }\\
	& ~U(s, y) \leq U(t, x) +   g_1 (t -s)+ g_2(p-q) + \frac{1}{2}A(p-q)^2 + h(z-v)  +o(|t-s| + |p-q|^2)\\
	&~ \text{and } h(0) = 0   \}
	\end{align*}
	and the semi super-jet of $U$ at point $x$ is
	\begin{align*}
	\overline{\mathcal{J}}^{+}U(x) = &\{(g, A, h)\in \mathbb{R}^2 \times\mathbb{R} \times C^{0}(\mathcal{Z}^K_T)\text{ s.t. there exists a sequence}\\
	&~(x_n,  g_n, A_n, h_n)_{n\geq 0} \text{ with for any }n\geq0 ~ (g_n, A_n, h_n) \in \mathcal{J}^{+}U(x_n)\\
	&~ \text{ and such that} \big( x_n, U(x_n), g_n, A_n, h_n\big) \underset{n \rightarrow +\infty}{\rightarrow} \big(x, U( x), g, A, h\big) \}.
	\end{align*}
	In the above definition the convergence of $h_n$ is taken in the sense of locally uniform convergence at point $0$. By analogy we define the sub-jet $\mathcal{J}^{-}U(x)$ and the semi sub-jet $\overline{\mathcal{J}}^{-}U(x)$ for $U$ in $LSC(\mathcal{E}^K)$.\\
	
	 We can now give another characterization of viscosity sub and super-solutions relying on the notions of semi jets.
	\begin{definition}
	\label{def:equivalent_definition}~
	\begin{itemize}
		\item[-]A locally bounded function $U\in USC(\mathcal{E}^K)$ is a viscosity sub-solution of $(\mathbf{HJB})_K$ if for all $ x \in\mathcal{E}^K$, and $ (g, A, h) \in \overline{\mathcal{J}}^{+}U(x)$  we have
		\begin{equation*}
		F\big( x, U( x), g, A, D^KU(x) \big) \leq 0.
		\end{equation*}
		\item[-]A locally bounded function $U\in LSC(\mathcal{E}^K)$ is a viscosity super-solution of $(\mathbf{HJB})_K$ if for all $x \in\mathcal{E}^K$, and $ (g, A, h) \in \overline{\mathcal{J}}^{-}U(x)$  we have
		\begin{equation*}
		F\big( x, U(x), g, A, D^KU(x) \big)\geq 0.
		\end{equation*}
	\end{itemize}
	\end{definition}
	We show in Appendix \ref{appendix:section:equivalent_definition} that Definition \ref{def:definition} and \ref{def:equivalent_definition} are equivalent.\\
	
	In the next section based on the study of $(\mathbf{HJB})_K$ we prove that the control problem \eqref{eq:optimal_control_intro} admits a solution.

\subsection{Existence of an optimal control}
\label{subsec:existence_uniqueness}

In this section we prove existence of a solution to Problem \eqref{eq:optimal_control_pb}, which is equivalent to \eqref{eq:optimal_control_intro}. Before stating the result we give a sketch of the proof.\\

We start by proving uniqueness of a viscosity solution with polynomial growth to $(\mathbf{HJB})_K$ using a comparison result. The main difficulty is to adapt the Crandall-Ishi's lemma to our framework, which is done in Appendix \ref{appendix:crandall}. Using a verification argument we then check that the continuation utility function $U^K$ associated to Problem \eqref{eq:optimal_control_pb} is actually this unique solution. The maximizers of the Hamiltonian given in Equation \eqref{eq:maximizers} then naturally provide a control solving \eqref{eq:optimal_control_pb} and therefore Problem \eqref{eq:optimal_control_intro}. Full proof is given in Section \ref{proof:optimal_control}.
\begin{theorem}
\label{th:optimal_control}~
\begin{itemize}
	\item[(i)]There exists a unique viscosity solution $U^K$ with polynomial growth to $(\mathbf{HJB})_K$.
	\item[(ii)]This solution satisfies:
	$$
	U^K(0) = \underset{\delta \in \mathcal{A} }{\sup} 	~\mathbb{E}^{\delta}[G(i_T, P_T) e^{-rT} + \int_0^T e^{-rs}\Big(  g(i_s, P_s) + \delta^a_s \lambda^{a, \delta}_s+ \delta^b_s\lambda^{b, \delta}_s\Big)\mathrm{d}s].
	$$
	\item[(iii)]Problem \eqref{eq:optimal_control_intro} admits a solution given by 
	$$
	\delta^{K,*}_t = \delta^{K}(t, X^K_t), \text{ with }\delta^K = (\delta^K_a, \delta^K_b),
	$$
	where
	\begin{equation}
	\label{eq:feedback_control}
	\delta_a^K = \big(\sigma/k - D^K_aU^K\big)_{+} \text{ and }\delta_b^K = \big(\sigma/k - D^K_bU^K\big)_{+}.
	\end{equation}
\end{itemize}
\end{theorem}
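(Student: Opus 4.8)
The plan is to combine a PDE comparison argument with a stochastic verification theorem. First I would establish part (i), the uniqueness of a viscosity solution with polynomial growth to $(\mathbf{HJB})_K$, by proving a comparison principle: if $U$ is a sub-solution and $V$ a super-solution, both with polynomial growth, then $U\leq V$ on $\mathcal{E}^K$. The standard route is doubling of variables. For $\varepsilon,\eta>0$ one considers the auxiliary functional
\begin{equation*}
\Psi_{\varepsilon,\eta}(x,y)=U(x)-V(y)-\tfrac{1}{2\varepsilon}\big(|t-s|^2+|p-q|^2\big)-\tfrac{1}{2\varepsilon}\|z-v\|_1^2-\eta\,\omega(x)-\eta\,\omega(y),
\end{equation*}
where $\omega$ is a suitable polynomial penalization (e.g.\ $\omega(x)=1+\|x\|^2$ adjusted by the growth exponent) forcing the supremum to be attained on a compact set $\mathcal{E}^K_R\times\mathcal{E}^K_R$ — this is where Lemma \ref{lemma:topology_a}(i), local compactness of $\Theta^K_T$, is essential. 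One then feeds the maximum point to the semi-jet characterization of Definition \ref{def:equivalent_definition} and applies the version of the Crandall--Ishii lemma adapted in Appendix \ref{appendix:crandall} to produce matrices/jets for $U$ and $V$ with the right ordering. The non-local term $D^K$ is handled by passing to the limit in the penalization: because $D^K_jU(x_\varepsilon)-D^K_jV(y_\varepsilon)$ involves $U,V$ evaluated at shifted points whose difference is controlled by the penalty $\|z-v\|_1$, one shows $\limsup\big(D^K_jU(x_\varepsilon)-D^K_jV(y_\varepsilon)\big)\leq 0$ along the maximizing sequence. Using the structural monotonicity of $F$ in $(u,I)$ — $F$ is non-decreasing in $u$ (coefficient $r>0$) and non-increasing in each $I_j$ after taking the supremum, since $\delta\mapsto e^{-\frac{k}{\sigma}\delta}(\delta+I_j)$ is increasing in $I_j$ — together with the Lipschitz dependence of $d$ on $p$ to absorb the $\tfrac1\varepsilon|p-q|^2$ terms, one derives a contradiction unless $U\leq V$. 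Uniqueness and continuity of $U^K$ follow; existence of at least one solution will come from part (ii).

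For part (ii) I would run a verification argument. Define the value function $U^K(t,x)$ as the supremum in \eqref{eq:optimal_control_pb} started from $(t,X^K_t)=x$. Standard dynamic-programming arguments (using that $(u,X^K_u)_{u\geq t}$ is Markov, as noted in the introduction) show the dynamic programming principle holds, and then that $U^K$ is a viscosity solution of $(\mathbf{HJB})_K$: the sub-solution property from the DPP with a fixed constant control on a short interval plus the Itô formula displayed before $(\mathbf{HJB})_K$, the super-solution property from the DPP with $\varepsilon$-optimal controls. The polynomial (in fact quadratic) growth of $U^K$ follows from the at-most-quadratic growth of $g,G$ and a priori moment bounds on $(P,i)$ under any admissible $\delta$ — these come from Appendix \ref{appendix:apriori_inequalities_integral} and the fact that $\lambda^{j,\delta}\leq\lambda^{j,0}$ with $\Phi$ continuous; one needs uniform-in-$\delta$ control of $\mathbb{E}^\delta[\sup_{s\leq T}(P_s^2+i_s^2)]$, which the exponential damping of intensities by the spread makes tractable. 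By part (i), $U^K$ coincides with the unique solution, and evaluating at $(0,0,0,0)$ gives the stated identity.

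For part (iii), the maximizers of the Hamiltonian in \eqref{eq:maximizers}, applied to $I=D^KU^K(x)$, give the candidate feedback $\delta^K$ in \eqref{eq:feedback_control}. One must check this feedback is admissible — i.e.\ the closed-loop SDE/point-process system has a solution and $\delta^{K,*}_t=\delta^K(t,X^K_t)$ is predictable; predictability holds since $U^K$ is continuous and $X^K$ is càdlàg adapted, and $D^K_jU^K$ is built from evaluations of $U^K$, so $\delta^K$ is a measurable function of $(t,X^K_{t-})$. Then a verification computation: apply the Itô formula to $e^{-rs}U^K(s,X^K_s)$ under $\mathbb{P}^{\delta^{K,*}}$, use that $U^K$ solves $(\mathbf{HJB})_K$ with equality and that $\delta^K$ attains the supremum in $F$, and take expectations (the local-martingale terms being true martingales by the uniform integrability of $M^{a;\delta},M^{b;\delta}$ from Appendix \ref{appendix:subsec_martingale_hawkes} and the growth bounds) to conclude that the value \eqref{eq:optimal_control_pb} equals $U^K(0)$, so $\delta^{K,*}$ is optimal; since \eqref{eq:optimal_control_pb} and \eqref{eq:optimal_control_intro} are equivalent, $\delta^{K,*}$ solves \eqref{eq:optimal_control_intro}.

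The main obstacle is the comparison principle in part (i): the combination of an infinite-dimensional, merely locally compact state component ($\theta^{K,a},\theta^{K,b}$ living in $\Theta^K_T\subset L^1$), a non-local operator $D^K$ that shifts the state by $K(\cdot-t)$, and a time-dependent domain $\mathcal{E}^K$ whose cross-sections $\mathcal{X}^K_t$ grow with $t$ means the usual Crandall--Ishii machinery must be rebuilt carefully — this is exactly what Appendix \ref{appendix:crandall} is for, and controlling the non-local term along the doubled-variable maximizing sequence while keeping the penalization compatible with polynomial growth is the delicate point. Everything else is, modulo technical care with moment estimates and admissibility, a routine application of the verification paradigm.
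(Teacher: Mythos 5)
Your treatment of parts (i) and (ii) follows essentially the paper's route: comparison by doubling of variables combined with the adapted Crandall--Ishii lemma of Appendix \ref{appendix:crandall}, then definition of the value function, a weak dynamic programming principle, and a viscosity verification, with the a priori quadratic bounds of Appendix \ref{appendix:apriori_inequalities} giving polynomial growth. Two small deviations: the paper proves comparison first for bounded semi-solutions and then reduces the polynomial-growth case to it by perturbing with the strict super/sub-solution $\varepsilon w$, $w(t,x)=e^{M(T-t)}(1+\|x\|^{2k})$, rather than building the polynomial weight into the doubled functional as you do (a cosmetic difference), and you have the two DPP inequalities swapped: the first inequality with an arbitrary constant control yields the \emph{super}-solution property of $U^K_*$, while the contradiction argument based on the second inequality yields the \emph{sub}-solution property of $U^{K*}$, not the other way around.

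The genuine gap is in part (iii). You propose to apply the It\^o formula to $e^{-rs}U^K(s,X^K_s)$ under $\mathbb{P}^{\delta^{K,*}}$ and to use that $U^K$ solves $(\mathbf{HJB})_K$ ``with equality''. This requires $U^K$ to be $C^{1,2}$ in $(t,p)$, which is precisely what is unavailable: $U^K$ is only known to be a continuous viscosity solution, and the paper stresses that for this PIDE one cannot expect a classical solution. The paper's argument avoids any smoothness of $U^K$: it shows that the closed-loop payoff $J^K(\cdot;\delta^{K,*})$ is, by the same DPP/verification machinery run with the control frozen to the feedback, the unique polynomial-growth viscosity solution of the \emph{linear} equation $(\mathbf{LHJB})_K$, and that $U^K$ is also a viscosity solution of $(\mathbf{LHJB})_K$ because $\delta^K$ attains the supremum in the Hamiltonian pointwise; uniqueness for the linear equation then gives $J^K(\cdot;\delta^{K,*})=U^K$, i.e.\ optimality of the feedback. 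To make your step rigorous you would need either to reproduce this comparison-with-the-linear-PDE argument or to justify an approximation/mollification of $U^K$ compatible with the infinite-dimensional domain $\mathcal{E}^K$, which is far from routine; as written, the It\^o computation on $U^K$ does not go through.
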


It is important to note that in order to obtain existence of an admissible optimal control we have benefited from the fact that we are dealing with counting processes, whose infinitesimal generators are defined for any finite functions independently of their regularity. From a practical point of view Theorem \ref{th:optimal_control} implies that if we manage to compute $U^K$ we can implement the optimal control $\delta^{K,*}$ by monitoring the processes $\theta^a$ and $\theta^b$. Note that this is equivalent to monitor the list of arrival times of buy and sell market orders. However $\mathcal{E}^K$ is a subset of an infinite dimensional vector space. So we cannot compute $U^K$ using classic numerical methods. Therefore we need to find another way to approximate the control $\delta^{K,*}$. We deal with this issue in the next section.

\section{How to approximate the optimal control}
\label{sec:approach_optimal_control}

In this section we explain how to approximate numerically the feedback control $\delta^{K,*}$. We proceed in three steps: 
\begin{itemize}
\item[1.]We show that if $(K_n)_{n\geq 0}$ converges towards $K$ in $L^1$ and uniformly on $[0, T]$ then $(\delta^{K_n,*})_{n\geq 0}$ converges almost surely towards $\delta^{K,*}$.
\item[2.]We prove that when $K(t)= \overset{n}{\underset{i=1}{\sum}} \alpha_i e^{-\gamma_i t } $ there exists a Markovian representation of the model in dimension $2n + 2$.
\item[3.]Inspired by \cite{jaber2018lifting}, we show that for any completely monotone function $K$ in $L^1$ we can find a sequence $(K_n)_{n\geq 0}$ converging towards $K$ in $L^1$ and uniformly on $[0, T]$ such that for any $n$, $K_n$ is a linear combination of $n$ decreasing exponential functions.
\end{itemize}

Those three points give a simple method to compute an approximate version of the control $\delta^{K,*}$: choose $\tilde{K}$,  a sum of decreasing exponential functions, close enough to $K$. Use the finite dimensional representation to compute $U^{\tilde{K}}$ and implement $\delta^{\tilde{K}, *}$ instead of $\delta^{K,*}$.

\subsection{Convergence of solutions and optimal controls}
\label{subsec:convergence_control}
Consider a completely monotone function $K$ in $L^1$. We show that if a sequence of continuous $L^1$ functions $(K_n)_{n\geq 0}$ converges towards $K$ in $L^1$ and uniformly on $[0, T]$ then the sequence $(\delta^{K_n,*})_{n\geq 0}$ converges almost surely towards $\delta^{K,*}$.\\

From Theorem 5.8 in \cite{touzi2012optimal} we observe that the notion of viscosity solution is perfectly adapted to prove the convergence of solutions to a sequence of PIDEs. Hence we prove in Section \ref{proof:convergence_solution} the following result which is an extension of Theorem 5.8 in \cite{touzi2012optimal} to our framework.
\begin{property}
\label{prop:convergence_viscosity_solution}
Consider a sequence $(K_n)_{n\geq 0}$ of continuous $L^1$ functions converging towards a completely monotone function $K$ in $L^1$ and uniformly on $[0, T]$, then for any $x\in \mathcal{E}^K$ we have 
	\begin{equation}
	\label{eq:limit_utility}
 	U^K(x) ~ = \underset{(y, n)\in \bar{\mathcal{E}}   \rightarrow (x, +\infty) \in \bar{\mathcal{E}} }{\lim}  ~ U^{K_n}(y)	
	\end{equation}
	where
	$$
	\bar{\mathcal{E}} = \big(  \bigcup_{n \geq 0} \mathcal{E}^{K_n}\times \{n\} \big) \cup \big( \mathcal{E}^{K}\times \{+\infty\} \big).
	$$
\end{property}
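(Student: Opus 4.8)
The plan is to run the Barles--Perthame method of semi-relaxed limits, the twist being that here the state domains $\mathcal{E}^{K_n}$ move with $n$. I would work on $\mathcal{E}^K$ with the relaxed upper and lower limits
$$
\overline{U}(x) = \limsup_{(y,n)\to(x,+\infty)} U^{K_n}(y), \qquad \underline{U}(x) = \liminf_{(y,n)\to(x,+\infty)} U^{K_n}(y),
$$
the limits being taken over $(y,n)\in\bar{\mathcal{E}}$, so that $\underline{U}\le\overline{U}$ by construction. The target is to show that $\overline{U}$ is a viscosity sub-solution and $\underline{U}$ a viscosity super-solution of $(\mathbf{HJB})_K$, both with polynomial growth; the comparison principle underlying Theorem \ref{th:optimal_control}$(i)$ then forces $\overline{U}\le U^K\le\underline{U}$, hence $\overline{U}=\underline{U}=U^K$, and since the lifted points below show $(x,+\infty)$ is a genuine limit point of $\bar{\mathcal{E}}$, this is exactly \eqref{eq:limit_utility}.

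Before that I would establish two preliminary facts. First, a growth bound uniform in $n$: $|U^{K_n}(x)|\le C(1+\|x\|^m)$ with $C,m$ independent of $n$, together with a near-terminal estimate of the same uniform type. These come from the a priori estimates used to prove Theorem \ref{th:optimal_control}: $g,G$ have at most quadratic growth, and since $K_n\to K$ in $L^1$ and uniformly on $[0,T]$, the base intensities $\Phi(\theta^{K_n,j})$ and the compensators are controlled uniformly in $n$, so the martingale estimates of the appendix apply with constants not depending on $n$. This makes $\overline{U},\underline{U}$ finite with polynomial growth, $\overline{U}\in USC(\mathcal{E}^K)$ and $\underline{U}\in LSC(\mathcal{E}^K)$. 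Second, a compatibility statement for the moving domains: writing $x=(t,p,i,\theta^a,\theta^b)\in\mathcal{E}^K$ with $\theta^j=\sum_k K(\cdot-T^j_k)$, $T^j_k\le t$, the lifted points $x_n=(t,p,i,\sum_k K_n(\cdot-T^a_k),\sum_k K_n(\cdot-T^b_k))$ lie in $\mathcal{E}^{K_n}$ and converge to $x$ in $[0,T]\times\mathbb{R}\times\mathbb{Z}\times L^1\times L^1$ (as $K_n\to K$ in $L^1$); and, extending Lemma \ref{lemma:topology_a}$(ii)$ with the help of the uniform convergence $K_n\to K$ on $[0,T]$, if $y_n=(s_n,q_n,j_n,\vartheta^a_n,\vartheta^b_n)\in\mathcal{E}^{K_n}$ converges to $y=(s,q,j,\vartheta^a,\vartheta^b)$ then $\vartheta^j_n(s_n)\to\vartheta^j(s)$ and the shifted states $\vartheta^j_n+K_n(\cdot-s_n)\in\Theta^{K_n}_{s_n}$ converge in $L^1$ to $\vartheta^j+K(\cdot-s)$ (using continuity of translation in $L^1$). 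The first point gives the density of the domains, the second the stability of the non-local and pointwise terms in $F$.

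Then I would prove $\overline{U}$ is a sub-solution (the super-solution property of $\underline{U}$ being symmetric). Fix $x$ and a test function $\phi$ with $\overline{U}-\phi$ having a strict local maximum at $x$, $\overline{U}(x)=\phi(x)$. Using a compact neighbourhood of $x$ in $\mathcal{E}^K$ (Lemma \ref{lemma:topology_a}$(i)$), the compactness of $\mathcal{E}^{K_n}\cap\{\|\cdot\|\le R\}$, and the density of the lifted points, the classical relaxed-limit lemma (see Theorem 5.8 in \cite{touzi2012optimal}) produces $x_n\in\mathcal{E}^{K_n}$, local maxima of $U^{K_n}-\phi$ on $\mathcal{E}^{K_n}$, with $x_n\to x$ and $U^{K_n}(x_n)\to\overline{U}(x)$. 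Since $U^{K_n}$ solves $(\mathbf{HJB})_{K_n}$,
$$
F\big(x_n,U^{K_n}(x_n),\nabla\phi(x_n),\partial^2_{pp}\phi(x_n),D^{K_n}U^{K_n}(x_n)\big)\le 0,
$$
where $F$ does not depend on the kernel (only its argument $D^{K_n}U^{K_n}$ and the domain do). By smoothness of $\phi$, continuity of $d$ and $\Phi$, and $\theta^j_n(t_n)\to\theta^j(t)$, all the local entries of $F$ converge to their values at $x$. For the non-local entry, $D^{K_n}_aU^{K_n}(x_n)=U^{K_n}\big(t_n,p_n,i_n-1,\theta^a_n+K_n(\cdot-t_n),\theta^b_n\big)-U^{K_n}(x_n)$: the second term tends to $\overline{U}(x)$, while the shifted state converges in $\bar{\mathcal{E}}$ to $(t,p,i-1,\theta^a+K(\cdot-t),\theta^b)$, so $\limsup_n U^{K_n}(\text{shifted state})\le\overline{U}(t,p,i-1,\theta^a+K(\cdot-t),\theta^b)$; hence $\limsup_n D^{K_n}U^{K_n}(x_n)\le D^K\overline{U}(x)$ componentwise. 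Since $F$ is non-increasing in its last argument ($\Phi\ge0$ and $\delta\mapsto e^{-\frac{k}{\sigma}\delta}(\delta+I_j)$ is non-decreasing in $I_j$, so the suprema in $F$ are), for each $\varepsilon>0$ and $n$ large $D^{K_n}U^{K_n}(x_n)\le D^K\overline{U}(x)+\varepsilon$, whence $F\big(x_n,U^{K_n}(x_n),\nabla\phi(x_n),\partial^2_{pp}\phi(x_n),D^K\overline{U}(x)+\varepsilon\big)\le 0$; letting $n\to+\infty$ and then $\varepsilon\to0$ and using continuity of $F$ yields $F\big(x,\overline{U}(x),\nabla\phi(x),\partial^2_{pp}\phi(x),D^K\overline{U}(x)\big)\le 0$. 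The terminal inequality $\overline{U}(T,\cdot)\le G$ follows from $U^{K_n}(T,\cdot)=G$, continuity of $G$ and the uniform near-$T$ estimate.

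Putting this together: the symmetric argument gives that $\underline{U}$ is a polynomial-growth super-solution of $(\mathbf{HJB})_K$, and the comparison result from the proof of Theorem \ref{th:optimal_control}$(i)$ gives $\overline{U}\le U^K\le\underline{U}$, hence equality and \eqref{eq:limit_utility}. The main obstacle is entirely in the preliminary facts: obtaining growth and near-terminal bounds that are \emph{uniform in} $n$, and—more delicate—proving the compatibility of the moving domains $\mathcal{E}^{K_n}$, i.e.\ that simultaneous convergence of a state and of the kernel preserves both the pointwise evaluations $\theta^j_n(s_n)\to\theta^j(s)$ and the non-local shifts. This is exactly where the hypothesis ``$K_n\to K$ uniformly on $[0,T]$'' (rather than only in $L^1$) is needed, and it is the step that genuinely extends Lemma \ref{lemma:topology_a} and Theorem 5.8 of \cite{touzi2012optimal} to the present framework.
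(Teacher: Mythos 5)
Your proposal follows essentially the same route as the paper's proof: Barles--Perthame relaxed semi-limits $\overline{U},\underline{U}$ over the moving domains, the sub/super-solution property obtained from extremizers of $U^{K_n}-\phi$ on $\mathcal{E}^{K_n}$ converging (via the compactness of $\Theta^{K_n}$-type sets, $L^1$ and uniform convergence of $K_n$, and convergence of the pointwise evaluations $\theta^{n,j}(t_n)\to\theta^j(t)$ and of the shifted states) to a point of $\mathcal{E}^K$, monotonicity of $F$ in the non-local argument, uniform-in-$n$ polynomial growth from $\|K_n\|_1\to\|K\|_1$, and conclusion by the comparison principle of Proposition \ref{prop:comparison_polynomial}. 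The only (harmless) difference is that you make the terminal ordering $\overline{U}(T,\cdot)\le G\le\underline{U}(T,\cdot)$ explicit, a point the paper leaves implicit.
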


The main technical difficulty in the proof of Proposition \ref{prop:convergence_viscosity_solution}, compared to Theorem 5.8 in \cite{touzi2012optimal}, is that the functions $(U^{K_n})_{n\geq 0}$ are defined on different domains.\\

We now consider a fixed sequence $(K_n)_{n\geq 0}$ of continuous $L^1$ functions converging towards $K$ in $L^1$ and uniformly on $[0, T]$. We have that almost surely
$$
\big( (t, X^{K_n}_t), n \big) \in \bar{\mathcal{E}} \rightarrow \big( (t, X^K_t), +\infty \big)\in \bar{\mathcal{E}}  \text{ when }n\rightarrow+\infty.
$$
From now on, when we consider a similar limit result we forget to write $ \bar{\mathcal{E}}$ to lighten notations.\\

We first recall that $\delta^{K,*}_t = (\delta^{K,a,*}_t, \delta^{K,b,*}_t)$ with
\begin{eqnarray*}
	\delta^{K,a,*}_t &= D^K_aU^K(t, X^K_t) = U^K(t, X^{K,+a}_t) - U^K(t, X^{K}_t),\\
	\delta^{K,b,*}_t &= D^K_bU^K(t, X^K_t) = U^K(t, X^{K,+b}_t) - U^K(t, X^{K}_t).
\end{eqnarray*}
where 
$$
	X^{K,+a}_t = \big( P_t, i+1, \theta^{K,a}_t+K(\cdot - t), \theta^{K,b}_t \big) \text{ and }
	X^{K,+b}_t = \big( P_t, i+1, \theta^{K,a}_t, \theta^{K,b}_t+K(\cdot - t)\big).
$$
Obviously we have the following almost sure convergences
$$
\big( (t, X^{K_n,+a}_t), n \big) \underset{n\rightarrow +\infty}{\rightarrow} \big( (t, X^{K,+a}_t), +\infty \big)  \text{ and } \big( (t, X^{K_n,+b}_t), n \big) \underset{n\rightarrow +\infty}{\rightarrow} \big( (t, X^{K,+b}_t), +\infty \big).
$$
So according to Proposition \ref{prop:convergence_viscosity_solution} we get that almost surely
$$
\big( U^{K_n}(t, X^{K_n}_t), U^{K_n}(t, X^{K_n,+a}_t), U^{K_n}(t, X^{K_n,+b}_t) \big) \underset{n\rightarrow +\infty}{\rightarrow} \big( U^K(t, X^{K}_t), U^K(t, X^{K,+a}_t), U^K(t, X^{K,+b}_t) \big).
$$
Hence we obtain the following result.
\begin{property}
	\label{prop:convergence_control}
	Consider a sequence $(K_n)_{n\geq 0}$ of continuous $L^1$ functions converging towards a completely monotone function $K$ in $L^1$ and uniformly on $[0, T]$, then for any $t$ we have almost surely
	$$
	\lim\limits_{n\rightarrow +\infty} \delta^{K_n,*}_t = \delta^{K,*}_t.
	$$
\end{property}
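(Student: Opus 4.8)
The plan is to read off the result from the feedback representation of the optimal control in Theorem~\ref{th:optimal_control}$(iii)$ and the stability of value functions established in Proposition~\ref{prop:convergence_viscosity_solution}. Recall from \eqref{eq:feedback_control} that, for $j=a,b$,
\[
\delta^{K,j,*}_t=\big(\sigma/k-D^K_jU^K(t,X^K_t)\big)_{+},\qquad \delta^{K_n,j,*}_t=\big(\sigma/k-D^{K_n}_jU^{K_n}(t,X^{K_n}_t)\big)_{+},
\]
where $D^{K_n}_jU^{K_n}(t,X^{K_n}_t)=U^{K_n}(t,X^{K_n,+j}_t)-U^{K_n}(t,X^{K_n}_t)$, and similarly for $K$. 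Since $z\mapsto(\sigma/k-z)_{+}$ is continuous and taking differences of convergent sequences is harmless, it suffices to show that, almost surely, for every fixed $t$ one has
\[
U^{K_n}(t,X^{K_n}_t)\to U^K(t,X^K_t),\quad U^{K_n}(t,X^{K_n,+a}_t)\to U^K(t,X^{K,+a}_t),\quad U^{K_n}(t,X^{K_n,+b}_t)\to U^K(t,X^{K,+b}_t).
\]

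First I would fix $t$ and work on the full-probability event that the order flow has finitely many jumps on $[0,t]$ (this holds under any of the equivalent measures attached to the various controls). On this event $\theta^{K_n,a}_t=\sum_{T_i\le t}K_n(\cdot-T_i)$ is a finite sum, and since $K_n,K$ are extended by $0$ on $\mathbb{R}_-$ and the $L^1$ norm is translation-invariant, $\|\theta^{K_n,a}_t-\theta^{K,a}_t\|_{1}\le\#\{i:T_i\le t\}\cdot\|K_n-K\|_{1}\to0$; the same holds for $\theta^{K_n,b}_t$ and for the shifted functions $\theta^{K_n,j}_t+K_n(\cdot-t)$. Since $P_t$ and $i_t$ do not depend on the kernel, this gives $\big((t,X^{K_n}_t),n\big)\to\big((t,X^K_t),+\infty\big)$ in $\bar{\mathcal{E}}$, and likewise for the augmented states $\big((t,X^{K_n,+a}_t),n\big)\to\big((t,X^{K,+a}_t),+\infty\big)$ and $\big((t,X^{K_n,+b}_t),n\big)\to\big((t,X^{K,+b}_t),+\infty\big)$; here one invokes Lemma~\ref{lemma:topology_a}$(i)$--$(ii)$ to guarantee that the pre-limit points lie in the fibres $\mathcal{X}^{K_n}_t$ (each being a finite sum of translates of $K_n$ by times $\le t$) and that the limits lie in $\mathcal{X}^K_t$. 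Plugging these convergent sequences into the locally uniform limit \eqref{eq:limit_utility} of Proposition~\ref{prop:convergence_viscosity_solution} yields the three displayed convergences of value functions; subtracting gives $D^{K_n}_jU^{K_n}(t,X^{K_n}_t)\to D^K_jU^K(t,X^K_t)$ for $j=a,b$, and composing with $z\mapsto(\sigma/k-z)_{+}$ concludes.

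The substantial work is entirely contained in Proposition~\ref{prop:convergence_viscosity_solution} (and hence in the comparison principle and the adapted Crandall--Ishii lemma). Within the present argument the only delicate point is the topological bookkeeping: one must check that the augmented states converge \emph{inside} $\bar{\mathcal{E}}$ — i.e.\ that the perturbed configurations $X^{K_n,+j}_t$ genuinely belong to $\mathcal{X}^{K_n}_t$ and that their $L^1$-limits belong to $\mathcal{X}^K_t$ — and that the limit in \eqref{eq:limit_utility}, which is formulated along nets in $\bar{\mathcal{E}}$, may legitimately be evaluated along the random sequences constructed above. Both are handled by Lemma~\ref{lemma:topology_a} together with the definition of $\bar{\mathcal{E}}$.
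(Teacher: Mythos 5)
Your proposal is correct and follows essentially the same route as the paper: the paper likewise observes the almost sure convergence of $\big((t,X^{K_n}_t),n\big)$, $\big((t,X^{K_n,+a}_t),n\big)$ and $\big((t,X^{K_n,+b}_t),n\big)$ in $\bar{\mathcal{E}}$, applies Proposition \ref{prop:convergence_viscosity_solution} to get convergence of the corresponding values of $U^{K_n}$, and concludes through the feedback formula \eqref{eq:feedback_control}. Your only addition is to spell out the elementary bound $\|\theta^{K_n,j}_t-\theta^{K,j}_t\|_1\leq \#\{i:T_i\leq t\}\,\|K_n-K\|_1$ behind the state convergence, which the paper treats as obvious.
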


Proposition \ref{prop:convergence_control} perfectly fits our purpose of approximating $\delta^{K,*}$. Indeed suppose we manage to find a dense\footnote{Here dense is intended in the sense of convergence in $L^1$, together with uniform convergence on $[0, T]$.} subset of the completely monotone $L^1$ functions such that for any $K'$ in this subset, the control $\delta^{K',*}$ can be approximated numerically. Then Propositions \ref{prop:convergence_viscosity_solution} and \ref{prop:convergence_control} guarantee that for any completely monotone function $K$ in $L^1$ we can approximate numerically $U^K$ and $\delta^{K,*}$.\\

 We show in the next two sections that the set
$$
\mathcal{SE} = \bigcup_{n\geq 0}\{ \sum_{i=1}^n \alpha_i e^{-\gamma_i \cdot}\mathbf{1}_{\mathbb{R}_+} \text{ s.t. } \alpha \in \mathbb{R}_+^n \text{ and } \gamma \in \mathbb{R}_+^n\}
$$
satisfies those two conditions. Note that $\mathcal{SE}$ is simply the set of positive linear combinations of decreasing exponential functions. In the next two sections we study Problem \eqref{eq:optimal_control_pb} when the function $K$ is in $\mathcal{SE}$ and then show that $\mathcal{SE}$ is dense in the set of completely monotone functions in $L^1$.

\subsection{Solving the market maker's problem when $K\in \mathcal{SE}$}
\label{subsec:control_problem_exponential}
In this section we explain how to solve Problem \eqref{eq:optimal_control_pb} when the kernel function $K$ is in $\mathcal{SE}$.\\

We consider that the kernel of the Hawkes processes $N^a$ and $N^b$ is $$ K_{\alpha, \gamma}(t)=\sum_{i = 1}^n \alpha_i e^{-\gamma_i t}\mathbf{1}_{\mathbb{R}_+}(t), $$ where $n$ is a positive integer, $\alpha \in \mathbb{R}_{+}^n$ and $\gamma \in \mathbb{R}^n_+$. For $i\in\{1, \dots, n\}$ and $j = a$ or $b$ we define the process
$$
c^{j, i}_t = \int_{0}^t \alpha_ie^{-\gamma_i (t-s)}\mathrm{d}N^{j}_s.
$$
Then $Y^{\alpha, \gamma}_t = \big(t, P_t, i_t, (c_t^{a,i})_{1\leq i \leq n}, (c_t^{b,i})_{1\leq i \leq n} \big)$ is a Markovian process since for $j=a$ or $b$
$$
\lambda^{j,0}_t = \Phi(\sum_{i=1}^n c^{j,i}_t)\text{ and }\mathrm{d}c^{j, i}_t = -\gamma_i c^{j, i}_t \mathrm{d}t + \alpha_i \mathrm{d}N^j_t.
$$
The domain associated with this representation is $\mathcal{E}^{n} =  [0, T]\times \mathbb{R} \times \mathbb{Z}\times \mathbb{R}_+^n \times \mathbb{R}_+^n$, which is locally compact. As for $\mathcal{E}^K$, when we have $(t, x)\in \mathcal{E}^n$ we implicitly consider that $x = (p,i, c^a, c^b)$. Note that we can naturally go from the first representation to this one. More precisely we prove in Appendix \ref{appendix:existence_change_representation} that there exists a continuous function $R^{\alpha, \gamma}$ from $\mathcal{E}^{K_{\alpha, \gamma}}$ into $\mathcal{E}^{n}$ such that for any $t>0$ we have $R^{\alpha, \gamma}(t, X^{K_{\alpha, \gamma}}_t) =(t, Y^{\alpha, \gamma}_t)$. However notice that the second representation is somehow larger than the first one.\\

The infinitesimal generators associated to the processes $N^a$ and $N^b$ for the new representation are denoted by $D^{\alpha}_a$ and $D^{\alpha}_b$. They are defined for any function $U$ on $\mathcal{E}^{n}$ and $x\in \mathcal{E}^{n}$ by
\begin{align*}
D_{a}^{\alpha}U(x) &= U(t, p, i-1, c^a+\alpha, c^b ) - U(t, p, i, c^a, c^b ),\\
D_{b}^{\alpha}U(x) &= U(t, p, i+1, c^a, c^b+\alpha ) - U(t, p, i, c^a, c^b ).
\end{align*}
The HJB equation related to Problem \eqref{eq:optimal_control_pb} in this new representation is therefore
\begin{equation*}
(\mathbf{HJB})_{\alpha, \gamma}:~\left\{\begin{array}{ll}
& G_{\alpha, \gamma}\big(x, U(x), \nabla^c U(x),\nabla U( x), \partial^2_{pp}U(x), D^{\alpha}U(x) \big) = 0,\text{ for }x\in \mathcal{E}^{n},\\
& U(T, y) = G(i, p)\text{ for }(T, y)\in \mathcal{E}^{n}
\end{array}   \right.
\end{equation*}
with  $\nabla^c U =  ( \nabla^c_aU, \nabla^c_bU )$ where for $j=a$ or $b$, $ \nabla^c_jU = (\partial_{c^{j, i}}U)_{1\leq i \leq n}$, $ \nabla U(t, x) = \big(\partial_t U(t, x), \partial_pU(t, x) \big) $,
$$
D^{\alpha}U(t, x) = \big(D^{\alpha}_{a}U(t, x), D^{\alpha}_{b}U(t, x) \big)
$$
and where the function $G_{\alpha, \gamma}$ is defined for $(x, u, h, q, A, I)\in \mathcal{E}^{n} \times \mathbb{R} \times (\mathbb{R}^n)^2 \times \mathbb{R}^2\times \mathbb{R} \times \mathbb{R}^2$ by
\begin{align*}
G_{\alpha, \gamma}\big(x, u, q, h, A, I \big) =&~  ru - h_1 - d(t, p)h_2 - \frac{1}{2}\sigma^2 A - \langle \gamma, q_1 \rangle- \langle \gamma, q_2 \rangle -g(i, p) \\
& -\underset{\delta \in \mathbb{R}_+}{\sup}  \Phi(\sum_{i = 1}^n c^{a, i})e^{-\frac{k}{\sigma}\delta}(\delta + I_1) -\underset{\delta \in \mathbb{R}_+}{\sup}  \Phi(\sum_{i = 1}^n c^{b, i})e^{-\frac{k}{\sigma}\delta}(\delta + I_2).
\end{align*}

We easily adapt the proof of Theorem \ref{th:optimal_control} to $(\mathbf{HJB})_{\alpha, \gamma}$ and prove the following result.

\begin{theorem}
\label{th:verification_exp}~
\begin{itemize}
	\item[(i)]There exists a unique continuous viscosity solution with polynomial growth $U^{\alpha, \gamma}$ to $(\mathbf{HJB})_{\alpha, \gamma}$.
	\item[(ii)]The solution $U^{\alpha, \gamma}$ satisfies
	$$
	U^{\alpha, \gamma}(0) = \underset{\delta \in \mathcal{A}}{\sup}~ 	\mathbb{E}^{\delta}[G(i_T, P_T) e^{-rT} + \int_0^T e^{-rs}\big(  g(i_s, P_s) + \delta^a_s\lambda^{a, \delta}_s + \delta^b_s\lambda^{b, \delta}_s  \big) \mathrm{d}s].
	$$
	\item[(iii)]The stochastic control problem \eqref{eq:optimal_control_pb} admits a solution $\delta^{\alpha, \gamma,*}_t$ satisfying
	$$
	\delta^{\alpha, \gamma,*}_t = \delta^{\alpha, \gamma}(t, Y^{\alpha, \gamma}_t), \text{ with }\delta^{\alpha, \gamma} = (\delta^{\delta, \gamma}_a, \delta^{\delta, \gamma}_a)
	$$
	where 
	$$
	\delta_{a}^{\alpha, \gamma} = \big(\sigma/k - D^{\alpha}_aU^{\alpha, \gamma}\big)_{+} \text{ and } \delta_{b}^{\alpha, \gamma} = \big(\sigma/k - D^{\alpha}_bU^{\alpha, \gamma}\big)_{+}.
	$$
	\item[(iv)]We have $U^{K_{\alpha, \gamma}} = U^{\alpha, \gamma} \circ \mathcal{R}^{\alpha, \gamma}$.
\end{itemize}
\end{theorem}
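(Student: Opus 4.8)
The plan is to reproduce, step by step, the proof of Theorem \ref{th:optimal_control} given in Section \ref{proof:optimal_control}, now exploiting that the domain $\mathcal{E}^{n}=[0,T]\times\mathbb{R}\times\mathbb{Z}\times\mathbb{R}_+^n\times\mathbb{R}_+^n$ is \emph{finite-dimensional} and locally compact, and then to establish the compatibility statement (iv) separately. For (i), I would first set up viscosity sub/super-solutions and semi-jets for $(\mathbf{HJB})_{\alpha,\gamma}$ exactly as in Definitions \ref{def:definition}--\ref{def:equivalent_definition} (keeping the genuine function $U$ inside the bounded finite-difference operators $D^\alpha_a,D^\alpha_b$, which are defined for any locally bounded $U$, and noting as before that testing against $\phi$ is equivalent). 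The core is a comparison principle: if $u\in USC(\mathcal{E}^n)$ is a subsolution, $v\in LSC(\mathcal{E}^n)$ a supersolution, both with polynomial growth and $u(T,\cdot)\le v(T,\cdot)$, then $u\le v$. Since $p\in\mathbb{R}$ and $c^a,c^b\in\mathbb{R}_+^n$ are unbounded, I would localize with a penalization $\varepsilon\,\psi(x)$, where $\psi$ grows faster than the common growth of $u$ and $v$, double the continuous variables $(t,p,c^a,c^b)$ while leaving the discrete inventory untouched, and argue at the penalized maximum. Because $G_{\alpha,\gamma}$ is non-increasing in the non-local argument $I$ (enlarging $I_j$ enlarges $\sup_{\delta\ge0}e^{-\frac{k}{\sigma}\delta}(\delta+I_j)$, hence decreases $G_{\alpha,\gamma}$), because the comparison is carried out simultaneously over all inventory levels $i\in\mathbb{Z}$, and because the transport terms $-\langle\gamma,q_1\rangle-\langle\gamma,q_2\rangle$ and the Lipschitz drift $d$ are harmless, the usual Crandall--Ishii argument closes --- and here it is the classical finite-dimensional one, hence strictly simpler than the adaptation of Appendix \ref{appendix:crandall}.

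Existence and the identification (ii) then follow from the same verification argument as in Section \ref{proof:optimal_control}: the continuation utility function $U^{\alpha,\gamma}$ associated with \eqref{eq:optimal_control_pb} in the $Y^{\alpha,\gamma}$-representation has polynomial growth by the a priori estimates of Appendix \ref{appendix:apriori_inequalities_integral}, satisfies the dynamic programming principle, and is therefore (using It\^o's formula for $Y^{\alpha,\gamma}$ and the martingale property of $M^{a;\delta},M^{b;\delta}$ from Appendix \ref{appendix:subsec_martingale_hawkes}) a viscosity solution of $(\mathbf{HJB})_{\alpha,\gamma}$; by the uniqueness just proved it \emph{is} the solution of (i), and evaluating at $t=0$, $Y^{\alpha,\gamma}_0=(0,P_0,0,\dots,0)$ gives (ii). For (iii), the map $\delta\mapsto\Phi(\cdot)e^{-\frac{k}{\sigma}\delta}(\delta+I_j)$ on $\mathbb{R}_+$ is maximized at $(\sigma/k-I_j)_+$ (first-order condition $1-\frac{k}{\sigma}(\delta+I_j)=0$, projected onto $\mathbb{R}_+$), exactly as in \eqref{eq:maximizers}; substituting $I_j=D^\alpha_jU^{\alpha,\gamma}$ gives the announced feedback. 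Admissibility holds because $U^{\alpha,\gamma}$ is continuous, $D^\alpha_j$ is a finite difference of it hence continuous, $(\cdot)_+$ is continuous, and $Y^{\alpha,\gamma}$ is a right-continuous adapted process, so $t\mapsto\delta^{\alpha,\gamma}(t,Y^{\alpha,\gamma}_{t-})$ is predictable --- here one uses, as noted after Theorem \ref{th:optimal_control}, that the counting-process generators $D^\alpha_j$ need no regularity of $U^{\alpha,\gamma}$. Plugging this feedback into the dynamics turns the discounted value process into a true martingale, so the supremum in \eqref{eq:optimal_control_pb} is attained, which is (iii).

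For (iv) I would argue at the level of the stochastic control problem rather than the PDE, to avoid the fact that $\mathcal{R}^{\alpha,\gamma}$ is in general neither injective nor surjective. First extend the identity of Appendix \ref{appendix:existence_change_representation} to arbitrary initial conditions: if $X^{K_{\alpha,\gamma}}$ is started at $(t,x)\in\mathcal{E}^{K_{\alpha,\gamma}}$ and $Y^{\alpha,\gamma}$ at $\mathcal{R}^{\alpha,\gamma}(t,x)$, both driven by the same $(W,N^a,N^b)$ under the same control $\delta\in\mathcal{A}$, then $\mathcal{R}^{\alpha,\gamma}(s,X^{K_{\alpha,\gamma}}_s)=(s,Y^{\alpha,\gamma}_s)$ for all $s\ge t$; indeed adding $K_{\alpha,\gamma}(\cdot-s)$ to $\theta^{K,a}$ adds $\alpha$ to the coefficient vector $(c^{a,i})$, and the intensity sees $\theta^{K,a}_s$ only through $\theta^{K,a}_s(s)=\sum_i c^{a,i}_s$. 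Since the integrand in \eqref{eq:optimal_control_pb} depends only on $(P,i,N^a,N^b,\delta)$, which are pathwise identical in the two representations, the two control problems have the same value, so $U^{K_{\alpha,\gamma}}(t,x)=U^{\alpha,\gamma}(\mathcal{R}^{\alpha,\gamma}(t,x))$ --- first at $(0,X^{K_{\alpha,\gamma}}_0)$ via the part (ii) statements of both theorems, then at arbitrary $(t,x)$ via the dynamic version of the verification identification used for (ii).

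I expect the comparison principle in (i) --- the penalization handling the unboundedness in $p,c^a,c^b$ together with the inventory coupling and the transport terms --- to be the main technical point, with the non-injectivity and non-surjectivity of $\mathcal{R}^{\alpha,\gamma}$ in (iv) a secondary one; both are, however, routine adaptations of arguments already developed for Theorem \ref{th:optimal_control}, which is why the result is stated without a self-contained proof.
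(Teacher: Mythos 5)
Your proposal takes essentially the same route as the paper: points (i)--(iii) are obtained, exactly as you describe, by re-running the proof of Theorem \ref{th:optimal_control} on the finite-dimensional locally compact domain $\mathcal{E}^{n}$ (comparison with penalization, dynamic programming, verification), and point (iv) is likewise proved at the level of the control problem, via the pathwise identity $(s,Y^{t,y}_s)=\mathcal{R}^{\alpha,\gamma}(s,X^{t,x}_s)$ for $(t,y)=\mathcal{R}^{\alpha,\gamma}(t,x)$ together with the observation that the reward depends only on $(P,i,N^a,N^b,\delta)$, so the two continuation utilities coincide. One small precision on (iii): the optimality of the feedback should be concluded as in Section \ref{proof:control}, by showing that $J(\cdot;\delta^{\alpha,\gamma,*})$ and $U^{\alpha,\gamma}$ are both polynomial-growth viscosity solutions of the associated linear equation and invoking uniqueness, rather than by applying It\^o's formula to $U^{\alpha,\gamma}$ to get a martingale, since $U^{\alpha,\gamma}$ is only known to be continuous.
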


The proof of the three first points is exactly the same as the proof of Theorem \ref{th:optimal_control}. We deal with point $(iv)$ in Section \ref{proof:verification_exp}. Points $(iii)$ and $(iv)$ of Theorem \ref{th:verification_exp} imply that for any $\alpha$ and $\gamma$ in $\mathbb{R}_+^n$ we can approximate numerically $\delta^{K_{\alpha, \gamma}}$. We just need to approximate $U^{\alpha, \gamma}$ using any numerical method, which is possible because the domain of $(\mathbf{HJB})_{\alpha, \gamma}$ is a subset of a finite dimensional vector space. Then using the change of variable $\mathcal{R}^{\alpha,\gamma}$ one gets
$$
\delta^{K_{\alpha, \gamma}} = \delta^{\alpha, \gamma} \circ \mathcal{R}^{\alpha, \gamma}.
$$
Note that this shows that the controls $\delta^{K_{\alpha, \gamma},*}$ given in Theorem \ref{th:optimal_control} (iii) and $\delta^{\alpha, \gamma, *}$ given in Theorem \ref{th:verification_exp} (iii) are actually the same.

\subsection{Density of $\mathcal{SE}$ in the set of completely monotone functions}
\label{subsec:density_function}

In this section we show that $\mathcal{SE}$ is dense in the set of completely monotone functions in $L^1$. Before giving the result we present a short sketch of the proof.\\

The key idea is that any completely monotone function can be written as the Laplace transform of a positive measure $m$, see Lemma 2.3 in \cite{merkle2014completely}:
\begin{equation}
\label{eq:completely_monotone_rep}
K(x) = \int_0^{+\infty} e^{-u x} m(\mathrm{d}u).
\end{equation}
Moreover if $K(0)<+\infty$ then $m$ is $L^1$ and if $K$ is in $L^1$ then $\int_{0}^{+\infty} \frac{m(\mathrm{d}u)}{u}<\infty$. Hence using Riemann sums to approximate the integral \eqref{eq:completely_monotone_rep} we get a natural way of approximating $K$ by a sequence of functions in $\mathcal{SE}$. Based on this idea we prove the following result in Appendix \ref{appendix:existence_converging_sequence}.
\begin{lemma}
\label{lemma:existence_converging_sequence}
	For any completely monotone function $K$ in $L^1$ there exists a sequence $(\alpha_n, \gamma_n)_{n\geq 0}$, such that:
	\begin{itemize}
		\item[$(i)$] For any $n$, $(\alpha_n, \gamma_n) \in \mathbb{R}^n_+ \times \mathbb{R}^n_+$,
		\item[$(ii)$]$(K_{\alpha_n, \gamma_n})_{n\geq 0}$ converges towards $K$ in $L^1$ and uniformly on every compact set of $\mathbb{R}_+$,
		\item[$(iii)$] $	\|K_{\alpha_n, \gamma_n}\|_{1} = \|K\|_1 \text{ and } K_{\alpha_n, \gamma_n}(0) = K(0).$
	\end{itemize}
\end{lemma}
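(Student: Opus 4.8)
The strategy is to turn the Bernstein representation \eqref{eq:completely_monotone_rep} into an explicit discretisation. By Lemma 2.3 in \cite{merkle2014completely} we may write $K(x) = \int_0^{+\infty} e^{-ux}\,m(\mathrm{d}u)$ for a positive measure $m$; since $K\in L^1$ we have $\int_0^{+\infty} u^{-1}\,m(\mathrm{d}u) = \|K\|_1 < +\infty$, and since moreover $K(0) = m(\mathbb{R}_+) < +\infty$ the measure $m$ is finite. The plan is to fix a sequence of finite partitions $0 = u_0^n < u_1^n < \dots < u_{k_n}^n$ of $\mathbb{R}_+$ whose meshes shrink and whose ranges exhaust $(0,+\infty)$, and to set $\gamma_i^n = $ (a representative point of the $i$-th cell, e.g.\ its right endpoint or a weighted barycentre) and $\alpha_i^n = m(\text{$i$-th cell})\cdot(\text{correction factor})$, so that $K_{\alpha_n,\gamma_n}(x) = \sum_i \alpha_i^n e^{-\gamma_i^n x}$ is a Riemann-type approximation of the integral. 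The corrections in $(iii)$ are then imposed by hand: one rescales the weights of the finitely many cells so that $\sum_i \alpha_i^n = K(0)$ and $\sum_i \alpha_i^n/\gamma_i^n = \|K\|_1$; because both the target sums and the Riemann sums converge to the correct limits, the required multiplicative adjustments tend to $1$, so imposing them does not destroy convergence.

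For the $L^1$ convergence in $(ii)$, the clean route is to use that the family $\{e^{-u\,\cdot}\}_{u>0}$ is dominated in $L^1(\mathbb{R}_+)$-norm in a controlled way: $\|e^{-u\,\cdot}\|_1 = 1/u$, which is exactly $m$-integrable. Thus for a partition as above one bounds $\|K - K_{\alpha_n,\gamma_n}\|_1$ by $\int_0^{+\infty}\|e^{-u\,\cdot} - e^{-\gamma^n(u)\,\cdot}\|_1\,m(\mathrm{d}u)$ where $\gamma^n(u)$ is the representative point of the cell containing $u$; since $\|e^{-u\,\cdot} - e^{-v\,\cdot}\|_1 = |1/u - 1/v|$ and $\gamma^n(u)\to u$ for $m$-a.e.\ $u$ (away from $0$, where one must truncate a small mass $m((0,\varepsilon))$ separately, controlled by $\int_0^\varepsilon m(\mathrm{d}u)/u$), dominated convergence finishes it. For uniform convergence on a compact $[0,A]$, replace the $L^1$ norm above by the sup norm on $[0,A]$, using $\sup_{x\in[0,A]}|e^{-ux}-e^{-vx}| \le A|u-v|e^{-(\min(u,v))\cdot 0}$-type estimates, or more simply $|e^{-ux}-e^{-vx}|\le |u-v|\,x\,e^{-\min(u,v)x}\le A|u-v|$, and integrate against the finite measure $m$; again the mass near $0$ needs a separate truncation argument, this time using that $m$ is finite. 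The tail $u\to+\infty$ is harmless because $e^{-ux}$ is tiny there and $m$ is finite.

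The main obstacle is the behaviour near $u = 0$: the weight $e^{-u\,\cdot}$ has $L^1$-norm $1/u$ blowing up, so a naive Riemann sum over a cell touching $0$ contributes an unbounded term, and one cannot simply take $\gamma_1^n\to 0$. The fix is to choose the innermost grid point $u_1^n$ itself tending to $0$ slowly enough that $\int_0^{u_1^n} m(\mathrm{d}u)/u \to 0$ (possible since this integral is finite), absorb the mass $m((0,u_1^n))$ into the first exponential by a careful choice of $\alpha_1^n,\gamma_1^n$, and verify that the resulting error in both norms vanishes. Together with the elementary but slightly tedious verification that the three normalisation constraints in $(iii)$ can be met with asymptotically-trivial corrections, this is the only delicate point; the rest is a routine dominated-convergence argument. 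Once Lemma \ref{lemma:existence_converging_sequence} is established, combining it with Propositions \ref{prop:convergence_viscosity_solution} and \ref{prop:convergence_control} and Theorem \ref{th:verification_exp} yields the announced numerical scheme.
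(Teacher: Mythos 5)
Your skeleton (Bernstein representation $K(x)=\int_0^{+\infty}e^{-ux}\,m(\mathrm{d}u)$, Riemann-type discretisation of $m$, error estimates using finiteness of $m$ and of $\int_0^{+\infty}u^{-1}m(\mathrm{d}u)$) is the same as the paper's, and your route to $(ii)$ is sound: the paper truncates at $A_n=\sqrt n$, uses the right endpoints so that $K_n\leq K$ pointwise, gets the uniform bound $|K_n(t)-K(t)|\leq \frac{T}{\sqrt n}m(\mathbb{R}_+)+\int_{A_n}^{+\infty}m(\mathrm{d}u)$ on $[0,T]$, and deduces $L^1$ convergence by domination by $K\in L^1$; your alternative bookkeeping via $\|e^{-u\cdot}-e^{-v\cdot}\|_1=|1/u-1/v|$ and dominated convergence in $u$ is a valid variant (and note that the mass near $0$ needs no ``slowly enough'' tuning: with right endpoints the contribution of $(0,u_1^n)$ to the $L^1$ error is at most $\int_0^{u_1^n}u^{-1}m(\mathrm{d}u)$, which vanishes automatically).

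The genuine gap is point $(iii)$, which is precisely the part of the lemma that is not a routine Riemann-sum argument. You propose to ``rescale the weights'' so that simultaneously $\sum_i\alpha_i^n=K(0)$ and $\sum_i\alpha_i^n/\gamma_i^n=\|K\|_1$, asserting that the adjustments tend to $1$. A single multiplicative rescaling of the weights has one degree of freedom and cannot in general meet these two constraints at once; to meet both you would need (at least) two free parameters together with a non-degeneracy and positivity argument ensuring the solution stays close to the identity and keeps all $\alpha_i^n\geq 0$ (requirement $(i)$), none of which is supplied. Moreover, with a ``weighted barycentre'' choice of representative points you lose the sign information on the deficits $K(0)-K_n(0)$ and $\|K\|_1-\|K_n\|_1$, which makes positivity delicate. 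The paper's resolution is cleaner: choosing the right endpoint of each cell gives $K_n\leq K$ pointwise, hence $\alpha_n:=K(0)-K_n(0)\geq 0$ and $\|K\|_1-\|K_n\|_1\geq 0$; one then adds a single extra exponential, $\tilde K_n=K_n+\alpha_n e^{-\beta_n\cdot}$ with $\beta_n=\alpha_n/(\|K\|_1-\|K_n\|_1)$, which enforces both equalities of $(iii)$ exactly, keeps all coefficients non-negative, and perturbs the $L^1$ norm by exactly $\|K\|_1-\|K_n\|_1\rightarrow 0$, so convergence is preserved. Your proof needs either this device or an explicit two-parameter correction argument to be complete.
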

	
Lemma \ref{lemma:existence_converging_sequence} together with Proposition \ref{prop:convergence_control} and Theorem \ref{th:verification_exp} allows us to conclude on the existence of a procedure to approximate $\delta^{K,*}$. In the next section we sum up our results and explain how to use them in practice.

\subsection{Conclusion on approximating the optimal control}
\label{subsec:conclusion_approach}

For a completely monotone function $K$ in $L^1$ consider $(\alpha_n, \gamma_n)_{n\geq 0}$ a sequence given by Lemma \ref{lemma:existence_converging_sequence}. We write $K_n$ instead of $K_{\alpha_n, \gamma_n}$ to lighten notations. According to Proposition \ref{prop:convergence_control}, we have the following almost sure convergence for any $t$ 
$$
\delta^{\alpha_n, \gamma_n}(t, Y^{\alpha_n, \gamma_n}_t) \underset{n\rightarrow +\infty}{\rightarrow} \delta_t^{K,*}.
$$
Hence to implement an approximated version of the optimal control $\delta^{K,*}$ we simply have to implement the control $\delta^{\alpha_n, \gamma_n,*}_t$ for $n$ large enough. This approximated control can be computed by solving numerically the finite dimensional PIDE $(\mathbf{HJB})_{\alpha, \beta}$.\\

In conclusion the recipee to implement an approximated version of the optimal control $\delta^{K,*}$ is the following:
\begin{itemize}
	\item[1.] Fix $n$ positive and find $\alpha, \gamma \in \mathbb{R}_+^n$ such that $K_{\alpha, \gamma}$ is the closest possible from $K$. See Appendix \ref{appendix:existence_converging_sequence} for a method to choose such $\alpha$ and $\gamma$.
	\item[2.] Approximate numerically $U^{\alpha, \gamma}$, the solution of $(\mathbf{HJB})_{\alpha, \gamma}$, which is equivalent to approximate numerically the feedback $\delta^{\alpha, \gamma}$.
	\item[3.] Monitor $Y^{\alpha, \gamma}$ and apply the control $\delta^{\alpha, \gamma}(t, Y^{\alpha, \gamma})$.	
\end{itemize}
The only flaw of this method is that the set $\mathcal{E}^{n}$ is a subset of a vector space of dimension $2n+2$. Hence when $n$ is larger than $2$ it is very unlikely that simple finite differences methods can be used to solve numerically $(\mathbf{HJB})_{\alpha, \gamma}$. To tackle this issue we need to use other numerical methods such as neural networks, see \cite{bachouch2018deep} for example, or probabilistic method, see \cite{henry2019branching}. In this article we propose to use the later method for numerical applications.

\section{Numerical applications}
\label{sec:numerical}

In this section we present some numerical experiments illustrating our results. We consider a simplified version of the market maker's problem:
$$
(N): \underset{\delta\in \mathcal{A}}{\sup}~\mathbb{E}^{\delta}[\int_0^{T} \delta^a_s\mathrm{d}N^a_s + \delta^b_s\mathrm{d}N^b_s -\mu i_s^2\mathrm{d}s].
$$
This corresponds to $G=0$ and $g(i, p) = -\mu i^2$. We take $k/\sigma = 20$ and $\mu = 0.1$. We note $U^K$ the unique viscosity solution (with polynomial growth) of the HJB equation associated to $(N)$ when the Hawkes processes' kernel is $K$. In this section we discard the price variable from the PIDEs since it does not appear in the optimization problem.\\

We first consider in Section \ref{subsec:low_dimension_method} the cases of kernels in $\mathcal{SE}$ with $n=2$ and show the importance for market makers to take into account the clustering and long memory properties of market order flows in their trading strategies . Then in Section \ref{subsec:high_dimension_method} we deal with more complex functions $K$ and illustrate the convergence of the method described in Section \ref{subsec:conclusion_approach}. In this last Section to solve the PIDEs we use the probabilistic representation introduced in \cite{henry2019branching} which is described in Appendix \ref{sec:branching_method}.

\subsection{The impact of taking into account the self exciting property of market order flows}
\label{subsec:low_dimension_method}

In this section we consider that $\Phi(x) = \mu + x$ for $\mu$ a positive constant and that the kernel $K$ is of the form:
$$
K(t) = \big( \alpha_1 e^{-\gamma_1 t}  + \alpha_2 e^{-\gamma_2 t} \big)\mathbf{1}_{\mathbb{R}_+}(t).
$$
This means that for $j=a$ or $b$
$$
\lambda^{j, 0}_t = \mu + \int_0^t K(t-s)\mathrm{d}N^j_s.
$$
In order to illustrate the interest of taking into account the clustering and long memory properties of market order flows we are going to compare three trading strategies corresponding to three controls $\delta^0,~ \delta^1$ and $\delta^2$. Each of those controls is computed in the following way:
\begin{itemize}
	\item[$(a)$]$\delta^0$ is the optimal control of a market maker believing that buy and sell market order flows are Poisson processes with intensity $\mu_0$.
	\item[$(b)$]$\delta^1$ is the optimal control of a market maker believing buy and sell market order flows are Hawkes processes with kernel $ K_1(t) = \alpha^1 e^{-\gamma^1 t}$ and that the value of $\mu$ is $\mu_1$.
	\item[$(c)$]$\delta^2$ is the optimal control $\delta^{K,*}$.
\end{itemize}
The first and second market makers are misleading on the dynamics of the market order flows so their strategies are suboptimal.\\

To compute the controls and the associated value functions we solve the corresponding HJB equations using finite differences methods. We use the following parameters settings:
\begin{itemize}
	\item[$(a)$]$\mu_0 = 1$,
	\item[$(b)$]$\mu_ 1 = 0.1 $, $\gamma^1 = 1$ and $\alpha^1 = 0.9$,
	\item[$(c)$]$\mu =  0.1$, $\gamma^2 = (1, 1)$ and $\alpha^2 = (0.45, 0.45)$.
\end{itemize}

We compare the different value functions associated to each controls in Figures \ref{fig:two_a}, \ref{fig:two_b} and \ref{fig:two_c}. As expected the control $\delta^2$ is optimal and $\delta^0$ is sub-optimal compared $\delta^1$. Moreover we observe in Figure \ref{fig:two_a} that considering an exponential kernel Hawkes model for the order flows leads to a $10\%$ gain compared with a strategy considering that market order flows is a Poisson process. Using two exponentials leads to another $10\%$ gain. This shows the large gain that can arise from taking into account the clustering and long memory properties of market order flows when designing a market making strategy.

\begin{figure}[tbph]
\centering
\includegraphics[width = \figurewidth, height = \figureheight]{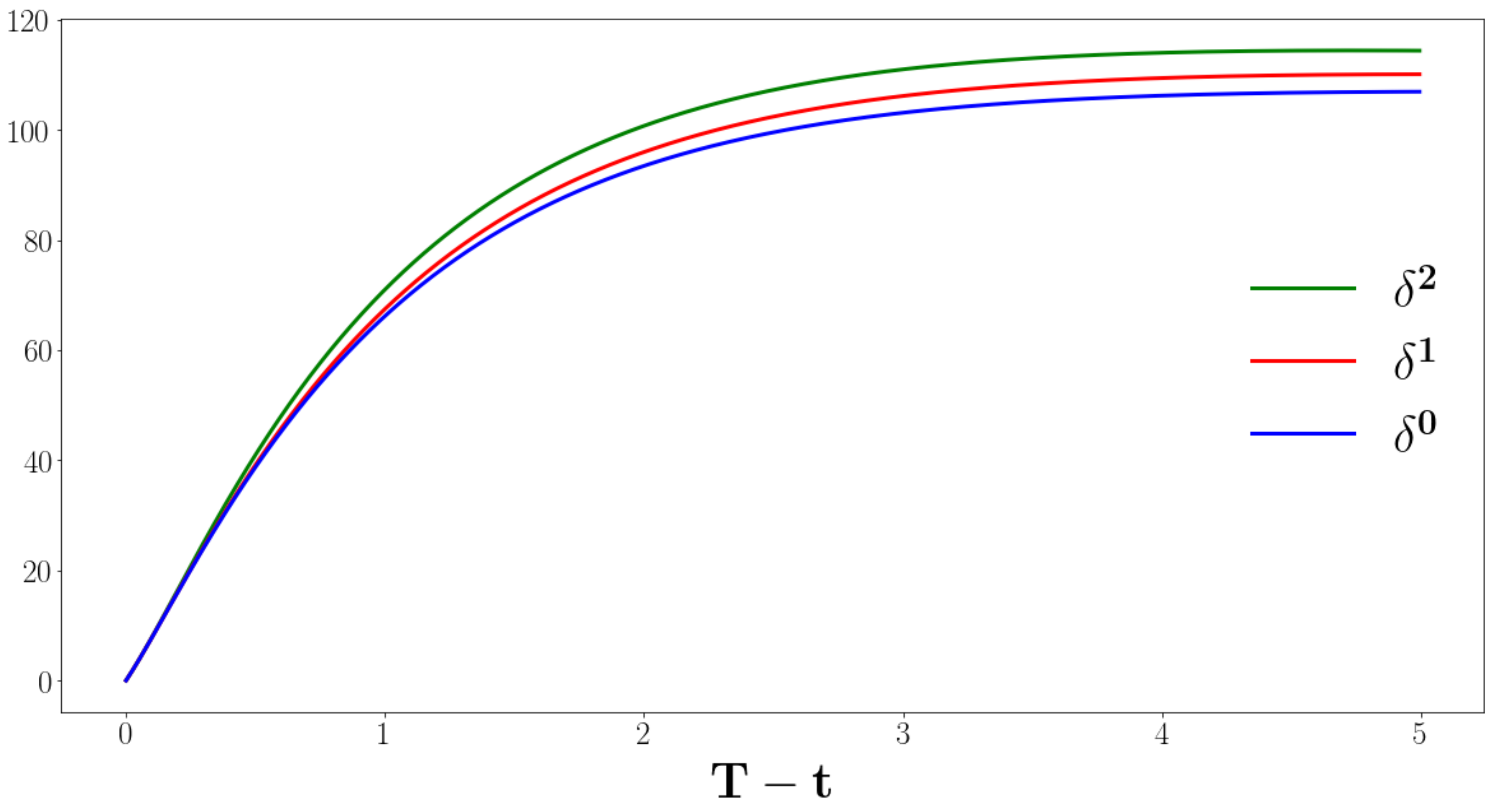}
	\caption{Value functions along the time for controls $\delta^0$, $\delta^1$ and $\delta^2$ with initial condition $c^a = (0, 10)$, $c^b = (0, 10)$ and $i=-10$.}
	\label{fig:two_a}
\end{figure} 

\begin{figure}[tbph]
\centering
\includegraphics[width = \figurewidth, height = \figureheight]{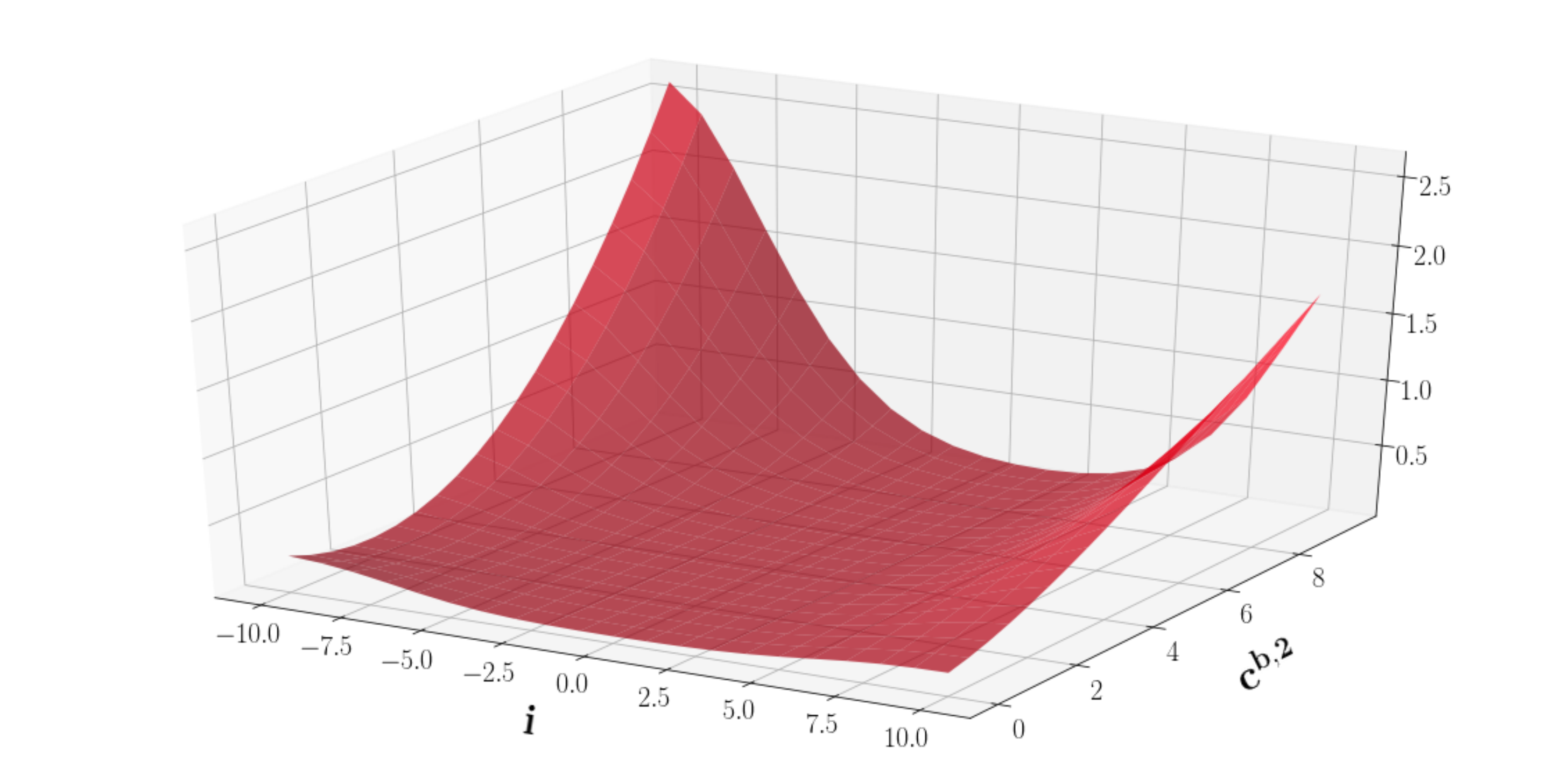}
	\caption{Difference between the value functions associated to controls $\delta^2$ and $\delta^1$ for $c^{a} = (10, 0)$, $c^{b, 1} = 10$.}
	\label{fig:two_b}
\end{figure} 

\begin{figure}[tbph]
\centering
\includegraphics[width = \figurewidth, height = \figureheight]{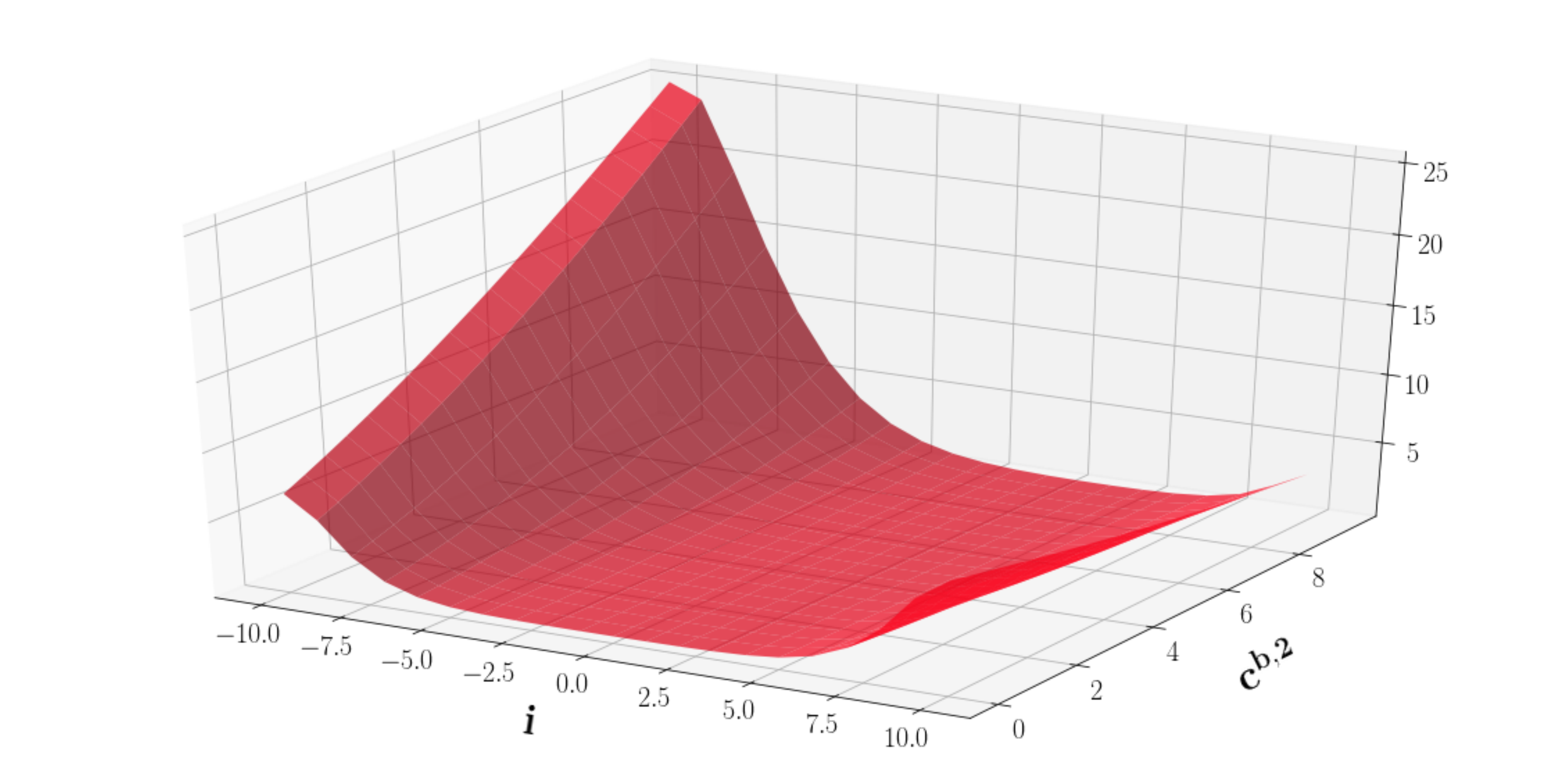}
	\caption{Difference between the value functions associated to controls $\delta^2$ and $\delta^0$ for $c^{a} = (10, 0)$, $c^{b, 1} = 10$.}
	\label{fig:two_c}
\end{figure}

\subsection{Numerical illustration of Section \ref{subsec:conclusion_approach}}
\label{subsec:high_dimension_method}

In this section we use the method presented in Section \ref{subsec:conclusion_approach} to estimate $U^K$ at several points when the function $K$ is the following completely monotone $L^1$ function:
$$
K(t) = \frac{\lambda}{\lambda + (t+\varepsilon)^{\alpha}}\frac{1}{(t+\varepsilon)^{\beta}} \mathbf{1}_{\mathbb{R}_+}(t),
$$
for $\lambda = 0.1$, $\alpha = 0.7$, $\beta=0.4$ and $\varepsilon = 0.01$ is a small shift used for numerical purposes.\\

We explain in Appendix \ref{appendix:large_dimension} how to build in this case the sequence $(\alpha_n,  \beta_n)_{n\geq 0}$ given by Lemma \ref{lemma:existence_converging_sequence}. As in Section \ref{subsec:conclusion_approach} we note $K_n = K^{\alpha_n, \beta_n}$.\\

For any $n$ we consider the following elements of $\mathcal{E}^{K_n}$
$$
x^n_0=(0, 0, 5K^n, 0),~x^n_1=(0, 5, 5K^n, 0) \text{ and }x^n_2=(0, -5, 5K^n, 0)
$$
and the following elements of $\mathcal{E}^K$
$$
x_0=(0, 0, 5K, 0),~x_1=(0, 5, 5K, 0) \text{ and }x^n_2=(0, 5, 5K, 0).
$$
According to Proposition \ref{prop:convergence_viscosity_solution} we have
$$
U_{K_n}(x^n_i) \underset{n \rightarrow +\infty}{\rightarrow} U_{K}(x_i)\text{ for any }i\in \{0,1,2\}.
$$ 
This convergence is illustrated in Figure \ref{fig:convergence}. We used the probabilistic representation of \cite{henry2019branching} to compute the $U^{K_n}(x^{n}_i)$ and $U^K(x_i)$, see Appendix \ref{sec:branching_method} for more details. This proves the tractability of the method presented in Section \ref{subsec:conclusion_approach}.

\begin{figure}[tbph]
\centering
\includegraphics[width = \figurewidth, height = \figureheight]{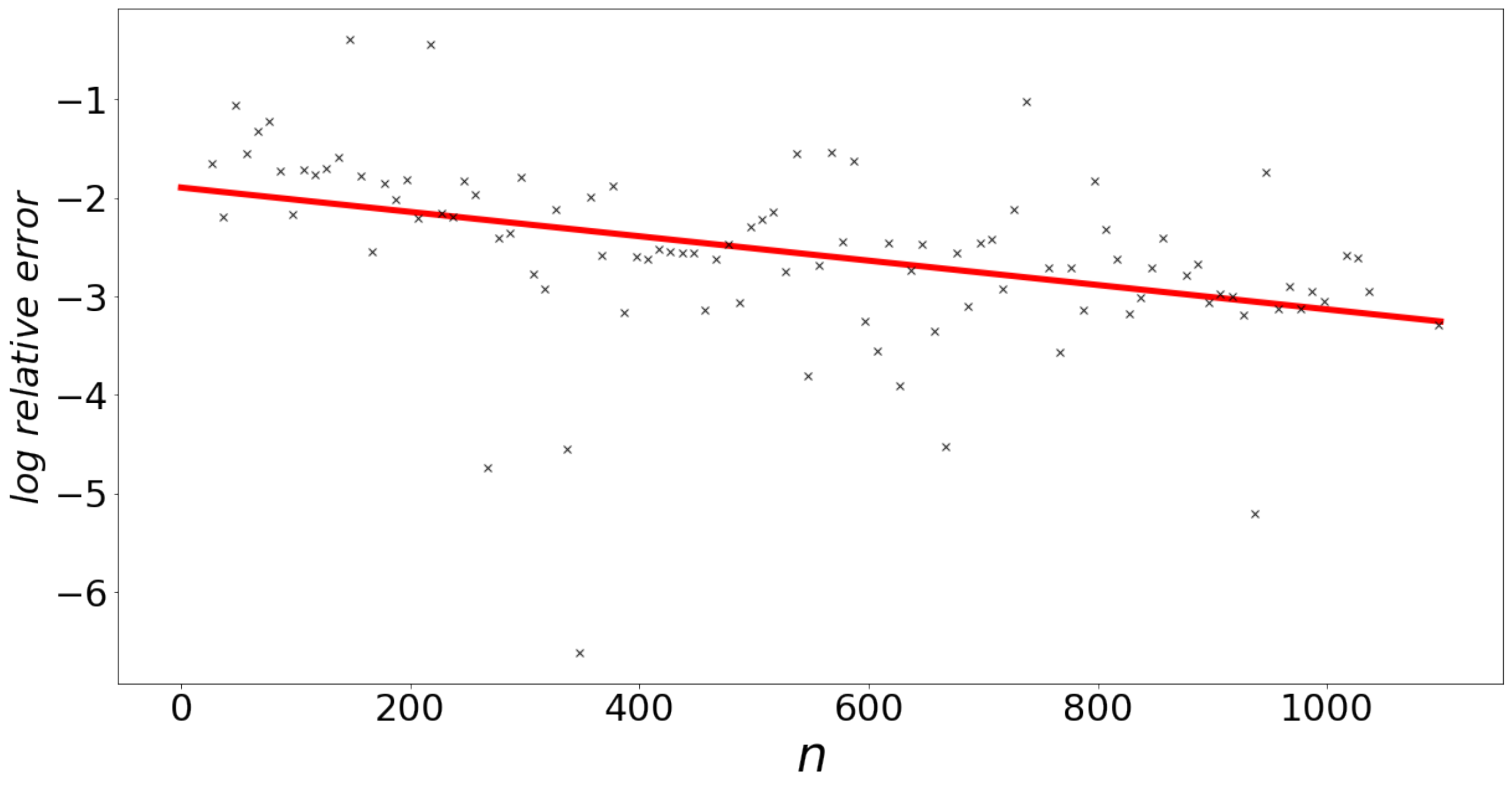}
	\caption{Log relative difference between $U^{K_{n}}(x^n_i)$ and $U^{K}(x_i)$ for $i\in \{0, 1, 2\}$. We used $T=1$ for the computations. In red is the linear regression.}.
\label{fig:convergence}
\end{figure}

\section*{Acknowledgments}
We thank Thibaut Mastrolia, Mathieu Rosenbaum and Nizar Touzi for many interesting discussions. The author gratefully acknowledges the financial support of the \textit{ERC Grant 679836 Staqamof} and of the chair \textit{Analytics and Models for Regulation}.

\newpage

\section{Proofs}
\label{sec:proofs}

In this section we drop the superscript $K$ for the processes $X^K$, $\theta^{K,a}$ and $\theta^{K,b}$ to lighten the notations.

\subsection{Formal definition of the probability space}
\label{sec:formal_definition}

In this section we make precise the probability space we are working on. In particular we give a proper definition to $\mathbb{E}^{\delta}$. First we define the canonical process and the probability space associated to our stochastic control problem.
\begin{itemize}
	\item[-] Consider $\Omega_d$ the set of increasing piecewise constant c\`adl\`ag functions from $[0, T]$ into $\mathbb{N}$ with jumps equal to $1$ and $\Omega_p$ the set of continuous functions from $[0, T]$ into $\mathbb{R}$. We define $\Omega = \Omega_p \times \Omega_d^2$.
	\item[-] We let $(W_t, N^a_t, N^b_t)_{t\in [0, T]}$ be the canonical process on $\Omega$.
	\item[-] The associated filtration is $\mathbb{F} = ( \mathcal{F}^p_t \otimes \mathcal{F}^d_t\otimes \mathcal{F}^d_t)_{t\in [0, T]}$ where $ (\mathcal{F}^d_t)_{t\in [0, T]}$ (resp. $ (\mathcal{F}^p_t)_{t\in [0, T]}$) is the right continuous completed filtration associated with $N^a$ (or $N^b$) (resp. $W$).
	\item[-]We denote by $\mathbb{P}_0$ the probability measure on $(\Omega, \mathbb{F})$ such that $\big(M^a_s = N^a_s - s\lambda_0,~M^b_s = N^b_s  - s \lambda_0\big)_{s\in [0, T]}$, for $\lambda_0>0$, are local martingales and $(W_s)_{s\in [0, T]}$ is a Brownian motion .
\end{itemize}  

We now introduce some processes we use later in this paper. For a fixed $(t, x) \in \mathcal{E}^K$ we define $X^{t, x} = (P^{t, x}, i^{t,x}, \theta^{t,x;a}, \theta^{t,x;b})$ that is the state of the system after time $t$ when starting from point $(t, x)$. The dynamics of $X^{t, x}$ is given on $[t, T]$ by
\begin{align*}
&\mathrm{d}P^{t, x}_s =d(s, P_s^{t, x})\mathrm{d}s + \sigma\mathrm{d}W_s ,~~ P^{t, x}_t = p,\\ 
&\mathrm{d}i^{t, x}_s = \mathrm{d}N^a_s - \mathrm{d}N^b_s,~~ i^{t, x}_t = i,\\
&\mathrm{d}\theta^{t, x;a}_s =K(\cdot - s) \mathrm{d}N^a_s ,~~ \theta^{t,x;a}_t = \theta^a,\\
&\mathrm{d}\theta^{t, x;b}_s =K(\cdot - s) \mathrm{d}N^b_s ,~~ \theta^{t,x;b}_t = \theta^b.
\end{align*}
Using those processes we explicit the change of measure associated to each control process. For this we consider the functions
$$
\lambda^{a}(t,x, \delta) = e^{-\frac{k}{\sigma}\delta^a}\Phi\big(\theta^a(t)\big)\text{ and }\lambda^{b}(t,x, \delta) = e^{-\frac{k}{\sigma}\delta^b}\Phi\big(\theta^b(t)\big),
$$
that represent the ask and bid intensity in the state $(t, x)\in \mathcal{E}^K$ when the control is $\delta$. For any $\delta\in \mathcal{A}$ we define $\mathbb{P}^{t, x;\delta}$ by
$$
\frac{\mathrm{d}\mathbb{P}^{t,x;\delta}}{\mathrm{d} \mathbb{P}_0} = L_T^{t,x; \delta}
$$
where $L_T^{t, x;\delta}$ is the Dol\'eans-Dade exponential of 
$$
Z^{t, x;\delta}_s = \int_{0}^s \frac{\lambda^a(s, X^{t, x}_s, \delta_s)-\lambda_0}{\lambda_0}\mathbf{1}_{s\geq t}\mathrm{d}M^a_s + \frac{\lambda^b(s, X^{t, x}_s, \delta_s)-\lambda_0}{\lambda_0}\mathbf{1}_{s\geq t} \mathrm{d}M^b_s.
$$
Since $\lambda^a(t, x, \delta)\leq C(1 + \|x\|)$ and $\lambda^b(t, x, \delta)\leq C(1 + \|x\|)$, by the Corrolary 2.6 in \cite{sokol2013optimal}, for any $(t, x)\in \mathcal{E}^K$, $(L_s^{t, x;\delta})_{s\in [t, T]}$ is a true $\mathbb{P}_0$ martingale. Moreover by Theorem III-3.11 in \cite{jacod2013limit} the processes
$$
M^{t, x;a, \delta} = N^a - \int_t^{\cdot} \lambda^a(u, \delta_u, X^{t, x}_u) \mathrm{d}u \text{ and }M^{t, x;b, \delta} = N^b - \int_t^{\cdot} \lambda^b(u, \delta_u, X^{t, x}_u) \mathrm{d}u
$$
are $\mathbb{P}^{t,x;\delta}$-local martingales on $[t, T]$. Actually they are true martingales, see Appendix \ref{appendix:subsec_martingale_hawkes}.\\

For $(t, x)\in \mathcal{E}^K$ and $\delta\in \mathcal{A}$ we note $\mathbb{E}^{\delta}_{t, x}$ the expectation under the law $\mathbb{P}^{t,x;\delta}$ and note $\mathbb{E}^{\delta}$ instead of $\mathbb{E}^{\delta}_{0, 0}$.\\

Finally, for any $F$ bounded continuous function, $\delta\in \mathcal{A} $ and $\theta$ stopping time with values in $[t, T]$ we have:
\begin{equation}
\label{eq:markovian_property}
\mathbb{E}^{\delta}_{t, x}[F(X^{t, x}_T)|\mathcal{F}_{\theta}] = \mathbb{E}^{\delta^{\theta}}_{\theta, X^{t, x}_{\theta}}[F(X^{\theta, X^{t, x}_{\theta}}_T)]
\end{equation}
where, $\delta^{\theta}$ is the restriction to $[\theta, T]$ of $\delta$. This prove that for any $(t, x)\in \mathcal{E}^K$ the process $(s, X^{t, x}_s)_{s\geq t}$ is Markovian.

\subsection{Proof of Theorem \ref{th:optimal_control}}
\label{proof:optimal_control}

We proceed in 5 steps.
\begin{enumerate}
\item Section \ref{proof:comparison}: Using a comparison result we show that $(\mathbf{HJB})_{K}$ admits a unique viscosity solution with polynomial growth.
\item Section \ref{proof:utility}: For any $K$ we define $U^K$ the continuation utility function associated to \eqref{eq:optimal_control_pb}.
\item Section \ref{proof:ppd}: We prove a dynamic programming principle for $U^K$.
\item Section \ref{proof:verification}: Using a verification argument we show that $U^K$ is the unique viscosity solution (with polynomial growth) of $(\mathbf{HJB})_K$.
\item Section \ref{proof:control}: We show that the control given in Equation \eqref{eq:feedback_control} solves the control problem \eqref{eq:optimal_control_pb}.
\end{enumerate}

\subsubsection{Comparison result for $(\mathbf{HJB})_K$}
\label{proof:comparison}

We start by proving a comparison result for bounded solutions, then we extend it to functions with polynomial growth.
\begin{property}
\label{prop:comparison_bounded}
Let $U\in USC(\mathcal{E}^K)$ be a bounded from above viscosity sub-solution of $(\mathbf{HJB})_K$ and $V\in LSC(\mathcal{E}^K)$ be a bounded from below viscosity super-solution of $(\mathbf{HJB})_K$ such that $U(T, \cdot)\leq V(T, \cdot)$ then
$$
U\leq V \text{ on } \mathcal{E}^K.
$$
\end{property}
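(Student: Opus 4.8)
# Proof Plan for Proposition (Comparison for bounded solutions)

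\textbf{Overall strategy.} The plan is to run the classical doubling-of-variables argument à la Crandall--Ishii--Lions, adapted to the non-standard domain $\mathcal{E}^K$ and the non-local operator $D^K$. Assume for contradiction that $m := \sup_{\mathcal{E}^K}(U-V) > 0$. Since $\mathcal{E}^K$ is not compact (the $L^1$-components live in the locally compact but unbounded set $\Theta^K_T$), I would first localize. A convenient way is to introduce a penalization term built from the norm $\|x\|$ defined in the excerpt, exploiting that $U$ is bounded above and $V$ bounded below and that $g, G$ and the coefficients have at most polynomial (here: quadratic) growth; by Lemma \ref{lemma:topology_a}(i) the sublevel sets $\mathcal{E}^K_R$ are compact, so a maximizing sequence can be forced to stay in a fixed compact set. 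One also adds the standard $e^{\beta t}$ or $(T-t)^{-1}$ device, or equivalently works with $\tilde U = e^{\lambda t}U$, to absorb the zeroth-order term $rU$ and handle the terminal condition; the strict inequality on $\{t=T\}$ then comes for free on a slightly shrunk time interval.

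\textbf{Key steps in order.} (1) Reduce to $m>0$ attained by continuity/semicontinuity plus the localization above, and fix a maximizer $\bar x = (\bar t, \bar p, \bar z)$ in a compact $\mathcal{E}^K_R$; argue $\bar t < T$ using $U(T,\cdot)\le V(T,\cdot)$ and the penalization. (2) Double the variables in $(t,p)$ only — crucially \emph{not} in the $L^1$-component $z$, since $D^K$ acts purely in $z$ and $U \mapsto D^K U$ is already well-defined for merely finite functions (as the excerpt stresses). So consider
$$
\Phi_\varepsilon(t,p,s,q,z) = U(t,p,z) - V(s,q,z) - \frac{1}{2\varepsilon}\big(|t-s|^2 + |p-q|^2\big) - \text{(penalization in }\|\cdot\|\text{)},
$$
maximized over $(t,p,s,q,z)$ with the constraint $(t,p,z),(s,q,z)\in\mathcal{E}^K$. (3) Standard estimates give maximizers $(t_\varepsilon,p_\varepsilon,s_\varepsilon,q_\varepsilon,z_\varepsilon)$ with $|t_\varepsilon-s_\varepsilon|^2/\varepsilon + |p_\varepsilon-q_\varepsilon|^2/\varepsilon \to 0$ and convergence to a point realizing $m$. (4) Apply the version of the Crandall--Ishii lemma proved in Appendix \ref{appendix:crandall} to extract jets $(g^U, A^U, h) \in \overline{\mathcal{J}}^+U$ and $(g^V, A^V, h) \in \overline{\mathcal{J}}^-V$ at the respective points, with the \emph{same} test-function in the $z$-variable (here $z \mapsto$ constant, i.e. $h\equiv 0$ is admissible since $z$ is not doubled), matching first-order terms and the matrix inequality on $A^U, A^V$. (5) Write the sub-solution inequality for $U$ and the super-solution inequality for $V$ using Definition \ref{def:equivalent_definition}, subtract, and use: the ellipticity $\frac{1}{2}\sigma^2(A^U - A^V)\le 0$ from the matrix inequality; the Lipschitz bound on $d$ to control $d(t_\varepsilon,p_\varepsilon)g^U_2 - d(s_\varepsilon,q_\varepsilon)g^V_2$; continuity of $\Phi$ and $g$ to handle $\Phi(\theta^a(t))$, $\Phi(\theta^b(t))$ and $g(i,p)$ at coinciding limit points; and — the one genuinely monotone ingredient — the fact that $D^K U(\bar x) \le D^K V(\bar x)$ because $z$ is common to both and $\bar x$ is a maximum of $U-V$ (so $U-V$ evaluated at the shifted state $(\bar t,\bar p, \bar i\mp1, \bar\theta + K(\cdot-\bar t))$ is $\le m$), which combined with the monotonicity of the Hamiltonian term in $I$ (the map $I_j \mapsto -\sup_\delta \Phi(\cdot)e^{-k\delta/\sigma}(\delta+I_j)$ is nondecreasing) yields the correct sign. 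The zeroth-order term $r(U-V)\ge rm > 0$ then produces $0 \ge rm > 0$ after sending $\varepsilon\to 0$ (and the localization parameters to their limits), the desired contradiction.

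\textbf{Main obstacle.} The hard part is Step (4): the Crandall--Ishii machinery is normally stated for finite-dimensional domains, whereas here the state space is a subset of $\mathbb{R}\times\mathbb{Z}\times L^1\times L^1$ and $\mathcal{E}^K$ is a curved, time-dependent, merely locally compact set. The saving grace — and the reason the argument goes through — is that the doubling is performed only in the genuinely ``diffusive'' finite-dimensional coordinates $(t,p)$, while the infinite-dimensional $L^1$-coordinate $z$ is never doubled and enters only through the non-local operator $D^K$, for which the maximum-principle comparison $D^KU(\bar x)\le D^KV(\bar x)$ is elementary. Thus the infinite-dimensionality is essentially inert in the comparison proof, and the real work is the careful bookkeeping of the localization/penalization so that all maximizers stay in a compact set and no boundary effects appear (handled via Lemma \ref{lemma:topology_a} and deferred to Appendix \ref{appendix:crandall}). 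Finally, the extension from bounded to polynomially growing sub/super-solutions is done by the usual trick: given $U,V$ with polynomial growth, apply the bounded comparison to $U_\chi = U - \chi\,\psi$ and $V_\chi = V + \chi\,\psi$ for a suitable smooth ``Lyapunov'' function $\psi$ with $\psi(x)\to\infty$ as $\|x\|\to\infty$ and $F(x,\psi,\nabla\psi,\partial^2_{pp}\psi,D^K\psi)$ controlled, then let $\chi\downarrow 0$.
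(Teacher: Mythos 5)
Your overall strategy (contradiction, penalization by $\|x\|^2$ to exploit compactness of $\mathcal{E}^K_R$, doubling of variables, the adapted Crandall--Ishii lemma, monotonicity of the Hamiltonian in the non-local argument, Lipschitz estimate on $d$, and the Lyapunov-function trick for the polynomial-growth extension) is the same as the paper's. However, there is a genuine gap at your steps (2) and (4): you propose to double only $(t,p)$, keep the variable $z=(i,\theta^a,\theta^b)$ common to $U$ and $V$, and then claim that jets with $h\equiv 0$ are admissible ``since $z$ is not doubled.'' This does not work with the paper's definition of (semi-)jets: an element of $\mathcal{J}^{+}U(t,p,z)$ requires a \emph{continuous} function $h$ with $h(0)=0$ such that $U(s,q,v)\leq U(t,p,z)+g_1(t-s)+g_2(p-q)+\tfrac12 A(p-q)^2+h(z-v)+o(\cdot)$ for \emph{all} admissible $(s,q,v)$, i.e.\ the $z$-increments of $U$ alone must be controlled by a continuous modulus. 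If you only maximize $U(t,p,z)-V(s,q,z)-\mathrm{pen}(t,p,s,q)$ over a common $z$, the best you can extract is $U(t,p,z)\leq V(\bar s,\bar q,z)+\cdots$, and $z\mapsto V(\bar s,\bar q,z)$ is merely LSC, so it furnishes no admissible $h$; taking $h\equiv 0$ would amount to asserting that $U$ is maximal in $z$ at $\bar z$, which is false in general for a maximizer of the \emph{difference}. The paper avoids this precisely by doubling the full spatial variable $x=(p,i,\theta^a,\theta^b)$ with the penalization $\varepsilon(\|x\|^2+\|y\|^2)+\alpha\|x-y\|^2$ (time is kept common), and its adapted Crandall--Ishii lemma explicitly carries a continuous penalization $\phi_3$ of the doubled $z$-variables whose modulus of continuity becomes the jet component $h$; the non-local comparison $D^KV\geq D^K(U-2\varepsilon\|\cdot\|^2)$ is then performed only after the limit $\alpha\to+\infty$, at the common limit point, together with the bound $\|D^K\|\cdot\|^2(t,x)\|\leq C(1+\|x\|)$. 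So the infinite-dimensional components are not ``inert'': they must be doubled, and it is the local compactness of $\Theta^K_T$ (Lemma 2.1) that makes this doubling feasible.

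Two smaller points. First, your stated monotonicity is reversed: $I_j\mapsto -\sup_{\delta\geq 0}\Phi(\cdot)e^{-\frac{k}{\sigma}\delta}(\delta+I_j)$ is non\emph{increasing} (the supremum itself is nondecreasing in $I_j$); the comparison argument needs exactly this decreasing monotonicity of the Hamiltonian in the non-local argument, combined with $D^K U\leq D^K V$ up to the penalization correction, so the sign bookkeeping in your step (5) should be redone with the correct direction. Second, the inequality $D^KU(\bar x)\leq D^KV(\bar x)$ only holds modulo the $\varepsilon\|\cdot\|^2$ penalization; one must keep the correction term $2\varepsilon D^K\|\cdot\|^2$ and check it vanishes (using $\varepsilon\|x_\varepsilon\|^2\to 0$), as the paper does.
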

\begin{proof}
	We suppose that there exists some $(t_0, x_0)\in \mathcal{E}^K$ such that 
	$$
	U(t_0, x_0) - V(t_0, x_0) = \delta >0.
	$$
	By hypothesis, necessarily $t_0\in [0, T)$. We show that this implies a contradiction. We consider the following quantities
	$$
	N_{\varepsilon} = \underset{(t,x) \in \mathcal{E}^K}{\sup}~ U(t,x) - V(t,x)  - 2\varepsilon\|x\|^2 
	$$
	and
	$$
N^{\alpha}_{\varepsilon} =  \underset{(t,x),(t, y)\in \mathcal{E}^K}{\sup} U(t,x) - V(t, y) -  \varepsilon(\|x\|^{2} + \|y\|^2) - \alpha \|x-y\|^2.
	$$
	The functions $U$ and $-V$ being bounded from above we have
	$$
	\underset{\|x\| + \|y\| \rightarrow +\infty}{\lim}~U(t, x) -V(t, y) - \alpha \|x-y\|^2 - \varepsilon\|x\|^2 - \varepsilon\|y\|^2 = -\infty
	$$
	uniformly in $t$. Thus we can restrict the supremums to bounded sets that depends only on $\varepsilon$. More precisely
	\begin{eqnarray}
		N_{\varepsilon} &=& \underset{(t,x)\in \mathcal{E}^K_{R}}{\sup}~ U(t,x) - V(t,x) - 2\varepsilon\|x\|^2\\
		N^{\alpha}_{\varepsilon} &=& \underset{(t,x),(t,y)\in  \mathcal{E}^K_R}{\sup}~ U(t,x) - V(t,y) - \varepsilon(\|x\|^2 + \|y\|^2) -  \alpha \|x-y\|^2
	\end{eqnarray}
	where $R$ only depends on $\varepsilon$. Since the set $\mathcal{E}^K_{R}$ is compact the supremum $N_{\varepsilon}^{\alpha}$ is achieved at some $(t_{\varepsilon}^{\alpha}, x_{\varepsilon}^{\alpha}, y_{\varepsilon}^{\alpha})$. We show at the end of the proof that when $\alpha \rightarrow + \infty$, up to a subsequence, we have
	\begin{equation}
	\label{eq:statement_c}
	\underset{\alpha \rightarrow +\infty}{\lim}(t_{\varepsilon}^{\alpha},x_{\varepsilon}^{\alpha}, y_{\varepsilon}^{\alpha}) = (t_{\varepsilon}, x_{\varepsilon}, x_{\varepsilon})	
	\end{equation}
	where $(t_{\varepsilon}, x_{\varepsilon})$ achieves the supremum $N_{\varepsilon}$. We also prove that
	\begin{equation}
	\label{eq:statement_a}
	\underset{\alpha\rightarrow + \infty}{\lim}~\alpha  \|x_{\varepsilon}^{\alpha}-y_{\varepsilon}^{\alpha}\|^2 = 0 \text{, }\underset{\alpha \rightarrow 0}{\lim}~N_{\varepsilon}^{\alpha} = N_{\varepsilon} ,~\underset{\varepsilon\rightarrow 0}{\lim}~\varepsilon\|x_{\varepsilon}\|^2 = 0
	\end{equation}
	and that
	\begin{equation}
	\label{eq:statement_b}
	\underset{\varepsilon\rightarrow 0}{\lim}~N_{\varepsilon} = N = \underset{(t,x)\in  \mathcal{E}^K}{\sup}~ U(t,x) - V(t,x).	
	\end{equation}
	A consequence of Equation \eqref{eq:statement_b} is that
	\begin{equation}
	\label{eq:proof_comparison_a}
	\underset{\alpha \rightarrow +\infty}{\lim }\big( U(t^{\alpha}_{\varepsilon}, x^{\alpha}_{\varepsilon}), V(t^{\alpha}_{\varepsilon}, x^{\alpha}_{\varepsilon}) \big) = \big( U(t_{\varepsilon}, x_{\varepsilon}), V(t_{\varepsilon}, x_{\varepsilon})\big).
	\end{equation}
	We use the notations $x_{\varepsilon}^{\alpha} = ( P_{\varepsilon}^{\alpha}, i_{\varepsilon}^{\alpha}, \theta_{\varepsilon}^{a, \alpha}, \theta_{\varepsilon}^{b, \alpha})\text{ and }	y_{\varepsilon}^{\alpha} = ( Q_{\varepsilon}^{\alpha}, j_{\varepsilon}^{\alpha}, \beta_{\varepsilon}^{a, \alpha}, \beta_{\varepsilon}^{b, \alpha})$. \\

	With respect to Lemma \ref{lemma:crandall_ishi}, which is an adaptation of the Crandall-Ishi's lemma to our framework, for any $\beta>0$ there exists $\big((\lambda^{\alpha}_{\varepsilon}, p^{\alpha}_{\varepsilon}), A^{\beta, \alpha}_{\varepsilon}, h) \in \overline{\mathcal{J}}^{+}U(t^{\alpha}_{\varepsilon}, x^{\alpha}_{\varepsilon} \big) $ and $\big((\hat{\lambda}^{\alpha}_{\varepsilon}, q^{\alpha}_{\varepsilon}), B^{\beta, \alpha}_{\varepsilon}, g) \in \overline{\mathcal{J}}^{-}V(t^{\alpha}_{\varepsilon}, y^{\alpha}_{\varepsilon}\big)$ such that 
	$$
	-(\beta^{-1} + 2\varepsilon + 4\alpha )I_2 \leq  \begin{pmatrix}A^{\beta, \alpha}_{\varepsilon} & 0\\ 0& -B^{\beta, \alpha}_{\varepsilon}\end{pmatrix} \leq (2\varepsilon+\beta 4\varepsilon^2)I_2  + \big(2\alpha + 8\beta  (\alpha\varepsilon + \alpha^2)\big) \begin{pmatrix}
	1 & -1 \\ -1 & 1
	\end{pmatrix}.
	$$
	with
	$$
	p^{\alpha}_{\varepsilon} =  2\varepsilon P^{\alpha}_{\varepsilon} + 2\alpha(P^{\alpha}_{\varepsilon} -Q^{\alpha}_{\varepsilon}),~q^{\alpha}_{\varepsilon} =  -2\varepsilon Q^{\alpha}_{\varepsilon} - 2\alpha(Q^{\alpha}_{\varepsilon} - P^{\alpha}_{\varepsilon}),~ \lambda^{\alpha}_{\varepsilon} = 0\text{ and }\hat{\lambda}^{\alpha}_{\varepsilon} = 0.
	$$
	Remark that for $\varepsilon$ small enough
	$$
U(t^{\alpha}_{\varepsilon}, x^{\alpha}_{\varepsilon}) - V(t^{\alpha}_{\varepsilon}, y^{\alpha}_{\varepsilon})\geq \delta - \varepsilon\|x_0\|^2> \frac{\delta}{2}
	$$
	We now walk towards a contradiction by showing that
	$$
	\underset{\varepsilon \rightarrow 0}{\lim \sup}~\underset{\alpha\rightarrow + \infty}{\lim \sup}~U(t^{\alpha}_{\varepsilon}, x^{\alpha}_{\varepsilon}) - V(t^{\alpha}_{\varepsilon}, y^{\alpha}_{\varepsilon}) \leq 0.
	$$
	According to the definition of sub-solution and super-solution we have
	$$
	F\big(t_{\varepsilon}^{\alpha},x_{\varepsilon}^{\alpha}, U(t_{\varepsilon}^{\alpha}, x_{\varepsilon}^{\alpha}),(\lambda^{\alpha}_{\varepsilon}, p^{\alpha}_{\varepsilon}), A^{\beta, \alpha}_{\varepsilon}, D^KU(t_{\varepsilon}^{\alpha},x_{\varepsilon}^{\alpha})\big) \leq 0
	$$
	and
	$$
	F\big(t_{\varepsilon}^{\alpha},y_{\varepsilon}^{\alpha}, V(t_{\varepsilon}^{\alpha},y_{\varepsilon}^{\alpha}),(\hat{\lambda}^{\alpha}_{\varepsilon}, q^{\alpha}_{\varepsilon}) , B^{\beta, \alpha}_{\varepsilon}, D^KV( t_{\varepsilon}^{\alpha},y_{\varepsilon}^{\alpha})\big) \geq 0.
	$$
	By definition of $F$:
	\begin{align*}
	r\big( U(t_{\varepsilon}^{\alpha}, x^{\alpha}_{\varepsilon}) - V(t_{\varepsilon}^{\alpha}, y^{\alpha}_{\varepsilon})\big) &\leq F\big(t_{\varepsilon}^{\alpha}, x^{\alpha}_{\varepsilon}, U(t_{\varepsilon}^{\alpha}, x^{\alpha}_{\varepsilon}), (\lambda^{\alpha}_{\varepsilon}, p^{\alpha}_{\varepsilon}), A^{\beta, \alpha}_{\varepsilon}, D^KU(t_{\varepsilon}^{\alpha}, x^{\alpha}_{\varepsilon})\big)\\
	&~~~~~~ - F\big(t_{\varepsilon}^{\alpha},x^{\alpha}_{\varepsilon}, V(t_{\varepsilon}^{\alpha},y^{\alpha}_{\varepsilon}), (\lambda^{\alpha}_{\varepsilon}, p^{\alpha}_{\varepsilon}), A^{\beta, \alpha}_{\varepsilon}, D^KU(t_{\varepsilon}^{\alpha},x^{\alpha}_{\varepsilon})\big),	
	\end{align*}
	thus
	\begin{align*}
	r\big(U(t_{\varepsilon}^{\alpha},x^{\alpha}_{\varepsilon}) - V(t_{\varepsilon}^{\alpha},y^{\alpha}_{\varepsilon})\big) &\leq F(t_{\varepsilon}^{\alpha},y^{\alpha}_{\varepsilon}, V(t_{\varepsilon}^{\alpha}, y^{\alpha}_{\varepsilon}), \hat{\lambda}^{\alpha}_{\varepsilon}, q^{\alpha}_{\varepsilon}, B^{\beta, \alpha}_{\varepsilon}, D^KV(t_{\varepsilon}^{\alpha},y^{\alpha}_{\varepsilon}))\\
	&~~~~~~ - F(t_{\varepsilon}^{\alpha}, x^{\alpha}_{\varepsilon}, V(t_{\varepsilon}^{\alpha},y^{\alpha}_{\varepsilon}), \lambda^{\alpha}_{\varepsilon}, p^{\alpha}_{\varepsilon}, A^{\beta, \alpha}_{\varepsilon}, D^KU(t_{\varepsilon}^{\alpha},x^{\alpha}_{\varepsilon}))\\
	&\leq d(t_{\varepsilon}^{\alpha}, P^{\alpha}_{\varepsilon})p^{\alpha}_{\varepsilon} -  d(t_{\varepsilon}^{\alpha}, Q^{\alpha}_{\varepsilon})q^{\alpha}_{\varepsilon} + \frac{1}{2} \sigma^2 A^{\alpha}_{\varepsilon} - \frac{1}{2} \sigma^2B^{\alpha}_{\varepsilon}\\
	&~~~~~~ +  H(t^{\alpha}_{\varepsilon},y^{\alpha}_{\varepsilon},D^KV(t^{\alpha}_{\varepsilon}, y^{\alpha}_{\varepsilon})) - H(t^{\alpha}_{\varepsilon},x^{\alpha}_{\varepsilon},D^KU(t^{\alpha}_{\varepsilon}, x^{\alpha}_{\varepsilon}))
	\end{align*}
	where
	$$
	H(t,x, I) = - \underset{\delta\in \mathbb{R}_+}{\sup} \Phi\big(\theta^a(t)\big)e^{-\frac{\sigma}{k}\delta}(\delta + I_1) - \underset{\delta \in \mathbb{R}_+}{\sup}  \Phi\big(\theta^b(t)\big)e^{-\frac{\sigma}{k}\delta}(\delta + I_2) + g(i, p).
	$$
	Note that the function $H$ is Lipschitz continuous. Taking $\beta = \alpha^{-1}$	we get  	
	$$
	\sigma^{2} A^{\beta, \alpha}_{\varepsilon} - \sigma^2 B^{\beta, \alpha}_{\varepsilon} \leq2 (2\varepsilon+\alpha^{-1} 4\varepsilon^2)\sigma^{2} .
	$$
	The RHS can be taken arbitrarly small when $\alpha \rightarrow + \infty$  and $\varepsilon\rightarrow 0$ by Equation \eqref{eq:statement_a}. Using the Lipschitz property of $d$ we have
	\begin{align*}
	d(t_{\varepsilon}^{\alpha}, P^{\alpha}_{\varepsilon})p^{\alpha}_{\varepsilon} - d(t_{\varepsilon}^{\alpha}, Q^{\alpha}_{\varepsilon})q^{\alpha}_{\varepsilon}  &= 2\varepsilon \big( d(t_{\varepsilon}^{\alpha}, P^{\alpha}_{\varepsilon})P^{\alpha}_{\varepsilon} + d(t_{\varepsilon}^{\alpha}, Q^{\alpha}_{\varepsilon})Q^{\alpha}_{\varepsilon}\big)+ 2\alpha (P^{\alpha}_{\varepsilon}-Q^{\alpha}_{\varepsilon})\big( d(t_{\varepsilon}^{\alpha}, P^{\alpha}_{\varepsilon}) - d(t_{\varepsilon}^{\alpha}, q^{\alpha}_{\varepsilon}) \big)\\
	&\leq 2\varepsilon C(1 + \|y^{\alpha}_{\varepsilon}\|^2 + \|x^{\alpha}_{\varepsilon}\|^2) + C\alpha \|x^{\alpha}_{\varepsilon} - y^{\alpha}_{\varepsilon}\|^2.
	\end{align*}
		Here the RHS goes to zero when $\alpha \rightarrow + \infty$ and $\varepsilon\rightarrow 0$ due to Equation \eqref{eq:statement_a}. Finally, by Equation \eqref{eq:proof_comparison_a} and since $U$ (resp. $V$) is a USC (resp. LSC) function and $H$ is continuous and decreasing with respect to its last variable we have 
	$$
	\underset{\alpha \rightarrow + \infty}{\lim \sup }~H(t_{\varepsilon}^{\alpha}, y_{\varepsilon}^{\alpha}, D^KV(t_{\varepsilon}^{\alpha}, y_{\varepsilon}^{\alpha})) - H(t_{\varepsilon}^{\alpha}, x_{\varepsilon}^{\alpha}, D^KU(t_{\varepsilon}^{\alpha}, x_{\varepsilon}^{\alpha}))  \leq H\big(t_{\varepsilon}, x_{\varepsilon}, D^KV(t_{\varepsilon}, x_{\varepsilon})  \big) - H\big(t_{\varepsilon}, x_{\varepsilon}, D^KU(t_{\varepsilon}, x_{\varepsilon}) \big).
	$$
Remark that for any $z$ such that $(t_{\varepsilon}, x_{\varepsilon} + z)\in \mathcal{E}^K$ we have by definition of $(t_{\varepsilon}, x_{\varepsilon})$
	$$
	U(t_{\varepsilon},x_{\varepsilon}) - V(t_{\varepsilon}, x_{\varepsilon}) -2\varepsilon\|x_{\varepsilon}\|^2  \geq 	U(t_{\varepsilon},x_{\varepsilon}+z) - V(t_{\varepsilon}, x_{\varepsilon}+z)  -2\varepsilon\|x_{\varepsilon} + z\|^2.
	$$
	Consequently we have
	$$	V(t_{\varepsilon},x_{\varepsilon}+z) - V(t_{\varepsilon}, x_{\varepsilon}) \geq U(t_{\varepsilon}, x_{\varepsilon}+z) - U(t_{\varepsilon}, x_{\varepsilon}) -2 \varepsilon\big( \|x_{\varepsilon} + z\|^2 - \|x_{\varepsilon}\|^2  \big)
	$$
	and so
	$$
	D^KV(t_{\varepsilon}, x_{\varepsilon}) \leq D^K(U - 2\varepsilon\|\cdot\|)(t_{\varepsilon}, x_{\varepsilon}).
	$$
	The monotony and Lipschitz regularity of $H$ implies
	\begin{align*}
	\underset{\alpha \rightarrow + \infty}{\lim \sup }~&H(t_{\varepsilon}^{\alpha}, y_{\varepsilon}^{\alpha}, D^KV(t_{\varepsilon}^{\alpha}, y_{\varepsilon}^{\alpha})) - H(t_{\varepsilon}^{\alpha}, x_{\varepsilon}^{\alpha}, D^KU(t_{\varepsilon}^{\alpha}, x_{\varepsilon}^{\alpha}))\\
	 &~~~\leq H\big(x_{\varepsilon}, D^K(U - 2\varepsilon\|\cdot\|)(t_{\varepsilon}, x_{\varepsilon}) \big) - H\big(x_{\varepsilon}, D^KU(t_{\varepsilon}, x_{\varepsilon}) \big)\\
	&~~~\leq C \varepsilon \|x_{\varepsilon}\| \Big\| D^K\|\cdot \|^2(t_{\varepsilon},x_{\varepsilon})\Big\|.
	\end{align*}

	Notice that for any $x\in \mathcal{E}^K$
	$$
	\begin{array}{ll}
	D^K \|\cdot\|^2 (x) & = \begin{pmatrix}
	\|\theta^a + K(\cdot - t)\|^2_1 - \|\theta^a\|^2_1 + |i + 1|^2 - |i|^2\\ \|\theta^b + K(\cdot - t)\|^2_1 - \|\theta^b\|^2_1 + |i - 1|^2 - |i|^2  
	\end{pmatrix}\\
	& = \begin{pmatrix}
	\|K(\cdot - t)\|^2_1 + 2 \|K(\cdot - t)\|_1\|\theta^a\|_1 +  1 + 2i \\ \|K(\cdot - t)\|^2_1 + 2 \|K(\cdot - t)\|_1\|\theta^b\|_1 +  1 - 2i  
	\end{pmatrix},
	\end{array}
	$$
	thus there exists $C>0$ such that $ \big{\|} D^K \|\cdot\|^2 (t, x) \big{\|} \leq C (1 + \|x\|)$. Consequently we get
	\begin{align*}
	\underset{\alpha \rightarrow + \infty}{\lim \sup }~H(t_{\varepsilon}^{\alpha}, y_{\varepsilon}^{\alpha}, D^KV(t_{\varepsilon}^{\alpha}, y_{\varepsilon}^{\alpha})) - H(t_{\varepsilon}^{\alpha}, x_{\varepsilon}^{\alpha}, D^KU(t_{\varepsilon}^{\alpha}, x_{\varepsilon}^{\alpha})) \leq C\varepsilon (1 + \|x_{\varepsilon}\|^2)  	\end{align*}
	that goes to zero when taking the limit $\varepsilon\rightarrow 0$. Finally we have shown that
	\begin{equation*}	\underset{\varepsilon\rightarrow 0}{\lim \sup}~ \underset{\alpha \rightarrow + \infty}{\lim \sup}~ U(t^{\alpha}_{\varepsilon},x^{\alpha}_{\varepsilon}) - V(t^{\alpha}_{\varepsilon},y^{\alpha}_{\varepsilon})\leq 0,
	\end{equation*}
	which is a contradiction.\\
	
	We finally prove the statements \eqref{eq:statement_c}, \eqref{eq:statement_a} and \eqref{eq:statement_b}. We consider $(t_{\varepsilon}, x_{\varepsilon}, y_{\varepsilon})\in \overline{(t^{\alpha}_{\varepsilon}, x^{\alpha}_{\varepsilon}, y^{\alpha}_{\varepsilon})_{\alpha \geq 0}}$ that exists since $\mathcal{E}^K$ is compact. Since $N^{\alpha}_{\varepsilon} \geq N_{\varepsilon}$ then necessarily $x_{\varepsilon} = y_{\varepsilon}$. We now prove the first limit of \eqref{eq:statement_a} and that $(t_{\varepsilon}, x_{\varepsilon})$ corresponds to a point where the supremum $N_{\varepsilon}$ is achieved. Passing to the lower limit we get
	$$
	U(t_{\varepsilon}, x_{\varepsilon}) - V(t_{\varepsilon}, x_{\varepsilon}) - 2\varepsilon \|x_{\varepsilon}\|^2 - \underset{\alpha\rightarrow + \infty}{\lim \sup}~ \alpha \|x^{\alpha}_{\varepsilon} - y^{\alpha}_{\varepsilon}\|^2 \geq N_{\varepsilon}.
	$$
	Hence by definition of $N_{\varepsilon}$ we necessarily have that $	\underset{\alpha\rightarrow + \infty}{\lim } ~\alpha \|x^{\alpha}_{\varepsilon} - y^{\alpha}_{\varepsilon}\|^2 = 0 $
	and that 
	$$
	N_{\varepsilon} = U(t_{\varepsilon}, x_{\varepsilon}) - V(t_{\varepsilon}, x_{\varepsilon}) - 2\varepsilon \|x_{\varepsilon}\|^2.
	$$
	To conclude we show that $N_{\varepsilon} \rightarrow N$ and that $\varepsilon \|x_{\varepsilon}\|^2 \rightarrow 0$. For $\xi>0$ consider $(t, x)$ that is $\xi$-optimal in the definition of $N$:
	$$
	U(t, x) - V(t, x) \geq N - \xi.
	$$
	For $ \varepsilon $ small enough $2\varepsilon \|x\|^2$ is lower than $\xi$, and we get
	$$
	N \geq N_{\varepsilon} \geq U(t, x) - V(t, x) - 2\varepsilon \|x\|^2 \geq N - 2\xi.
	$$
	Therefore we get convergence of $N_{\varepsilon}$ towards $N$ and as consequence
	$$
	U(t_{\varepsilon}, x_{\varepsilon}) - V(t_{\varepsilon}, x_{\varepsilon}) - 2\varepsilon \|x_{\varepsilon}\|^2 \rightarrow N .
	$$
	Since for any $\varepsilon$ we have
	$$
	N \geq U(t_{\varepsilon}, x_{\varepsilon}) - V(t_{\varepsilon}, x_{\varepsilon}) \geq  U(t_{\varepsilon}, x_{\varepsilon}) - V(t_{\varepsilon}, x_{\varepsilon}) - 2\varepsilon \|x_{\varepsilon}\|^2 = N_{\varepsilon},
	$$
	we get that $\varepsilon \|x_{\varepsilon}\|^2 \rightarrow 0$. This concludes the proof.
	
	\end{proof}

We now extend Proposition \ref{prop:comparison_bounded} to the case of functions with polynomial growth.

\begin{property}
\label{prop:comparison_polynomial}
Let $U\in USC( \mathcal{E}^K)$ with polynomial growth be a viscosity sub-solution of Equation $(\mathbf{HJB})_K$ and $V\in LSC( \mathcal{E}^K)$ with polynomial growth be a viscosity super-solution of Equation $(HJB)_K$ such that $U(T, \cdot)\leq V(T, \cdot)$. Then 
$$
U\leq V \text{ on } \mathcal{E}^K.
$$
\end{property}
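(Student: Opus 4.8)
The plan is to reduce the polynomial‑growth case to the bounded case (Proposition \ref{prop:comparison_bounded}) by a weight/perturbation argument. Fix an integer $m$ such that $2m$ is strictly larger than the growth exponents of both $U$ and $V$, let $\mu>0$ be a constant to be fixed later, and introduce
$$
\Psi(t,x) = e^{\mu(T-t)}\big(1+\|x\|^2\big)^m .
$$
This $\Psi$ is nonnegative, it lies in the class of test functions (it is $C^\infty$ in $(t,p)$ and continuous in $(\theta^a,\theta^b)$ since $\theta\mapsto\|\theta\|_1$ is continuous), and it has polynomial growth of degree $2m$. Since $e^{\mu(T-t)}\geq1$, for every $\eta>0$ the function $U^\eta := U-\eta\Psi$ satisfies $U^\eta(t,x)\leq C(1+\|x\|^p)-\eta(1+\|x\|^2)^m$ uniformly in $t$, hence tends to $-\infty$ as $\|x\|\to+\infty$ and is bounded from above on $\mathcal{E}^K$; symmetrically $V^\eta := V+\eta\Psi$ is bounded from below. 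Moreover $U^\eta(T,\cdot)\leq U(T,\cdot)\leq V(T,\cdot)\leq V^\eta(T,\cdot)$.

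The core step would be to check that, for $\mu$ chosen large enough \emph{independently of} $\eta$, $U^\eta$ is a viscosity sub-solution and $V^\eta$ a viscosity super-solution of $(\mathbf{HJB})_K$. Write $F(x,u,q,A,I) = ru - q_1 - d(t,p)q_2 - \frac{1}{2}\sigma^2 A + \mathcal{G}(x,I)$, where $\mathcal{G}(x,I) = -g(i,p) - \sup_{\delta\geq0}\Phi(\theta^a(t))e^{-\frac{k}{\sigma}\delta}(\delta+I_1) - \sup_{\delta\geq0}\Phi(\theta^b(t))e^{-\frac{k}{\sigma}\delta}(\delta+I_2)$; note that $I\mapsto\mathcal{G}(x,I)$ is nonincreasing and Lipschitz with constant at most $\Phi(\theta^a(t))+\Phi(\theta^b(t))\leq C(1+\|x\|)$ (as used in Section~\ref{sec:formal_definition}). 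If $\phi$ is a test function with a local maximum of $U^\eta-\phi$ at $x$, then $\phi+\eta\Psi$ is a test function with a local maximum of $U-(\phi+\eta\Psi)$ at $x$; applying the sub-solution inequality for $U$ to $\phi+\eta\Psi$ and rearranging (using that $F$ is affine in $(u,q,A)$) gives
$$
F\big(x,\phi(x),\nabla\phi(x),\partial^2_{pp}\phi(x),D^KU^\eta(x)\big)\;\leq\; -\eta\,\mathcal{L}^\star\Psi(x) \;+\; \big[\mathcal{G}(x,D^KU^\eta(x))-\mathcal{G}(x,D^KU(x))\big],
$$
where $\mathcal{L}^\star\Psi := r\Psi - \partial_t\Psi - \mathcal{L}^P\Psi$. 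A direct computation using the Lipschitz growth of $d$ and $p^2\leq\|x\|^2$ yields $\mathcal{L}^\star\Psi(x)\geq (r+\mu-C_1)(1+\|x\|^2)^m$ with $C_1$ independent of $\mu$; and the elementary bound $\big|\|x^{+j}\|^2-\|x\|^2\big|\leq C(1+\|x\|)$ for $j=a,b$ together with the mean value theorem gives $\|D^K\Psi(x)\|\leq C(1+\|x\|^2)^{m-1}(1+\|x\|)$, so the bracketed term is at most $C_2\,\eta(1+\|x\|^2)^m$ with $C_2$ independent of $\mu$. Choosing $\mu\geq C_1+C_2-r$ makes the right-hand side nonpositive; hence $U^\eta$ is a sub-solution, and the symmetric computation shows $V^\eta$ is a super-solution of $(\mathbf{HJB})_K$ for the same $\mu$.

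Finally, Proposition~\ref{prop:comparison_bounded} applied to the bounded-above sub-solution $U^\eta$ and the bounded-below super-solution $V^\eta$ (which agree in the right order at $t=T$) gives $U^\eta\leq V^\eta$ on $\mathcal{E}^K$, i.e. $U-V\leq 2\eta\Psi$; letting $\eta\to0$ pointwise yields $U\leq V$ on $\mathcal{E}^K$. I expect the delicate point to be the non-local term: subtracting $\eta\Psi$ from $U$ perturbs $D^KU$ by $\eta D^K\Psi$, which feeds into $F$ an error of the \emph{same} polynomial order $(1+\|x\|^2)^m$ as the favorable quantity $-\eta\mathcal{L}^\star\Psi$ coming from the local part; this is exactly why the time prefactor $e^{\mu(T-t)}$ with $\mu$ large is needed, as it inflates the coefficient in front of $(1+\|x\|^2)^m$ in $\mathcal{L}^\star\Psi$ to $r+\mu$, which can then be made to dominate the fixed constants controlling $D^K\Psi$ and the Lipschitz modulus of $\mathcal{G}$.
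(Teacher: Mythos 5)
Your proposal is correct and takes essentially the same route as the paper: the paper perturbs by $\varepsilon w$ with $w(t,x)=e^{M(T-t)}(1+\|x\|^{2k})$, verifies that $U-\varepsilon w$ and $V+\varepsilon w$ are bounded sub-/super-solutions by the same estimates you use (the local operator applied to the weight absorbs a large multiple of $w$ for $M$ large, while the nonlocal perturbation is controlled via $\|x\|\,\|D^K w\|\leq C w$ and the Lipschitz monotone Hamiltonian), then applies Proposition \ref{prop:comparison_bounded} and lets $\varepsilon\to 0$. Your write-up only drops the factor $e^{\mu(T-t)}$ in the bound on the bracketed nonlocal error term, which should be kept (as in the paper's $\|x\|\|D^Kw\|\leq Cw$) so that both competing terms carry the same exponential prefactor; the conclusion is unchanged.
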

\begin{proof}
	There exists $k>0$ such that
	$$
	\underset{\|x\|\rightarrow+\infty}{\lim } ~ \frac{|U(t, x)|+|V(t, x)|}{1 + \|x\|^k} = 0.
	$$
	We introduce the following function
	$$
	w(t, x) = e^{M(T-t)}(1 + \|x\|^{2 k}),
	$$
	where $M$ is a positive constant. We have
	$$
	D^Kw(t, x) = e^{M(T-t)} \begin{pmatrix}
	P^{11}_{2k - 1}(\|\theta^a\|_1) & P^{12}_{2k - 1}(i) \\ P^{21}_{2k - 1}(\|\theta^b\|_1) & P^{22}_{2k - 1}(i)
	\end{pmatrix}
	$$
	with $(P^{ij}_{2k - 1})_{i, j\in \{1, 2\}}$ polynomials with degree $2k-1$. Consequently for some $C>0$
	$$
	\|x\| \|D^Kw(t, x)\| \leq Cw(t, x).
	$$
	We have
	$$
	\sigma^2 \partial^2_P w(t, x) \leq  C( 1 + \|x\|^2) e^{M(T-t)} Q_{2k - 2}(\|x\|) \leq C w(t, x)
	$$
	and
	$$
	d(t, x) \partial_P w(t, x) \leq e^{M(T - t)}C(1 + \|x\|) Q_{2k-1}(\|x\|) \leq C w(t, x)
	$$
	where $Q_{2k-2}$ and $Q_{2k-1}$ are two polynomials with respective degree $2k-2$ and $2k-1$. Consequently for any constant $B$
	$$
	-\partial_t w(t, x) - d(t, x) \partial_P w(t, x) - \frac{1}{2}\sigma^2 \partial^2_P w(t, x) - B\|x\|\|D^Kw(t, x)\| \geq w(t, x)(M - C) 
	$$
	which is positive for $M$ large enough. Hence for any $\varepsilon>0$ the function $U - \varepsilon w$ is a bounded from above viscosity sub-solution of Equation $(\mathbf{HJB})_K$. Indeed if $U - \varepsilon w < \phi$ then $U<\phi + \varepsilon w $ consequently
	$$
	F\big(t, x, U(t, x), \nabla (\phi +\varepsilon w)(t, x), \partial^2_{pp}(\phi +\varepsilon w)(t, x), D^KU(t, x)\big)\leq 0.
	$$
	We have for $M$ large enough
	\begin{align*}
	 F\big(&t, x, U(t, x)-\varepsilon w(t, x), \nabla \phi (t, x), \partial_{pp}^2 \phi(t, x), D^K(U-\varepsilon w)(t, x) \big)\\
 -& F\big(t, x, U(t, x), (\nabla(\phi +\varepsilon w)(t, x), \partial^2_{pp} (\phi+\varepsilon w )(t, x),  D^KU(t, x)\big)\\
	 &\leq   \varepsilon \big( -rw(t,x) +  (\partial_t  + d \partial_P  + \frac{1}{2}\sigma^2 \partial_{pp}^2)w(t, x) + C\|x\|  \|D^K w(t, x)\| \big) \\
	 & <  0.
	\end{align*}
	It implies that
	$$
	F\big(t, x, U(t, x)-\varepsilon w(t, x), (\partial_t \phi (t, x), \partial_p \phi (t, x)),\partial_{pp}^2 \phi(t, x), D^K(U-\varepsilon w)(t, x) \big) \leq 0.
	$$
	We show in the same way that $V+\varepsilon w$ is a bounded from below viscosity super-solution. Then from Proposition \ref{prop:comparison_bounded} we have
	$$
	U-\varepsilon w \leq V + \varepsilon w
	$$
	and taking $\varepsilon$ to $0$ we get the stated result.
	\end{proof}
	
An immediate consequence from Proposition \ref{prop:comparison_polynomial} is that there exists a unique viscosity solution with polynomial growth to $(\mathbf{HJB})_K$. We now prove the existence of such solution using a verification argument.

\subsubsection{Definition of the continuation utility function}
\label{proof:utility}

For $(t, x)\in \mathcal{E}^K$ and $\delta \in \mathcal{A}$ we define
$$
J^K(t, x;\delta) = \mathbb{E}_{t, x}^{\delta}[G(i^{t, x}_T, P^{t, x}_T) e^{-r(T-t)} + \int_t^T e^{-r(s-t)}  \tilde{g}(s, X^{t, x}_s, \delta_s) \mathrm{d}s]
$$
where
$$
\tilde{g}(s, x, \delta) = g(i, P) + \delta^a \lambda^{a}(s,x, \delta)+ \delta^b \lambda^{b}(s,x, \delta).
$$
We also define
\begin{equation}
\label{eq:utility}
U^K(t, x) = \underset{ \delta \in \mathcal{A} }{\sup}~J^K(t, x;\delta)
\end{equation}
that is the maximal utility than can expect a market maker starting its trading from time $t$ with initial market condition given by $x$. By Lemma \ref{lemma:apriori_estimates} we get that $U^K$ has polynomial growth. More precisely there exists a positive constant $\kappa$ such that
$$
U^K(t, x) \leq \kappa(1 + \|x\|^2).
$$
We also define
$$
\mathcal{A}_t = \{\delta \in \mathcal{A}\text{ s.t. }\delta \text{ is independent of }\mathcal{F}_t\},
$$
the set of controls starting from $t$ and independent from the past. Since under $\mathbb{P}_0$ the processes $N^a$ and $N^b$ have independent increments, using the same arguments than in Remark 2.2-$(iv)$ in \cite{touzi2012optimal} we get 
$$
U^K(t, x) = \underset{ \delta \in \mathcal{A}_t }{\sup}J^K(t, x;\delta).
$$

In the next sections we show that the function $U^K$ is the unique viscosity solution with polynomial growth to $(\mathbf{HJB})_K$ . For this we prove a dynamic programming principle for $U^K$ and then conclude using a verification argument.

\subsubsection{Dynamic programming principle}
\label{proof:ppd}

Consider the lower and upper semi-continuous version of $U^K$:
$$
U_*^K(x) =\underset{y \rightarrow x}{\lim\inf}~U^K(y)\text{ and }U^{K*}(x) =\underset{y \rightarrow x}{\lim\sup}~U^K(y).
$$
Inspired by \cite{touzi2012optimal} we prove the following dynamic programming principle.
\begin{theorem}
\label{th:dynamic_programming_principle}
Let $(t, x)\in \mathcal{E}^K$ be fixed and $\{ \theta^{\delta},~ \delta \in \mathcal{A}_t \}$ be a family of finite stopping times with values in $[t, T]$. Assume that for any $\delta$, $(X^{t,x}_{s}\mathbf{1}_{s\in [t, \theta^{\delta}]})_{s\in [0,T]}$ is $L^{\infty}$-bounded. Then we have
$$
U^K(t, x) \geq \underset{\delta \in \mathcal{A}_t}{\sup} ~ \mathbb{E}^{\delta}[ e^{-r(\theta^{\delta} - t)}U^K_{*}(\theta^{\delta}, X^{t, x}_{\theta^{\delta}}) +  \int_t^{\theta^{\delta}}e^{-r(s-t)}   \tilde{g}(s, X^{t, x}_s,\delta_s)\mathrm{d}s]
$$
and
$$
U^K(t, x) \leq \underset{\delta \in \mathcal{A}_t}{\sup} ~ \mathbb{E}^{\delta}[ e^{-r(\theta^{\delta} - t)}U^{K*}(\theta^{\delta}, X^{t, x}_{\theta^{\delta}}) + \int_t^{\theta^{\delta}}e^{-r(s-t)} \tilde{g}(s, X^{t, x}_s,\delta_s)\mathrm{d}s].
$$	
\end{theorem}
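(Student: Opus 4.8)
The plan is to follow the weak dynamic programming approach of \cite{touzi2012optimal} (in the spirit of Bouchard--Touzi), which is precisely tailored to avoid any a priori measurability or regularity of $U^K$ itself --- this is why the statement involves the semicontinuous envelopes $U^K_*$ and $U^{K*}$, both of which are Borel. Throughout I would rely on the flow property of $X^{t,x}$, the cocycle property of the density $L^{t,x;\delta}$, and the Markov-type identity \eqref{eq:markovian_property}, extended from bounded continuous test functions to the (quadratically growing) reward functional via Lemma \ref{lemma:apriori_estimates}. Concretely, for $\delta\in\mathcal{A}_t$ and the finite stopping time $\theta=\theta^{\delta}$ valued in $[t,T]$, conditioning on $\mathcal{F}_\theta$ and using \eqref{eq:markovian_property} together with a regular version of the conditional law yields the splitting
$$
J^K(t,x;\delta) = \mathbb{E}^{\delta}\Big[\int_t^{\theta} e^{-r(s-t)}\tilde g(s,X^{t,x}_s,\delta_s)\,\mathrm{d}s + e^{-r(\theta-t)}\,J^K\big(\theta,X^{t,x}_{\theta};\delta^{\theta}\big)\Big],
$$
where $\delta^{\theta}$ is the restriction of $\delta$ to $[\theta,T]$. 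The $L^{\infty}$-boundedness assumption on $(X^{t,x}_s\mathbf{1}_{s\le\theta})_s$ and Lemma \ref{lemma:apriori_estimates} guarantee that all expectations are finite and the splitting is legitimate.

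The upper inequality is then immediate: since, conditionally on $\mathcal{F}_\theta$, $\delta^{\theta}$ acts as an admissible control for the problem started at $(\theta,X^{t,x}_\theta)$, we have $J^K(\theta,X^{t,x}_\theta;\delta^{\theta})\le U^K(\theta,X^{t,x}_\theta)\le U^{K*}(\theta,X^{t,x}_\theta)$, whence
$$
J^K(t,x;\delta) \le \mathbb{E}^{\delta}\Big[\int_t^{\theta} e^{-r(s-t)}\tilde g(s,X^{t,x}_s,\delta_s)\,\mathrm{d}s + e^{-r(\theta-t)}\,U^{K*}(\theta,X^{t,x}_\theta)\Big].
$$
Taking the supremum over $\delta\in\mathcal{A}_t$ and recalling that $U^K(t,x)=\sup_{\mathcal{A}_t}J^K(t,x;\cdot)$ gives the claimed bound.

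For the lower inequality I would use a countable covering and concatenation argument. Fix $\varepsilon>0$ and a continuous function $\varphi\le U^K_*$ (so $\varphi\le U^K$); since $U^K_*$ is lower semicontinuous it is the pointwise supremum of such $\varphi$, so it suffices to prove the bound with $U^K_*$ replaced by $\varphi$. For each $(s,y)\in\mathcal{E}^K$ choose $\delta^{(s,y)}\in\mathcal{A}_s$ with $J^K(s,y;\delta^{(s,y)})\ge U^K(s,y)-\varepsilon\ge\varphi(s,y)-\varepsilon$; using lower semicontinuity of $(s',y')\mapsto J^K(s',y';\delta^{(s,y)})$ (from Fatou, continuity of the flow $X^{s',y'}$ in the initial data, and Lemma \ref{lemma:apriori_estimates}) together with continuity of $\varphi$, there is an open neighbourhood $\mathcal{O}_{(s,y)}$ on which $J^K(\cdot,\cdot;\delta^{(s,y)})\ge\varphi-2\varepsilon$. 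Since $\mathcal{E}^K$ is locally compact and second countable by Lemma \ref{lemma:topology_a}(i), hence Lindel\"of, countably many $\mathcal{O}_i:=\mathcal{O}_{(s_i,y_i)}$ cover $\mathcal{E}^K$; disjointify them into Borel sets $(A_i)$ and define $\tilde\delta$ to equal $\delta$ on $[t,\theta]$ and, on $[\theta,T]$, to equal $\delta^{(s_i,y_i)}$ on the event $\{(\theta,X^{t,x}_\theta)\in A_i\}$. One checks that $\tilde\delta\in\mathcal{A}$, so $U^K(t,x)\ge J^K(t,x;\tilde\delta)$; applying the splitting identity to $\tilde\delta$ and using $J^K(\theta,X^{t,x}_\theta;\delta^{(s_i,y_i)})\ge\varphi(\theta,X^{t,x}_\theta)-2\varepsilon$ on $A_i$ yields
$$
U^K(t,x)\ge \mathbb{E}^{\delta}\Big[\int_t^{\theta} e^{-r(s-t)}\tilde g(s,X^{t,x}_s,\delta_s)\,\mathrm{d}s + e^{-r(\theta-t)}\big(\varphi(\theta,X^{t,x}_\theta)-2\varepsilon\big)\Big].
$$
Letting $\varepsilon\to0$, then taking the supremum over $\varphi$ and over $\delta\in\mathcal{A}_t$, completes the proof; the interchange of the countable sum with the expectation is justified by the quadratic growth of $J^K$ and the $L^{\infty}$ control of $(X^{t,x}_s\mathbf{1}_{s\le\theta})_s$.

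The main obstacle is the lower inequality. One must (i) establish enough regularity of the flow in its initial condition and of the value functional to make the near-optimal controls effective uniformly over a neighbourhood, and (ii) carefully verify that the concatenated control $\tilde\delta$ is admissible, i.e.\ predictable for the right filtration --- this is where the canonical-space construction of Section \ref{sec:formal_definition} and the independence-of-increments structure used to reduce to $\mathcal{A}_t$ are essential. The infinite dimensionality of the state space is not a genuine difficulty, since $\mathcal{E}^K$ remains locally compact and second countable by Lemma \ref{lemma:topology_a}.
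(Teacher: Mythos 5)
Your plan is the same weak dynamic programming argument the paper uses (conditioning and the tower property via \eqref{eq:markovian_property} for the upper bound; a continuous minorant, pointwise $\varepsilon$-optimal controls, lower semicontinuity of $J^K(\cdot;\delta)$, a Lindel\"of covering, disjointification and concatenation for the lower bound), and in outline it is correct. Two points are glossed over, however, and they are exactly where the work lies. First, your neighbourhoods $\mathcal{O}_{(s,y)}$ cannot be arbitrary open sets. In the concatenation step you need, on the event $\{(\theta^{\delta},X^{t,x}_{\theta^{\delta}})\in A_i\}$, that the conditional continuation value equals the deterministic function $J^K(\cdot\,;\delta^{(s_i,y_i)})$ evaluated at $(\theta^{\delta},X^{t,x}_{\theta^{\delta}})$. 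This factorization uses that $\delta^{(s_i,y_i)}\in\mathcal{A}_{s_i}$ is independent of $\mathcal{F}_{\theta^{\delta}}$, which is guaranteed only when $\theta^{\delta}\le s_i$ on $A_i$; this is why the paper takes the time component of the covering sets backward in time, $B(t,x;r)=\{(s,y):\ s\in(t-r,t),\ \|x-y\|<r\}$. With two-sided neighbourhoods the restriction of $\delta^{(s_i,y_i)}$ to $(\theta^{\delta},T]$ may be correlated with $\mathcal{F}_{\theta^{\delta}}$ and the tower step breaks down, so this is not merely a routine admissibility check.

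Second, ``taking the supremum over $\varphi$'' at the end requires exchanging a supremum with an expectation. The paper does this by using the $L^{\infty}$ bound on $(X^{t,x}_{s}\mathbf{1}_{s\in[t,\theta^{\delta}]})_{s}$ to restrict to a bounded region, on which $U^K_*$ is bounded (by Lemma \ref{lemma:apriori_estimates}) and lower semicontinuous, hence the pointwise limit of an increasing sequence of continuous functions, and then applies monotone convergence; this, rather than mere integrability, is the real role of the $L^{\infty}$ assumption, which your sketch assigns only to finiteness of expectations and the interchange of the countable sum. Apart from these two repairs (and the harmless difference that the paper concatenates only finitely many pieces and passes to the limit by dominated convergence, while you use the full countable partition at once), your argument coincides with the paper's proof.
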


The proof of Theorem \ref{th:dynamic_programming_principle} is the same as the one of Theorem 2.3 in \cite{touzi2012optimal}. However since we are working on non-standard domains we write the proof for the sake of completeness.

\begin{proof}

We first show the first inequality. We consider a continuous function $\psi$ such that $U^K\geq \psi$. By definition of $U^K$ for any $(t, x) \in \mathcal{E}^K$ there is an admissible control $\delta^{t, x, \varepsilon}\in \mathcal{A}_t$ that is $\varepsilon$ optimal:
$$
J^K(t, x; \delta^{t, x, \varepsilon}) \geq U^K(t, x)-\varepsilon.
$$
Using Fatou's lemma and the fact that $G$ and $\tilde{g}$ are lower semi-continuous we get that $J^K(\cdot; \delta^{t, x, \varepsilon})$ is a lower semi-continuous function. Then $\psi$ being upper semi continuous we can find a family of positive real $\big(r_{t, x}\big)_{t, x\in \mathcal{E}^K}$ such that for any $(t, x)\in \mathcal{E}^K$ we have
$$
\psi(t, x) - \psi(s, y) \geq -\varepsilon \text{ and } J^K(t, x;\delta^{t, x, \varepsilon}) - J^K(s, y;\delta^{t, x, \varepsilon})\leq \varepsilon, \text{ for } (s, y)\in B(t, x;r_{t, x}) 
$$
where 
$$
B(t, x;r) = \{(s, y)\in \mathcal{E}^K\text{ s.t. }s\in (t-r, t), ~\|x-y\| < r \}.
$$
The system $(B(t,x;r_{t, x}))_{t, x\in \mathcal{E}}$ forms an open covering of $\mathcal{E}^K$. With the topology it is endowed $\mathcal{E}^K$ is second countable since $[0, T]\times \mathbb{R} \times \mathbb{Z} \times \mathbb{L}_1 \times \mathbb{L}_1 $ is second countable. Hence by the Lindel\"of covering Theorem we can extract from $\big(B(t, x;r(t, x))\big)_{(t, x)\in \mathcal{E}^K}$ a countable subfamily that covers $\mathcal{E}^K$. Thus we have $(t_i, x_i, r_i)_{i\in \mathbb{N}}$ such that 
$$
\mathcal{E}^K\subset \bigcup_{i \in \mathbb{N}}B(t_i, x_i;r_i).
$$
Set $A^n = \bigcup_{0\leq i \leq n}A_i$. Consider $A_{0} = \{T\}\times\mathcal{X}^K_T,~C_{-1}=\emptyset$ and define the sequence
$$
A_{i + 1} = B(t_{i +1}, x_{i +1};r_{i+1}) \backslash C_i, \text{ where }C_i = C_{i-1}\cup A_i,~i\geq 0.
$$
Now fix $\delta \in \mathcal{A}_t$. With the above construction, we have $(\theta^{\delta}, X^{t, x}_{\theta^{\delta}})\in \cup_{i\geq 0 }A_i$ and for $i \geq 1$, we have 
$$
J^K(\cdot; \delta^{t_i, x_i, \varepsilon})\geq \psi - 3\varepsilon \text{ on }A_i.
$$
We define the control process $\delta^{\varepsilon, n}$ by
$$
\delta^{\varepsilon, n}_{s} = \mathbf{1}_{[t, \theta^{\delta}]}(s)\delta_s + \mathbf{1}_{(\theta^{\delta}, T]}(s)\big( \mathbf{1}_{{A^n}^{c}}(\theta^{\delta}, X^{t, x}_{\theta^{\delta}})\delta_s + \sum_{i = 1}^n \mathbf{1}_{A_i}(\theta^{\delta}, X^{t, x}_{\theta^{\delta}})\delta^{t_i, x_i, \varepsilon}_s \big).
$$
The control $\delta^{\varepsilon, n}$ is in $\mathcal{A}_t$. By Equation \eqref{eq:markovian_property} we have
\begin{align*}
\mathbb{E}_{t, x}^{\delta^{\varepsilon, n}}[G(i_T^{t, x}, P^{t, x}_T) e^{-r(T-t)} &+\int_{t}^T e^{-r(s-t)}\tilde{g}(s, X^{t, x}_s, \delta_s)  \mathrm{d}s |\mathcal{F}_{\theta^{\delta}}]\mathbf{1}_{A^n}(\theta, X^{t, x}_{\theta^{\delta}})\\ 	
	=& \big( U^K(T, X_T^{t, x})e^{-r(T-t)} +\int_{t}^T e^{-r(s-t)}\tilde{g}(s, X^{t, x}_s, \delta_s) \mathrm{d}s \big)\mathbf{1}_{A_0}(\theta^{\delta}, X^{t, x}_{\theta^{\delta}})\\
	&+ \sum_{i = 1}^n \big( e^{-r(\theta^{\delta} - t)} J^K(\theta^{\delta}, X^{t, x}_{\theta^{\delta}}; \delta^{t_i, x_i, \varepsilon}) + \int_{t}^{\theta^{\delta}} e^{-r(s- t)} \tilde{g}(s, X^{t, x}_s, \delta_s) \mathrm{d}s \big) \mathbf{1}_{A_i}(\theta^{\delta}, X^{t, x}_{\theta^{\delta}})\\
	 \geq &  \sum_{i = 0}^n \big( e^{-r(\theta^{\delta} - t)} \psi(\theta^{\delta}, X^{t, x}_{\theta^{\delta}}) - 3\varepsilon + \int_{t}^{\theta^{\delta}} e^{-r(s - t)} \tilde{g}(s, X^{t, x}_s, \delta_s) \mathrm{d}s \big)  \mathbf{1}_{A_i}(\theta^{\delta}, X_{\theta^{\delta}}^{t, x})\\
	 \geq & \big(e^{-r(\theta^{\delta} - t)} \psi(\theta^{\delta}, X^{t, x}_{\theta^{\delta}}) - 3\varepsilon + \int_{t}^{\theta^{\delta}} e^{-r(s - t)} \tilde{g}(s, X^{t, x}_s, \delta_s) \mathrm{d}s\big) \mathbf{1}_{A^n}(\theta^{\delta}, X_{\theta^{\delta}}^{t, x}).
	\end{align*}
	Thus we get
	\begin{align*}
	U^K(t, x) \geq& J^K(t, x; \delta^{\varepsilon, n})\\
	\geq& \mathbb{E}_{t, x}^{\delta^{\varepsilon, n}}[ \mathbb{E}_{t, x}^{\delta^{\varepsilon, n}}[ G(i_T^{t, x}, P_T^{t, x}) e^{-r(T-t)} +\int_{t}^T e^{-r(s-t)}\tilde{g}(s,X^{t, x}_s, \delta^{\varepsilon, n}_s ) \mathrm{d}s|\mathcal{F}_{\theta^{\delta}}]]\\
	\geq& \mathbb{E}_{t, x}^{\delta^{\varepsilon, n}}[ \big( e^{-r(\theta^{\delta} - t)} \psi(\theta^{\delta}, X^{t, x}_{\theta^{\delta}}) - 3\varepsilon + \int_{t}^{\theta^{\delta}} e^{-r(s - t)} \tilde{g}(s, X^{t, x}_s, \delta_s ) \mathrm{d}s \big) \mathbf{1}_{A_n}(\theta^{\delta}, X_{\theta^{\delta}}^{t, x})]\\
	&+\mathbb{E}_{t, x}^{\delta^{\varepsilon, n}}[ \big( G(i_T^{t, x}, P_T^{t, x}) e^{-r(T-t)} +\int_{t}^T e^{-r(s-t)}\tilde{g}(s, X_s^{t, x},\delta_s) \mathrm{d}s\big) \mathbf{1}_{{A^n}^{c}}(\theta^{\delta}, X^{t, x}_{\theta^{\delta}})].
	\end{align*}
	Since $L^{t,x;\delta^{\varepsilon, n}}$ is a true martingale and $L_s^{t,x;\delta^{\varepsilon, n}} = L_s^{t,x;\delta}$ for $s\in [t, \theta^{\delta}]$ we have
	\begin{align*}
	U^K(t, x) \geq& \mathbb{E}_{t, x}^{\delta}[ \big( e^{-r(\theta^{\delta} - t)} \psi(\theta^{\delta}, X^{t, x}_{\theta^{\delta}}) - 3\varepsilon + \int_{t}^{\theta^{\delta}} e^{-r(s - t)} \tilde{g}(s, X_s^{t, x},\delta_s) \mathrm{d}s \big) \mathbf{1}_{A_n}(\theta^{\delta}, X_{\theta^{\delta}}^{t, x})]\\
	&+\mathbb{E}_{t, x}^{\delta^{\varepsilon, n}}[ \big( G(i_T^{t, x}, P_T^{t, x}) e^{-r(T-t)} +\int_{t}^T e^{-r(s-t)}\tilde{g}(s, X_s^{t, x},\delta_s) \mathrm{d}s\big) \mathbf{1}_{{A^n}^{c}}(\theta^{\delta}, X^{t, x}_{\theta^{\delta}})].	
	\end{align*}
	By dominated convergence letting $n\rightarrow +\infty$ we get 
	$$
	U^K(t, x) \geq -3\varepsilon + \mathbb{E}_{t, x}^{\delta}[ e^{-r(\theta^{\delta} - t)} \psi(\theta^{\delta}, X^{t, x}_{\theta^{\delta}})  + \int_{t}^{\theta^{\delta}}e^{-r(s-t)}\tilde{g}(s, X_s^{t, x}, \delta_s) \mathrm{d}s ].
	$$
	Since $\varepsilon$ is any positive real we have
	$$
	U^K(t, x) \geq \mathbb{E}_{t, x}^{\delta}[ e^{-r(\theta^{\delta} - t)} \psi(\theta^{\delta}, X^{t, x}_{\theta^{\delta}}) + \int_{t}^{\theta^{\delta}}e^{-r(s-t)}\tilde{g}(s, X_s^{t, x},\delta_s) \mathrm{d}s\big) ].
	$$
	We now explain how to pass from $\psi$ dominated by $U^K$ to $U^K_{*}$. By hypothesis, for any $\delta$ we can find $r$ such that almost surely $\| X^{t, x}_{s}\| \leq r$ for any $s\in [t, \theta^{\delta}]$. Then we can find an increasing sequence of continuous functions on $\mathcal{E}^K$, $(\Phi_n)_{n\geq 0}$ such that $\Phi_n\leq U^K_* \leq U^K$  and such that $\Phi_n$ converges pointwise towards $U^K_*$ on $ \big( [0, T]\times B_{r}(x) \big) \cap \mathcal{E}^K$ (see Lemma 3.5. in \cite{reny1999existence}), where
	$$
	B_r(x) = \{y \in \mathbb{R}\times \mathbb{Z} \times L^1\times L^1 \text{ s.t. } \|y-x\|\leq r\}.
	$$
	Consequently from monotone convergence Theorem we have
	$$
	U^K(t, x) \geq  ~\mathbb{E}_{t, x}^{\delta}[ e^{-r(\theta^{\delta} - t)} U^K_*(\theta^{\delta}, X^{t, x}_{\theta^{\delta}}) + \int_{t}^{\theta^{\delta}}e^{-r(s-t)}\tilde{g}(s, X_s^{t, x},\delta_s) \mathrm{d}s ].
	$$
	Then we can pass to the supremum in $\delta \in \mathcal{A}_t$ to get the result.\\

Now we show the first inequality. Take $\delta \in \mathcal{A}_t$ and consider $\tilde{\delta}$ the controlled process obtained after freezing the trajectory of $\delta$ up to time $\tau^{\delta}$. By definition of $U^K$ we have
$$
U^{K*}(\tau^{\delta}, X^{t, x}_{\tau^{\delta}}) \geq 	\mathbb{E}_{\tau, X^{t, x}_{\tau}}^{\tilde{\delta}}[e^{-r(T-\tau^{\delta})}G(i^{\tau^{\delta}, X^{t, x}_{\tau^{\delta}}}_{T}, P^{\tau^{\delta}, X^{t, x}_{\tau^{\delta}}}_{T}) + \int_{\tau^{\delta}}^{T} e^{-r(s-\tau^{\delta})}\tilde{g}(s, X^{\tau^{\delta}, X^{t, x}_{\tau^{\delta}}}_{s}, \delta_s) \mathrm{d}s].
$$
Using Equation \eqref{eq:markovian_property} this gives
\begin{align*}	
U^{K*}(\tau, X^{t, x}_{\tau}) e^{-r(\tau^{\delta} - t)} &+ \int_{t}^{\tau^{\delta}} e^{-r (s-t) }\tilde{g}(s, X^{t,  x}_{s}, \delta_s) \mathrm{d}s\\
& \geq \mathbb{E}_{t, x}^{\delta}[e^{-r (T - t) }G(i^{t,  x}_{T}, P^{t,  x}_{T}) + \int_{t}^{T} e^{-r (s-t) }\tilde{g}(s, X^{t,  x}_{s}, \delta_s) \mathrm{d}s |\mathcal{F}_{\tau^{\delta}}].	
\end{align*}
Now taking the average, by arbitrariness of $\delta$ we get the second inequality
$$
\underset{\delta \in \mathcal{A}_t}{\sup}~~\mathbb{E}_{t, x}^{\delta}[U^{K*}(\tau^{\delta}, X^{t, x}_{\tau^{\delta}}) e^{-r(\tau^{\delta} - t)} + \int_{t}^{\tau^{\delta}} e^{-r (s-t) }\tilde{g}(s, X^{t,  x}_{s}, \delta_s) \mathrm{d}s] \geq U^K(t, x).
$$
\end{proof}

In the next section we show that $U^{K}$ is a viscosity solution of $(\mathbf{HJB})_K$ using a verification argument based on Theorem \ref{th:dynamic_programming_principle}.

\subsubsection{Verification}
\label{proof:verification}

In this section using the dynamic programming principle proved previously we prove that $U^{K*}$ (resp. $U^K_*$) is a viscosity super (resp. sub)-solution of $(\mathbf{HJB})_K$. The proof is inspired from the proof of Propositions 6.2 and 6.3 in \cite{touzi2012optimal}.

\begin{property}	
\label{prop:verification}
The function $U^K_{*}$ (resp. $U^{K*}$) is a viscosity sub (resp. super)-solution of $(\mathbf{HJB})_K$. 
\end{property}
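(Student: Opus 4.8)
The proof derives the two viscosity inequalities from the dynamic programming principle of Theorem~\ref{th:dynamic_programming_principle}, following the now classical scheme (see the proofs of Propositions~6.2 and~6.3 in \cite{touzi2012optimal}); the two assertions being symmetric, I would write out the sub-solution property of $U^{K*}$ in detail and only indicate the adjustments needed for the super-solution property of $U_*^K$. Note first that both envelopes are locally bounded and have quadratic growth, since $U^K$ does (Lemma~\ref{lemma:apriori_estimates}); in particular $D^KU^{K*}$ and $D^KU_*^K$ are well defined.

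\emph{Setting up the contradiction.} Fix $(t_0,x_0)\in\mathcal{E}^K$ with $t_0<T$ (for $t_0=T$ the terminal condition holds by definition of $U^K$) and suppose, for contradiction, that there is a test function $\phi$ such that $U^{K*}-\phi$ has a strict local maximum at $(t_0,x_0)$, with $U^{K*}(t_0,x_0)=\phi(t_0,x_0)$, but $F\big(x_0,U^{K*}(x_0),\nabla\phi(x_0),\partial^2_{pp}\phi(x_0),D^KU^{K*}(x_0)\big)>0$. By the remark following Definition~\ref{def:definition} I may assume this holds with $D^K\phi(x_0)$ in place of $D^KU^{K*}(x_0)$, and, using the quadratic growth of $U^{K*}$, I may modify $\phi$ away from a bounded set so that $U^{K*}\le\phi$ on all of $\mathcal{E}^K$ while leaving $\phi$ and $D^K\phi$ unchanged near $(t_0,x_0)$. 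Writing
\begin{align*}
h(t,x,\delta)=\partial_t\phi(t,x)+\mathcal{L}^P\phi(t,x)-r\phi(t,x)+g(i,p)+\sum_{j=a,b}\Phi\big(\theta^j(t)\big)e^{-\frac{k}{\sigma}\delta^j}\big(\delta^j+D^K_j\phi(t,x)\big),
\end{align*}
one has $\sup_{\delta\in\mathbb{R}_+^2}h(t,x,\delta)=-F\big(x,\phi(x),\nabla\phi(x),\partial^2_{pp}\phi(x),D^K\phi(x)\big)$, and this supremum is continuous in $(t,x)$ (the maximisation is effectively over a compact set, and $\theta\mapsto\Phi(\theta(t))$ is continuous by Lemma~\ref{lemma:topology_a}(ii)). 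Hence there is a compact neighbourhood $\mathcal{N}\subset\mathcal{E}^K$ of $(t_0,x_0)$, contained in $\{|t-t_0|\le\rho\}\times\{\|x-x_0\|\le\rho\}$, and $\eta>0$ with $h(t,x,\delta)\le-\eta$ for all $(t,x)\in\mathcal{N}$ and all $\delta\in\mathbb{R}_+^2$.

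\emph{The main estimate.} Choose $(t_n,x_n)\to(t_0,x_0)$ in $\mathcal{E}^K$ with $U^K(t_n,x_n)\to U^{K*}(t_0,x_0)$, let $\tau_n$ be the first time the spatial component of $X^{t_n,x_n}$ leaves $\{\|\cdot-x_0\|\le\rho\}$, and set $\theta_n=(t_n+h)\wedge\tau_n$ for a fixed small $h\le\rho$; the stopped process is $L^\infty$-bounded, so Theorem~\ref{th:dynamic_programming_principle} applies, and since $P$ is a non-degenerate diffusion with bounded coefficients and the jump intensities are bounded on $\mathcal{N}$, there is $c>0$ with $\mathbb{E}^{\delta}_{t_n,x_n}[\theta_n-t_n]\ge c$ for all large $n$ and all $\delta$. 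For any $\delta\in\mathcal{A}_{t_n}$, applying the Itô formula of Section~\ref{subsec:HJB_equation} to $s\mapsto e^{-r(s-t_n)}\phi(s,X^{t_n,x_n}_s)$ on $[t_n,\theta_n]$ (the stochastic integrals being true $\mathbb{P}^{t_n,x_n;\delta}$-martingales, hence of zero mean, see Appendix~\ref{appendix:subsec_martingale_hawkes}), together with $U^{K*}\le\phi$ and the bound on $h$, yields
\begin{align*}
\mathbb{E}^{\delta}_{t_n,x_n}\Big[e^{-r(\theta_n-t_n)}U^{K*}(\theta_n,X_{\theta_n})+\int_{t_n}^{\theta_n}e^{-r(s-t_n)}\tilde{g}(s,X_s,\delta_s)\,\mathrm{d}s\Big]
&\le\phi(t_n,x_n)+\mathbb{E}^{\delta}_{t_n,x_n}\Big[\int_{t_n}^{\theta_n}e^{-r(s-t_n)}h(s,X_s,\delta_s)\,\mathrm{d}s\Big]\\
&\le\phi(t_n,x_n)-\eta\,c\,e^{-rh}.
\end{align*}
Taking the supremum over $\delta\in\mathcal{A}_{t_n}$ and invoking the second (upper) inequality of Theorem~\ref{th:dynamic_programming_principle} gives $U^K(t_n,x_n)\le\phi(t_n,x_n)-\eta\,c\,e^{-rh}$; letting $n\to\infty$ yields $\phi(t_0,x_0)\le\phi(t_0,x_0)-\eta\,c\,e^{-rh}$, a contradiction. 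Hence $F\big(x_0,U^{K*}(x_0),\nabla\phi(x_0),\partial^2_{pp}\phi(x_0),D^KU^{K*}(x_0)\big)\le0$, i.e. $U^{K*}$ is a viscosity sub-solution.

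\emph{Super-solution and the main difficulty.} The proof that $U_*^K$ is a viscosity super-solution is symmetric: at a local minimum of $U_*^K-\phi$ one assumes $F(\cdots,D^KU_*^K(x_0))<0$, selects the \emph{single} constant control $\delta^*$ given by the maximisers~\eqref{eq:maximizers} with $I=D^KU_*^K(x_0)$ — so that $h(\cdot,\cdot,\delta^*)\ge\eta>0$ near $(t_0,x_0)$ after the same reduction via the remark of Definition~\ref{def:definition} and using $U_*^K\ge\phi$ near $(t_0,x_0)$ — and applies the first (lower) inequality of Theorem~\ref{th:dynamic_programming_principle} with this control to reach the analogous contradiction. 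The step I expect to require the most care is the interplay between the non-local operator and the localisation: since $i,\theta^a,\theta^b$ move by jumps, the controlled process can leave $\mathcal{N}$ instantaneously, so the comparison between $U^{K*}$ (resp. $U_*^K$) and $\phi$ must be arranged to hold globally rather than only near $(t_0,x_0)$, and the $D^K\phi$ produced by Itô's formula must be reconciled with the $D^KU^{K*}$ (resp. $D^KU_*^K$) entering $F$ — which is precisely the role of the test-function approximation in the remark following Definition~\ref{def:definition}. The remaining ingredients (the Markov property~\eqref{eq:markovian_property}, the martingale property of the compensated counting processes, and the quadratic a priori bound on $U^K$) are already available from the earlier sections.
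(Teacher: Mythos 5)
Your overall strategy is the same as the paper's: both inequalities are derived from the dynamic programming principle of Theorem \ref{th:dynamic_programming_principle} combined with It\^o's formula applied to the test function on a localised interval, and your orientation ($U^{K*}$ sub-solution, $U^K_*$ super-solution) is the one the paper actually proves and then uses with the comparison result (the ``resp.'' in the statement is a typo). The genuine problem is the reduction step ``I may modify $\phi$ away from a bounded set so that $U^{K*}\le\phi$ on all of $\mathcal{E}^K$ while leaving $\phi$ and $D^K\phi$ unchanged near $(t_0,x_0)$''. Keeping $D^K\phi$ unchanged on a neighbourhood $\mathcal{N}$ of $(t_0,x_0)$ forces $\phi$ to be unchanged on the one-jump image of $\mathcal{N}$, i.e.\ on the points $\big(t,p,i\mp1,\theta^a+K(\cdot-t),\theta^b\big)$ and $\big(t,p,i\pm1,\theta^a,\theta^b+K(\cdot-t)\big)$ with $(t,x)\in\mathcal{N}$; this is a bounded set sitting at distance of order $1+\|K\|_1$ from $(t_0,x_0)$, so it is not ``away from a bounded set'', and on it the local maximum gives no inequality between $U^{K*}$ and $\phi$. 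Yet your main estimate needs exactly $U^{K*}\le\phi$ at $(\theta_n,X_{\theta_n})$ when the exit happens through a jump. Enlarging $\phi$ on that jump image is not harmless either: it increases $D^K\phi$, and $F$ is nonincreasing in its last argument, so both the hypothesis $F>0$ at $(t_0,x_0)$ and the bound $h\le-\eta$ on $\mathcal{N}$ could be destroyed. The paper resolves precisely this point differently: it takes a strict global maximum (an $O(\varepsilon)$ perturbation of $D^K\phi(t_0,x_0)$, so $F>0$ survives), introduces the compact set $\mathcal{J}(t_0,x_0)$ of all states reachable by one jump from inside $B_r(t_0,x_0)$, and uses upper semi-continuity to get a uniform gap $\sup_{\partial B_r(t_0,x_0)\cup\mathcal{J}(t_0,x_0)}\big(U^{K*}-\phi\big)=-2\eta e^{rT}$; the contradiction then comes from this gap at the (diffusive or jump) exit state, the factor $e^{rT}$ absorbing the discount. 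Alternatively, your argument can be repaired by invoking from the start the approximating test functions of the remark after Definition \ref{def:definition}: they dominate $U^{K*}$ globally, coincide with it at $(t_0,x_0)$, and satisfy $D^K\phi_m(t_0,x_0)\to D^KU^{K*}(t_0,x_0)$, so $F>0$ is preserved for $m$ large and the global domination you need is built in.

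Two smaller remarks. Your contradiction extracts the gain from the running term, which requires the uniform lower bound $\mathbb{E}^{\delta}_{t_n,x_n}[\theta_n-t_n]\ge c>0$; this is true (the intensities are dominated by $\Phi(\theta^j(t))$ uniformly in $\delta$, and $W$ remains a Brownian motion under every $\mathbb{P}^{\delta}$), but it is an extra estimate that the paper's boundary-gap argument avoids. For the super-solution, the paper's proof is direct rather than by contradiction: it works at a global minimum (so $U^K_*\ge\phi$ everywhere, which removes the jump-image problem in that direction), applies the first DPP inequality with each constant control, divides by $h_n$, lets $n\to\infty$, and only then takes the supremum over $\delta$; your single-control contradiction also works, but it inherits the same global-domination issue as above and must be patched in the same way.
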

\begin{proof}
We first show that $U^K_{*}$ is a viscosity super-solution and then that $U^{K*}$ is a viscosity sub-solution.\\
	
		Let $(t, x)\in \mathcal{E}$ and $\phi $ be a test function such that
	$$
	(U^K_* - \phi)(t, x) = \underset{\mathcal{E}^K}{\min}~U^K_* - \phi=0
	$$
	and $(t_n, x_n)$ a sequence in $\mathcal{E}^K$ such that
	$$
	(t_n, x_n )\rightarrow (t, x) \text{ and } U^K(t_n, x_n)\rightarrow U^K_*(t, x).
	$$
	Since $\phi$ is continuous we have
	$$
	\eta_n = U^K(t_n, x_n) - \phi(t_n, x_n)\rightarrow 0.
	$$
	Let $\delta \in \mathbb{R}_+^2$ and consider the constant control process equal to $\delta$. We use the notation 
	$ X^n = X^{t_n, x_n}$ and $\mathbb{E}^{\delta}_n = \mathbb{E}^{\delta}_{t_n, x_n}$. Finally, for all $n>0$ we define the stopping time:
	$$
	\tau_n = \inf \{s>t_n~\text{s.t.}~(s-t_n, X^n_s - x_n)\notin [0, h_n)\times  B_{\alpha}\},
	$$
	where $B_{\alpha}$ the ball for $\|\cdot \|$, centered in $0$ with radius $\alpha$ positive and small enough such that if a jump occurs then the stopping time $\tau_n$ is immediatly reached. We take
	$$
	h_n = \sqrt{\eta_n}\mathbf{1}_{\eta_n\neq 0} + n^{-1}\mathbf{1}_{\eta_n = 0}.
	$$
	Notice that $\tau_n \rightarrow t$ almost surely.\\
	
	From the first inequality in the dynamic programming principle, we have
	$$
	0\leq \mathbb{E}_n^{\delta}\big[ U^K(t_n, x_n)- e^{-r(\tau_n-t_n)} U^K_{*}(\tau_n, X^n_{\tau_n}) - \int_{t_n}^{\tau_n}e^{-r(s-t_n)}\tilde{g}(s, X^n_{s}, \delta)\mathrm{d}s \big].
	$$
	Now using that $U^K_{*}\geq \phi$ we get
	$$
	0 \leq \eta_n + \mathbb{E}_n^{\delta}\big[\phi(t_n, x_n)- e^{-r(\tau_n-t_n)} \phi(\tau_n, X^n_{\tau_n}) - \int_{t_n}^{\tau_n}e^{-r(s-t_n)}\tilde{g}(s, X^n_{s}, \delta_s)\mathrm{d}s   \big].
	$$
	We can use the Ito formula since $\phi$ is smooth. Thus we get
	$$
	0 \leq  \eta_n - \mathbb{E}_n^{\delta}\big[ \int_{t_n}^{\tau_n} e^{-r(s - t_n)}\big( (-r\phi + \partial_t \phi + \mathcal{L}^{\delta}\phi )(s, X^n_s) + \tilde{g}(s, X^n_s, \delta_s) \big)  \mathrm{d}s   \big]- \mathbb{E}_n^{\delta}[  M^n_{\tau_n} ]
	$$ 
	where
	$$
	\mathcal{L}^{\delta}\phi(s, x) =  \mathcal{L}^P\phi(s, x) + \sum_{j = a, b} D^K_j \varphi(s, x)e^{-\frac{k}{\sigma}\delta^j_u}\Phi\big(\theta^j(s)\big)
	$$
	and with
	$$
	M_{s}^n = \int_{t_n}^{s }e^{-r(s-t_n)}\big(   D^a_{K} \phi(s, X^{n}_{s}) \mathrm{d}M^{\delta,a}_s+ D^b_{K} \phi(s, X^{n}_{s})\mathrm{d}M^{\delta,b}_s + \sigma \partial_p \phi(s, X^{n}_{s}) \mathrm{d}W_s\big)
	$$
	The function $\phi$ being continuous, the integrands in the term $M^n$ are all bounded so the expectation of $M^n$ under $\mathbb{P}^{\delta}_n$ is $0$. Consequently we have
	$$
	0 \leq \frac{\eta_n}{h_n} - \mathbb{E}_n^{\delta}\big[\frac{1}{h_n} \int_{t_n}^{\tau_n}e^{-r(s-t_n)}\big(  (-r\phi + \partial_t \phi + \mathcal{L}^{\delta}\phi)(s, X^n_s) + \tilde{g}(s, X^n_s, \delta)\big) \mathrm{d}s   \big].
	$$
	Taking $n\rightarrow + \infty$ using dominated convergence and arbitrariness of $\delta$ we get
$$
		0 \leq   (r\phi - \partial_t \phi - \mathcal{L}^{\delta}\phi)(t, x) - \tilde{g}(t, x, \delta).
$$
	The control $\delta$ being arbitrary we finally have that
	$$
	F(t, x, \phi(t, x), \nabla \phi(t, x), \partial_{pp}^2 \phi(t, x), D^K\phi(t, x)) \geq 0.
	$$
	Thus $U^K_*$ is a viscosity supersolution of  $(\mathbf{HJB})_K$.\\

	Now we suppose that $U^{K*}$ is not a viscosity subsolution of $(\mathbf{HJB})_K$ and exhibit a contradiction. According to the definition of viscosity subsolution we can find $\phi$ a test function and $(t_0, x_0)$ such that
	$$
	\begin{array}{ll}
	0 = (U^{K*} - \phi)(t_0, x_0) > (U^{K*} - \phi)(t, x),~ \forall~(t, x)\in \mathcal{E}^K\backslash \{(t_0, x_0)\}
	\end{array}
	$$
	and that
	\begin{equation}
	\label{eq:verification_proof_a}
	F(t_0, x_0, \phi(t_0, x_0), \nabla \phi(t_0, x_0), \partial_{pp}^2 \phi(t_0, x_0), D^K\phi(t_0, x_0)) > 0.	
	\end{equation}
	By continuity of $\phi$ and $F$ we have existence of a $r>0$ small enough such that on $B_{r}(t_0, x_0)\backslash \{(t_0, x_0)\}$ we have
	$$
	h  = -F(\cdot, \phi, \nabla \phi, \partial_{pp}^2 \phi, D^K\phi)  < 0.
	$$
	Moreover we can find some $\eta>0$ (up to a change of $r$), such that
	$$
	\underset{ \partial B_r(t_0, x_0) \cup \mathcal{J}(t_0, x_0)  }{ \sup } U^{K*} - \phi = -2 \eta e^{rT}
	$$
	where $\mathcal{J}(t_0, x_0)$ is the set of all values that can be reached if a jump occurs inside $B_r(t_0, x_0)$. Note that it is a compact set. We consider a sequence $(t_n, x_n)_{n\geq 0 }\in \mathcal{E}^K$ such that
	$$
	\underset{n\rightarrow + \infty}{\lim}(t_n, x_n) = (t_0, x_0) \text{ and } \underset{n\rightarrow + \infty}{\lim} U^K(t_n, x_n) = U^{K*}(t_0, x_0). 
	$$
	Since $U^{K}(t_n, x_n) - \phi(t_n, x_n)\rightarrow 0$ we can assume that 
	$$
	| U^{K}(t_n, x_n) - \phi(t_n, x_n) | \leq \eta \text{ for any }n\geq 1.
	$$
	For a fixed control $\delta \in \mathcal{A}_{t_n}$
	We define the stopping time
	$$
	\tau_n = \inf\{ t>t_n\text{ s.t }X_t^{t_n, x_n} \notin B_r(t_0, x_0)\}.
	$$
	At the stopping time, either the process $X^{t_n, x_n}$ has not jumped and so is on $\partial B_r(t_0, x_0)$ or has jumped and is in $\mathcal{J}(t_0, x_0)$. Thus
	$$
	e^{-r(\tau_n-t_n)} \phi(\tau_n, X^{t_n, x_n}_{\tau_n}) \geq 2\eta + e^{-r(\tau_n-t_n)}  U^K(\tau_n, X^{t_n, x_n}_{\tau_n}).
	$$
	We derive from the Ito formula
	\begin{eqnarray*}
	U^K(t_n, x_n) &\geq & -\eta + \phi(t_n, x_n)\\
	&=& -\eta + \mathbb{E}_n^{\delta}\big[ e^{-r(\tau_n -t_n )}\phi(\tau_n, X^n_{\tau_n}) - \int_{t_n}^{\tau_n}  e^{-r(s -t_n )} ( -r + \partial_t + \mathcal{L}^{\delta})\phi(s, X^n_{s^-}) \mathrm{d}s \big].
	\end{eqnarray*}
	Hence using Equation \eqref{eq:verification_proof_a} we get
	\begin{eqnarray*}
U^K(t_n, x_n)	&\geq & -\eta + \mathbb{E}_n^{\delta}\big[ e^{-r(\tau_n -t_n )}\phi(\tau_n, X^n_{\tau_n}) + \int_{t_n}^{\tau_n}  e^{-r(s -t_n )} \tilde{g}(s, X^n_s, \delta)  \mathrm{d}s \big]\\
	&\geq & \eta + \mathbb{E}_n^{\delta}\big[ e^{-r(\tau_n -t_n )}U^{K*}(\tau_n, X^n_{\tau_n}) + \int_{t_n}^{\tau_n}  e^{-r(s -t_n )} \tilde{g}(s, X^n_s, \delta)   \mathrm{d}s \big].
	\end{eqnarray*}
	Since $\delta$ is any control and $\eta$ is positive this contradict the second equation of Theorem \ref{th:dynamic_programming_principle}. Thus $U^{K*}$ is a viscosity sub-solution of $(\mathbf{HJB})_K$.
	
\end{proof}

A direct consequence of Proposition \ref{prop:verification} together with Proposition \ref{prop:comparison_polynomial} is that
$$
U^K_* \geq U^{K*}.
$$
But obviously we have $U^K_*\leq U^{K*}$, therefore $U^K_{*} = U^{K*} = U^K$. In particular $U^K$ is continuous and therefore is the unique continuous viscosity solution with polynomial growth to $(\mathbf{HJB})_K$.\\

\subsubsection{Proof of Theorem \ref{th:optimal_control} $(iii)$}
\label{proof:control}

To prove Theorem \ref{th:optimal_control} $(iii)$ we must show that $J^K(\cdot;\delta^*) = U^K$.\\

As we did previously we can show that $J^K(\cdot;\delta^*)$ is the unique viscosity solution with polynomial growth of 
$$
(\mathbf{LHJB})_K:~~\left\{ \begin{array}{ll}
&rU -\partial_tU -\mathcal{L}^P U - g - \sum_{j = a, b}  e^{-\frac{k}{\sigma}\delta^{K}_j }\Phi\big(\theta^j(u)\big) (D^K_j U + \delta^K_j)  = 0\text{, on  } \mathcal{E}^K \\
&U(T,x)=G(i, P)\text{ for }x\in \mathcal{X}^K_T
\end{array} \right. .
$$
But since $U^K$ is a viscosity solution of $(\mathbf{HJB})_K$ and by definition of $\delta^K$, $U^K$ is also a viscosity solution with polynomial growth of $(\mathbf{LHJB})_K$. This gives the result.

\subsection{Proof of Proposition \ref{prop:convergence_viscosity_solution}.}
\label{proof:convergence_solution}

We define the following functions on $\mathcal{E}^K$: 
$$
\overline{U}(x) =  \underset{(y, n)\in \bar{\mathcal{E}}   \rightarrow (x, +\infty)}{\lim\sup}  ~ U^{K_n}(y)\text{ and }\underline{U}(x) =  \underset{(y, n)\in \bar{\mathcal{E}}   \rightarrow (x, +\infty)}{\lim\inf}  ~ U^{K_n}(y),
$$
We show that $\underline{U}$ and $\overline{U}$ are respectively a viscosity super-solution and a viscosity sub-solution of $(\mathbf{HJB})_K$.\\

Consider $\phi$ a test function and $\underline{x}\in \mathcal{E}^K$ a strict minimizer of $\underline{U}-\phi$. We have existence of a sequence $(x_n, \sigma_n)_{n\in \mathbb{N}}$ in $\overline{\mathcal{E}}$ such that
	$$
	(x_n, \sigma_n)\rightarrow (+\infty, \underline{x})\text{ and }U^{K_n}(x_n)\rightarrow\underline{U}(\underline{x}).
	$$
	Consider $B_r(\underline{x})$ the closed ball of $[0, T]\times \mathbb{R} \times \mathbb{Z}\times L^1 \times L^1$ with radius $r>0$ centered in $\underline{x}$. Then we can always suppose that $x_n \in B_r(\underline{x}),~\forall n\geq 0$. Let $\underline{x}_n$ be a minimizer of the difference $U^{K_n} - \phi$ on $\mathcal{E}^{K_n} \cap B_r(\underline{x})$ (exists because $\mathcal{E}^{K_n}$ is locally compact). We note $\underline{x}_n = (t_n, p_n, i_n, \theta^{n,a}, \theta^{n,b})$. We show at the end of the proof that there exists $x \in \mathcal{E}^K$ such that $(x,+\infty)$ is the limit of a subsequence of $(\underline{x}_n, \sigma_n)_{n \geq 0}$ and that $\theta^{n,j}(t_n) \rightarrow \theta^{j}(t)$ for $j=a$ and $b$. Hence we can write
	\begin{align*}
	\underline{U}(\underline{x}) - \phi(\underline{x})  &= \underset{n\rightarrow +\infty}{\lim}~ U^{K_n}(x_n) - \phi(x_n) \\
	& \geq \underset{n \rightarrow +\infty}{\liminf}  ~ U^{K_n}(\underline{x}_{n}) - \phi(\underline{x}_{n}) \\
	& \geq  \underline{U}(x) - \phi(x).
	\end{align*}
	Thus by definition of $\underline{x}$ we get that $(\underline{x}_n)_{n\geq 0}$ converges towards $\underline{x}$ and that
	$$
	 U^{K_n}(\underline{x}_n) \underset{n \rightarrow +\infty}{\rightarrow} \underline{U}(\underline{x}).
	$$
	As a consequence when $n$ is large enough $\underline{x}_n$ is a local minimizer of $U^{K_n} - \phi$ (because it is in the interior of $\overline{B}_r(\underline{x})$) hence by definition of viscosity solutions
	$$
	F\big(\underline{x}_n, U^{K_n}(\underline{x}_n), \nabla\phi(\underline{x}_n), \partial^2_{pp}\phi(\underline{x}_n), D^{K_n}U^{K_n}(\underline{x}_n) \big) \geq 0.
	$$
	Then by definition of $\underline{U}$ and since $U^{K_n}(\underline{x}_n) \rightarrow \underline{U}(\underline{x})$:
	$$
	\underset{n \rightarrow + \infty}{\lim \inf }~D^{K_n}U^{K_n}(\underline{x_n}) \geq D^K \underline{U}(\underline{x}).
	$$
	Finally since $F$ is decreasing with respect to the last variable and since  $\theta^{n,j}(t_n)$ converges towards $\theta^j(t)$ for $j=a$ and $b$ we have
	$$
	F\big(\underline{x}, \underline{U}(\underline{x}), \nabla\phi(\underline{x}), \partial^2_{pp} \phi(\underline{x}), D_K \underline{U}(\underline{x})) \geq  \underset{n\rightarrow + }{\lim \sup } ~F\big( \underline{x}_n, U^{K_n}(\underline{x}_n), \nabla \phi(\underline{x}_n), \partial^2_{pp}\phi(\underline{x}_n), D^{K_n} U^{K_n}(\underline{x}_n)) \geq 0.
	$$
	So by Definition \ref{def:definition} $\underline{U}$ is a viscosity super-solution of $(\mathbf{HJB})_K$. In the same way we can show that $\overline{U}$ is a viscosity sub-solution of $(\mathbf{HJB})_K$. Moreover since the a priori inequalities on $U^{K_n}$ can be chosen uniform in $n$ (because $\|K_n\|_1 \rightarrow \|K\|_1$) they are true for $\underline{U}$ and $\overline{U}$. Therefore  Proposition \ref{prop:comparison_polynomial} implies that $\underline{U} \geq \overline{U}$. Because we have the other inequality by definition we get $\overline{U}=\underline{U}=U^K$, the unique viscosity solution with polynomial growth of $(\mathbf{HJB})_K$.\\

	To complete the proof we show that $(\underline{x_n})_{n\geq 0}$ admits a subsequence converging towards some $x \in \mathcal{E}^K$ and that for $j=a$ and $b$, $\theta^{n,a}(t_n)$ converges $\theta^a(t)$.\\
	
	 We have $ 	\underline{x}_n = (t_n, p_n, i_n, \theta^{n, a}, \theta^{n, b})$ with
	$$
	\theta^{n, a} = \sum_{j = 1}^{m^{n, a}} K_n(-T^{n, a}_j)\text{ and }\theta^{n, b} = \sum_{j = 1}^{m^{n, b}} K_n(-T^{n, b}_j)
	$$
	where $m^{n, a}$ and $m^{n, b}$ are non-negative integers, $(T^{n, a}_j)_{1\leq j \leq m^{n, a}}$ and $(T^{n, b}_j)_{1\leq j \leq m^{n, b}}$ are in $[0, t_n]$. We recall that $(\|x_n\|)_{n\geq 0}$ is bounded. Hence up to a subsequence $(t_n, p_n, i_n, \|\theta^{n,a}\|_1, \|\theta^{n, b}\|_1)_{n\geq 0}$ converges towards some $(t, p, i, l^a, l^b)$. Since we have assumed that $\|K\|_1$ is positive the convergence of $(\|\theta^{n,a}\|_1)_{n\geq 0}$ and $(\|\theta^{n,b}\|_1)_{n\geq 0}$ imply those of $(m^{n, a})_{n\geq}$ and $(m^{n,b})_{n\geq 0}$. Consequently those sequences are eventually constant and equal to $m^a$ and $m^b$ for $n$ large enough. Then up to a subsequence we have convergence of $\big((T^{n, a}_j)_{1\leq j \leq m^a}\big)_{n\geq 0}$ and $\big((T^{n, b}_j)_{1\leq j \leq m^b}\big)_{n\geq 0}$ since they take their values in $[0, T]^{m^a}$ and $[0, T]^{m^b}$ which are compact sets. We consider $(T^a_j)_{1\leq j \leq m^a}$ and $(T^b_j)_{1\leq j \leq m^b}$ their limits. We now show that $(\theta^{n, a})_{n\geq 0}$ converges in $L^1$ towards 
	$$
	\theta^a(\cdot) = \sum_{j = 1}^{m^a}K(\cdot - T^a_j) \in \Theta^K_t.
	$$
	We show that $K_n(\cdot - T^{n, a}_j)$ converges in $L^1$ towards $K(\cdot - T^a_j)$ to conclude. We write
	\begin{align*}
	\| K(\cdot - T^a_j) - K_n(\cdot - T^{n, a}_j) |_1 & \leq \| K_n(\cdot - T^{n,a}_j) - K(\cdot - T^{n,a}_j) \|_1 + \| K(\cdot - T^{n,a}_j) - K(\cdot - T^a_j)\|_1 \\
	& \leq  \| K_n - K \|_1 + \| K(\cdot - T^{n,a}_j) - K(\cdot - T^a_j)\|_1.	
	\end{align*}
	The first term goes to $0$ by hypothesis, the second by dominated convergence. Same results holds for $(\theta^{n, b})_{n\geq 0}$ and $\theta^b$. Consequently we have proved the convergence of $(x_n)_{n\geq 0}$ towards
	$$
	x = (t, p, i, \theta^a, \theta^b) \in \mathcal{E}^K.
	$$
	We finally show that $\theta^{n,a}(t_n)$ converges towards $\theta^{a}(t)$, the same methodology holds for $b$. We have for $n$ large enough
	$$
	\big| \theta^{n,a}(t_n)-\theta^{a}(t) \big| \leq \sum_{j = 1}^{m^a} \big|K_n(t_n - T^{n,a}_j) - K(t - T^{a}_j) \big|.
	$$
	The uniform convergence of $K_n$ towards $K$ implies that $K_n(t_n - T^{n,a}_j)$ converges towards $K(t - T^{a}_j)$. This concludes the proof.

\subsection{Proof of point $(iv)$ of Theorem  \ref{th:verification_exp}}
\label{proof:verification_exp}

We recall that the proof of Theorem \ref{th:verification_exp} is exactly the same of Theorem \ref{th:optimal_control}. Hence for any $(t, y)\in \mathcal{E}^n$ we define for $(t, y)\in \mathcal{E}^n$ the process $Y^{t, y} = (i^{t, y}, P^{t, y}, c^{t, y;a}, c^{t, y;b})\in \mathcal{E}^n$ by analogy with the process $X^{t, x}$ defined in Section \ref{proof:utility}. Note that by construction for any $(t, x) \in \mathcal{E}^{K_{\alpha, \gamma}}$ and for any $s\in [t, T]$ we have for $(t, y)= \mathcal{R}^{\alpha, \gamma}(t, x)$
\begin{equation}
\label{eq:proof_verif_exp_a}
(s, Y^{t, y}_s) = \mathcal{R}^{\alpha, \gamma}(s, X^{t, x}_s).
\end{equation}
Then as in Section \ref{proof:verification} we prove that the function
$$
U^{\alpha, \gamma}(t, y) = \underset{\delta \in \mathcal{A}_t}{\sup }~\mathbb{E}^{\delta}[G(i_T^{t, y}, P_T^{t, y}) + \int_t^T \Big( g(i^{t, y}_{s}, P^{t, y}_s) + \delta^a_s \lambda^{a, \delta}_s + \delta^b_s \lambda^{b, \delta}_s \Big)\mathrm{d}s]
$$
is the unique viscosity solution with polynomial growth of $(\mathbf{HJB})_{\alpha, \gamma}$. Moreover for any $(t, x)\in \mathcal{E}^{K_{\alpha, \gamma}}$ and $(t, y) = \mathcal{R}^{\alpha, \gamma}(t, x)$ by Equation \eqref{eq:proof_verif_exp_a} we have:
$$
U^{\alpha, \gamma}(t, y) = \underset{\delta \in \mathcal{A}_t}{\sup }~\mathbb{E}^{\delta}[G(i_T^{t, x}, P_T^{t, x}) + \int_t^T \Big( g(i^{t, x}_{s}, P^{t, x}_s) + \delta^a_s \lambda^{a, \delta}_s + \delta^b_s \lambda^{b, \delta}_s \Big)\mathrm{d}s] = U^{K_{\alpha, \gamma}}(t, x).
$$
Therefore for any $(t, x) \in \mathcal{E}^{K_{\alpha, \gamma}}$ we have $U^{K_{\alpha, \gamma}}(t, x) = U^{\alpha, \gamma}\circ \mathcal{R}^{\alpha, \gamma}(t, x)$. This concludes on the proof of point $(iv)$ of Theorem \ref{th:verification_exp}.

\appendix

\section{Proof of Lemma \ref{lemma:topology_a}}
\label{proof:lemma_topology_a}

We first prove (i). Consider $(\theta_k)_{k\geq 0}$ a sequence with values in $\Theta^K_t$ that converges towards some $\theta$ in $L^1$. We have 
	$$
	\theta_k = \sum_{j = 1}^{N_k} K(\cdot -T^k_j).
	$$
	The convergence of $\|\theta_k\|_1$ towards $\|\theta\|_1$ gives that $N_k$ is constant and equal to some $N$ up to a certain rank. Finally for any subsequence $\big( (T^{\sigma(k)}_j)_{1\leq j \leq N}\big)_{k\geq 0}$ converging to some $(T_j)_{1\leq j \leq N}$ we have :
	$$
	\theta_{\sigma(k)} \rightarrow \sum_{j = 1}^{N} K(\cdot -T_j) = \theta, \text{ in }L^1.
	$$
	so $\Theta^K_t$ is closed. Now with the same notation we consider a bounded sequence $(\theta_k)_{k \geq 0}$. We can find a subsequence $\sigma$ such that $N_{\sigma(k)}$ is constant and equal to some $N\in \mathbb{N}$ and such that for $j = 1\dots N$, $T^{\sigma(k)}_j \rightarrow T_j$. This implies that 
	$$
	\theta_{\sigma(k)}\rightarrow \sum_{j=1}^N K(\cdot - T_j).
	$$
	This shows that that $\Theta^K_t$ is locally compact.\\
	
	Now we prove (ii). Consider a converging sequence $(s_k, \theta_k)_{k\geq 0}$ such that $\theta_k\in \Theta^K_{s_k}$ for any $k$ and let $\theta = \sum_i^N K(\cdot-T_j)$ be the limit of $(\theta_k)_{k\geq 0}$. Then necessarily $((T^k_j)_{1\leq j \leq n})_{k\geq 0}$ converges towards $(T_j)_{1\leq j \leq n}$. Moreover by comparison we have $T_j \leq s$ and by continuity of $K$ that
	$$
	\theta_{k}(s_k)\rightarrow \sum_{j=1}^N K(s - T_j).
	$$ 
	Finally consider now that $K(t) = \alpha e^{-\gamma t}$, for $l\in \mathbb{N}$ we have
	$$
	(\theta_k^K)^{(l)}(T) = \sum_{i = 1}^N \alpha (-\gamma)^l e^{-\gamma (T - T^k_i)}.
	$$
	The convergence of $(T_j^k)_{k \geq0}$ thus imply that $\theta_k^{(l)}(T)\rightarrow \theta^{(l)}(T)$.

\section{A priori inequalities}
\label{appendix:apriori_inequalities}

In this section we prove some a priori inequalities.

\subsection{Hawkes processes}
\label{appendix:subsec_martingale_hawkes}
Consider a Hawkes process $N$ with kernel $K = c \mathbf{1}_{\mathbb{R}_+}$ and exogenous intensity $\mu$. The intensity of $N$ is given by
$$
\lambda_t = \mu + N_t c.
$$
The existence of such process is proved in \cite{jacod1975multivariate}. Consider $T_p = \inf\{s\text{ s.t. } N_s>p\}$, by to \cite{jacod1975multivariate}, $T_{\infty} = \underset{n \rightarrow + \infty}{\lim}T_n = +\infty$. To lighten the notations we write $N^p := N^{T_p}$. We have for any $t\in [0, T]$
$$
\mathbb{E}[N^p_t] = \mathbb{E}[\int_0^{t\wedge T_p}\lambda_s\mathrm{d}s]\leq \mathbb{E}[\int_0^t C(1 + N_s^p) \mathrm{d}s].
$$
thus using a Gr\"onwall's lemma we get $\mathbb{E}[N^p_T]\leq C T e^{CT}$. The RHS being independent of $p$ and using monotone convergence we get
$$
\mathbb{E}[N_T]<+\infty.
$$
We also have 
\begin{align*}
\mathbb{E}[(N^p_t)^2] &= \mathbb{E}[\int_0^{t\wedge T_p}(2N_{s^-}^p + 1) \mathrm{d}N_s] \\
&= \mathbb{E}[\int_0^{t\wedge T_p}(2N_{s^-}^p + 1)\lambda_s \mathrm{d}s] \leq \mathbb{E}[\int_0^{t\wedge T_p}C(2N_{s}^p + 1)(N_{s}^p + 1)s \mathrm{d}s].
\end{align*}
Using again a Gronwall lemma we deduce that $\mathbb{E}[(N^p_T)^2] \leq CT^2 e^{CT}$ with $C$ independent of $p$, so
$$
\mathbb{E}[(N_T)^2]<+\infty.
$$
Now consider a Hawkes process $N$ with kernel $K$ bounded and intensity given by
$$
\lambda_t = \Phi\big(\int_0^t K(t-s)\big)\mathrm{d}N_s
$$
with $\Phi$ non decreasing in its last variable and such that $|\Phi(x)|\leq C(1 + |x|)$ for some $C>0$. By the thinning property we can see $N$ as dominated by some Hawkes process $\tilde{N}$ with kernel $C\mathbf{1}_{\mathbb{R}_+}$ and exogenous intensity $C$. Remark that as a consequence $\tilde{\lambda}$ dominates $\lambda$. Hence we get
$$
\mathbb{E}[\tilde{N}_T + \int_0^T \tilde{\lambda}_s]<+\infty
$$
then consequently
$$
\mathbb{E}[N_T]<+\infty,~\mathbb{E}[N^2_T]<+\infty\text{ and } \mathbb{E}[\int_0^{T} \delta e^{-k\delta}\lambda_s\mathrm{d}s ]< +\infty.
$$
This ensures that the function $U^K$ defined in Equation \eqref{eq:utility} is well defined. This also implies that the martingales $M^{t,x;a, \delta}$ and $M^{t,x;b, \delta}$ are uniformly integrable martingales.

\subsection{A priori estimates on $X$}
\label{appendix:subsec:apriori_estimate_process}
	
We prove here that the value function $U^K$ defined in Equation \eqref{eq:utility} has polynomial growth in $x$. For this we show some inequalities on the norm of $(X^{t, x})_{(t, x)\in \mathcal{E}^K}$. More precisely we prove the following result:
	\begin{lemma}
	\label{lemma:apriori_estimates}
	There exists some positive constant $C$ depending only on $T$ and on the regularity constants of $G$ and $g$ such that for any $(t, x)\in \mathcal{E}$
	$$
	\mathbb{E}^{\delta}_{t, x}[\underset{s\in [t, T]}{\sup}~ \|X^{t, x}_s\|^2] \leq C(1 + \|x\|^2)
	$$
	and
	$$
	|U^K(t, x)|\leq C(1 + \|x\|^2).
	$$
	\end{lemma}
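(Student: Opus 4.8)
The plan is to control $\|X^{t,x}_s\|^2$ component by component, uniformly in $s\in[t,T]$ and in $\delta\in\mathcal{A}$, and then integrate. Since $\|X^{t,x}_s\|^2=|P^{t,x}_s|^2+|i^{t,x}_s|^2+\|\theta^{t,x;a}_s\|_1^2+\|\theta^{t,x;b}_s\|_1^2$, it suffices to bound the $L^1(\mathbb{P}^{\delta}_{t,x})$-norm of the supremum over $s$ of each of the four summands. For the price component, observe that $L^{t,x;\delta}$ is the Doléans-Dade exponential of a stochastic integral against the compensated jump martingales $M^a,M^b$ only, so by Girsanov $W$ remains a Brownian motion under every $\mathbb{P}^{t,x;\delta}$ and $P^{t,x}$ solves one and the same SDE with Lipschitz drift $d$ and constant diffusion coefficient $\sigma$; a routine Burkholder-Davis-Gundy and Grönwall argument then gives $\mathbb{E}^{\delta}_{t,x}[\sup_{s\in[t,T]}|P^{t,x}_s|^2]\le C(1+p^2)$.

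The crux is the counting-process estimate $\mathbb{E}^{\delta}_{t,x}[(N^a_T-N^a_t)^2]\le C(1+\|\theta^a\|_1^2)$, uniformly in $\delta$ (and symmetrically for $N^b$). Under $\mathbb{P}^{t,x;\delta}$ the intensity of $N^a$ on $[t,T]$ equals $e^{-\frac{k}{\sigma}\delta^a_s}\Phi(\theta^{t,x;a}_s(s))\le\Phi(\theta^{t,x;a}_s(s))\le C(1+\theta^{t,x;a}_s(s))$ by the at-most-linear growth of $\Phi$, and since $\theta^{t,x;a}_s(s)=\theta^a(s)+\int_t^sK(s-u)\,\mathrm{d}N^a_u$ with $K$ non-increasing (complete monotonicity) one bounds $\theta^{t,x;a}_s(s)\le C(1+\|\theta^a\|_1)+C(N^a_s-N^a_t)$. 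Localizing at $T_p=\inf\{s:N^a_s>p\}$ exactly as in Appendix~\ref{appendix:subsec_martingale_hawkes} (equivalently, dominating $N^a$ by a linear Hawkes process through thinning), expanding $(N^a_{s\wedge T_p}-N^a_t)^2=\int_t^{s\wedge T_p}(2(N^a_{u^-}-N^a_t)+1)\,\mathrm{d}N^a_u$, taking $\mathbb{P}^{\delta}_{t,x}$-expectation and inserting the analogous first-moment bound (obtained the same way), a Grönwall inequality yields a bound uniform in $p$ and $\delta$; monotone convergence as $p\to\infty$ gives the claim. The remaining components then follow at once: $\sup_s(N^a_s-N^a_t)^2=(N^a_T-N^a_t)^2$ by monotonicity, $\|\theta^{t,x;a}_s\|_1\le\|\theta^a\|_1+\|K\|_1(N^a_s-N^a_t)$ gives $\sup_s\|\theta^{t,x;a}_s\|_1^2\le 2\|\theta^a\|_1^2+2\|K\|_1^2(N^a_T-N^a_t)^2$, and $|i^{t,x}_s|\le|i|+(N^a_s-N^a_t)+(N^b_s-N^b_t)$; summing the four contributions and taking $\mathbb{E}^{\delta}_{t,x}$ proves the first inequality.

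For the second inequality, write $J^K$ as in \eqref{eq:utility}. The terminal term is bounded using the quadratic growth of $G$ and the estimate just proved: $\mathbb{E}^{\delta}_{t,x}[|G(i^{t,x}_T,P^{t,x}_T)|]\le C(1+\mathbb{E}^{\delta}_{t,x}[\|X^{t,x}_T\|^2])\le C(1+\|x\|^2)$. In the running reward $\tilde g(s,X^{t,x}_s,\delta_s)=g(i^{t,x}_s,P^{t,x}_s)+\sum_{j=a,b}\delta^j_s\lambda^j(s,X^{t,x}_s,\delta_s)$ the $g$-term is handled by its quadratic growth and the supremum estimate, while the map $\delta\mapsto\delta e^{-\frac{k}{\sigma}\delta}$ is bounded on $\mathbb{R}_+$ (maximum $\tfrac{\sigma}{ke}$), so $|\delta^a_s\lambda^a(s,X^{t,x}_s,\delta_s)|\le\tfrac{\sigma}{ke}|\Phi(\theta^{t,x;a}_s(s))|\le C(1+\|\theta^a\|_1+(N^a_s-N^a_t))$, whose integral over $[t,T]$ has $\mathbb{P}^{\delta}_{t,x}$-expectation $\le C(1+\|x\|^2)$ by the above. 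Hence $|J^K(t,x;\delta)|\le C(1+\|x\|^2)$ uniformly in $\delta$; the supremum over $\delta$ gives the upper bound on $U^K$ and the choice $\delta\equiv0$ the lower bound, so $|U^K(t,x)|\le C(1+\|x\|^2)$.

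The main obstacle is the uniform control of the second moments of $N^a,N^b$: because the intensity feeds back on $\theta^{t,x;a}$, one cannot simply quote a fixed Hawkes estimate but must run a self-consistent Grönwall argument after a suitable localization, and must absorb the pointwise value $\theta^{t,x;a}_s(s)$ into the number of past and present events — this is exactly where the structural hypotheses (complete monotonicity and $L^1$-ness of $K$, which make the thinning domination of Appendix~\ref{appendix:subsec_martingale_hawkes} applicable, together with the linear growth of $\Phi$) are essential. Everything else is standard stochastic-calculus bookkeeping.
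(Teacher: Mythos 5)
Your proof is correct and follows essentially the same route as the paper's Appendix: bound the controlled intensities uniformly in $\delta$ by $C(1+\|\theta^a\|_1+(N^a_s-N^a_t))$, run Grönwall (after localization/thinning as in Appendix \ref{appendix:subsec_martingale_hawkes}) for the first and second moments of $N^a_T-N^a_t$, use the standard Lipschitz-SDE estimate for $P$ (valid since the Girsanov change only affects the jump parts), assemble via $\|\theta^{t,x;j}_s\|_1=\|\theta^j\|_1+\|K\|_1(N^j_s-N^j_t)$, and then bound $J^K$ using the quadratic growth of $G$, $g$ and the boundedness of $\delta\mapsto\delta e^{-\frac{k}{\sigma}\delta}$. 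The only differences are presentational (you make the localization and the $\sigma/(ke)$ bound explicit where the paper invokes its Appendix \ref{appendix:subsec_martingale_hawkes} and the "quadratic growth of $\tilde g$" directly), so no substantive gap.
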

To prove Lemma \ref{lemma:apriori_estimates}, consider $(t, x)\in \mathcal{E}^K$ and $\delta \in \mathcal{A}_t$ with $x = (P, i, \theta^a, \theta^b)$. We show different a priori estimates on the subprocesses composing $X^{t, x}=(P^{t, x}, i^{t, x}, \theta^{t,x;a}, \theta^{t, x;b})$ under the probability measure $\mathbb{P}^{t, x;\delta}$.
\paragraph{A priori estimates on $\theta^{t,x;a}$ and $\theta^{t,x;b}$:} We have
$$
N^a_s - N^a_t = M^{t, x;a, \delta}_s + \int_{t}^s \lambda^a(u, X^{t, x}_u, \delta_u) \mathrm{d}u
$$
since $\lambda^a(t, x, \delta)\leq C(1 + \|\theta^a\|_1)$ and $\|\theta^a_u\|_1 = \|\theta^a\|_1 + (N^a_u - N^a_t)\|K\|_1$ we have
$$
N^a_s - N^a_t \leq  M^{t, x;a, \delta}_s +  \int_t^s C\big(1 + \|\theta^{t,x;a}_u\|_1 + \|K\|_1(N^a_u-N^a_t) \big) \mathrm{d}u.
$$
Taking the expected value over the probability measure $\mathbb{P}^{t,x;\delta}$ using the fact that $M^{t, x;d, \delta}$ is a true martingale and a Gr\"onwall lemma we get
\begin{equation}
\label{eq:apriori_eq_a}
\mathbb{E}^{\delta}_{t, x}[N^a_T - N^a_t] \leq C(1 + \|\theta^a \|_1)
\end{equation}
where $C$ only depends on $T$ and on the model constants. Consequently we have for some positive constant $C$
$$
\mathbb{E}_{t, x}^{\delta}[\| \theta^{t,x;a}_s \|_1] \leq C(1 + \|\theta^a \|_1).
$$
We now give an a priori estimate for the second order moment.
\begin{align*}
(N^a_s - N^a_t)^2 &=  \int_t^s \big(2(N^a_{u^-}- N^a_t)+1\big) \lambda_u\mathrm{d}u+ \int_{t}^s \big(2(N^a_{u^-}- N^a_t)+1\big) \mathrm{d}M^{t, x;a, \delta}_u \\
&\leq  \int_{t}^s  \big(2(N^a_{u^-}- N^a_t)+1\big) C(1 + \|\theta^{t,x;a}_u \|_1 + N^a_{u}- N^a_t)\mathrm{d}u +  \int_{t}^s \big(2(N^a_{u^-}- N^a_t)+1\big) \mathrm{d}M^{t, x;a, \delta}_u \\
& \leq  \int_{t}^s  C(N^a_{u^-}- N^a_t)^2 + (N^a_{u^-}- N^a_t)C(1 + \|\theta^{t,x;a}_u \|_1)\mathrm{d}u + \int_{t}^s \big(2(N^a_{u^-}- N^a_t)+1\big) \mathrm{d}M^{t, x;a, \delta}_u.
\end{align*}
The average of the last term of the right hand side is $0$ as consequence of Appendix \ref{appendix:subsec_martingale_hawkes}. Thus taking the average and using a Gr\"onwall lemma we get
\begin{equation}
\label{eq:apriori_eq_b}
\mathbb{E}_{t, x}^{\delta}[(N^a_T-N^a_t)^2] \leq C(1 + \|\theta\|^2_1).
\end{equation}

\paragraph{A priori estimates on $P^{t,x}$:} We have
$$
\mathrm{d}P^{t,x}_s = d(s, P^{t,x}_s)\mathrm{d}s + \sigma \mathrm{d}W_s, \text{ with }P^{t,x}_t = P.
$$
By assumption there exists $k>0$ such that :$|d(t, p) - d(t, q)|\leq k|p-q|$.	We have the classic apriori estimates (see for example Theorem 1.2 in \cite{touzi2012optimal}).
\begin{equation}
\label{eq:apriori_eq_c}
\mathbb{E}_{t, x}^{\delta}[\underset{s\leq T}{\sup }~P^2_s] \leq C(1 + P^2).
\end{equation}
Where $C$ only depends only on the Lipshitz constant $k$ and on $T$.
	
\paragraph{A priori estimates on $X^{t,x}$:}  We have
\begin{align*}
&i_s = i + N^a_s-N^a_t + N^b_s-N^b_t\\
&\|\theta^j_s \|_1 = \|\theta^j \|_1 + \|K\|_1 ( N^j_s-N^j_t),\text{ for }j = a,~b.
\end{align*}
Thus we have
$$
\|X_s^{t,x}\|^2 \leq C (1 + i^2 + \|\theta^a\|^2_1+ \|\theta^b\|^2_1 + (N^a_s-N^a_t)^2 + (N^b_s - N^b_t)^2 + P_s^2).
$$
Taking the average and using Equations \eqref{eq:apriori_eq_a}, \eqref{eq:apriori_eq_b} and \eqref{eq:apriori_eq_c} we get
\begin{equation}
\label{eq:apriori_eq_d}
\mathbb{E}_{t, x}^{\delta}\big[ \underset{s\leq T}{\sup } \|X_s^{t,x}\|^2 \big] \leq C (1 + i^2 + \|\theta^a\|^2_1+ \|\theta^b\|^2_1 + P^2) = C(1+\|x\|^2)
\end{equation}
where $C$ is independent of $\delta$.

\paragraph{A priori estimates on the value function:} By the quadratic growth of $G$ and $\tilde{g}$ we get
\begin{align*}
|J^K(t, x; \delta)| &\leq \mathbb{E}_{t, x}^{\delta}\big[ e^{-r(T-t} C(1+ \|X^{t,x}_T\|^2) + \int_t^T e^{-r(s-t) } C(1+ \|X^{t,x}_s\|^2 ) \big].
\end{align*}
Because of the a priori estimates \eqref{eq:apriori_eq_a}, \eqref{eq:apriori_eq_b}, \eqref{eq:apriori_eq_c} and \eqref{eq:apriori_eq_d} we have
	$$
	|J^K(t, x; \delta)| \leq C(1 + i^2 +\|\theta^a\|^2_1+\|\theta^b\|^2_1+P^2)  \leq C(1 + \|x\|^2)
	$$
	where $C$ only depends on $T$ and the regularity constants. We conclude by arbitrariness of $\delta$.

\subsection{Rewriting of the utility}
\label{appendix:apriori_inequalities_integral}
We show that for any $\delta \in \mathcal{A}$ we have
$$ 
\mathbb{E}^{\delta}[\int_0^T e^{-rs} \delta^a_s \mathrm{d}N^a_s]  = \mathbb{E}^{\delta}[\int_0^T e^{-rs} \delta^a_s \lambda^{a,\delta}_s \mathrm{d}s].
$$
To conclude it is enough to show that 
$$
\overline{M}_t = \int_0^t e^{-rs} \delta^a_s \mathrm{d}M^a_s
$$
is a true martingale. We have
$$
[\overline{M}]_t = \int_0^t e^{-2rs} (\delta^a_s)^2 \mathrm{d}N^a_s \text{ and } \langle \overline{M} \rangle_t = \int_0^t e^{-2rs} (\delta^a_s)^2 \lambda^{a, \delta}_s \mathrm{d}s
$$
and since $(\delta^a_t)^2 \lambda^{a, \delta}_t \leq C( 1 + \|X_t\|_1) $ we get 
$$
\langle \overline{M} \rangle_T \leq \int_0^T e^{-2rs} C(1 + \|X_s\|_1) \mathrm{d}s \leq T C(1 + \underset{s \in [0, T]}{\sup}\|X_s\|_1).
$$
The last term of the RHS is integrable by Lemma \ref{lemma:apriori_estimates}. By the monotone convergence $[\overline{M}]_T$ is also integrable so $\overline{M}$ is a uniformly integrable martingale. As consequence we get 
\begin{align*}
\mathbb{E}^{\delta}[G(i_T,P_T) e^{-rT} + &\int_0^T e^{-rs}\Big(  g(i_s, P_s)\mathrm{d}s + \delta^a_s \mathrm{d}N^a_s+ \delta^b_s\mathrm{d}N^b_s\Big)]\\
& = 	\mathbb{E}^{\delta}[G(i_T,P_T) e^{-rT} + \int_0^T e^{-rs}\Big(  g(i_s,P_s) + \delta^a_s \lambda^{a,\delta}_s + \delta^b_s\lambda^{b, \delta}_s\Big) \mathrm{d}s] .
\end{align*}

Using the sames arguments we get 
We show that for any $\delta \in \mathcal{A}$ we have
$$ 
\mathbb{E}^{\delta}[\int_0^T e^{-rs} \delta^b_s \mathrm{d}N^b_s]  = \mathbb{E}^{\delta}[\int_0^T e^{-rs} \delta^b_s \lambda^{b,\delta}_s \mathrm{d}s].
$$

\section{Equivalence between the two definitions of viscosity solutions}
\label{appendix:section:equivalent_definition}
	 
	\begin{lemma}
	\label{lemma:equivalence_definitions}
	Definition \ref{def:definition} and Definition \ref{def:equivalent_definition} are equivalent.
	\end{lemma}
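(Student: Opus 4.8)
The plan is to prove the equivalence for sub-solutions only, the super-solution case being the mirror image (upper semicontinuity replaced by lower semicontinuity, maxima by minima, $\overline{\mathcal{J}}^{+}$ by $\overline{\mathcal{J}}^{-}$, $\limsup$ by $\liminf$, and the directions of all inequalities reversed). The first thing I would do is record a reduction: the non-local term $D^KU$ is built directly from the values of $U$ and appears \emph{unchanged} in Definitions \ref{def:definition} and \ref{def:equivalent_definition} (and, as already observed in the text, it may be freely replaced by $D^K\phi$), so the entire content of the lemma is that, at a fixed $x_0=(t_0,p_0,z_0)\in\mathcal{E}^K$, the test functions touching $U$ from above at $x_0$ and the elements of the semi super-jet $\overline{\mathcal{J}}^{+}U(x_0)$ carry exactly the same local data $\big(U(x_0),\nabla(\cdot)(x_0),\partial^2_{pp}(\cdot)(x_0)\big)$.

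For the implication Definition \ref{def:equivalent_definition} $\Rightarrow$ Definition \ref{def:definition}: given a test function $\phi$ such that $U-\phi$ has a local maximum at $x_0$, normalised to value $0$, I would write a second-order Taylor expansion of $\phi$ in the two smooth coordinates $(t,p)$, using the joint continuity of $\partial_t\phi,\partial_p\phi,\partial^2_{pp}\phi$, and package the entire (merely continuous) dependence of $\phi$ on the non-local coordinate into the additive term $h_\phi(w):=\phi(t_0,p_0,z_0-w)-\phi(t_0,p_0,z_0)$. This shows $\big(\nabla\phi(x_0),\partial^2_{pp}\phi(x_0),h_\phi\big)\in\mathcal{J}^{+}U(x_0)\subseteq\overline{\mathcal{J}}^{+}U(x_0)$, and Definition \ref{def:equivalent_definition} applied to this element gives precisely the sub-solution inequality of Definition \ref{def:definition}.

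For the converse Definition \ref{def:definition} $\Rightarrow$ Definition \ref{def:equivalent_definition}: starting from $(g,A,h)\in\overline{\mathcal{J}}^{+}U(x_0)$, I would first treat a genuine jet element $(g,A,h)\in\mathcal{J}^{+}U(x_0)$ by exhibiting a test function realising it, namely $U(x_0)$ plus the affine-quadratic part in $(t,p)$ dictated by $(g,A)$, plus a smooth correction $\zeta\big(|t_0-s|+|p_0-q|^2\big)$ absorbing the $o$-remainder (with $\zeta\in C^2$ vanishing to second order at $0$, the usual construction), plus the non-local term $h(z_0-v)$; this $\phi$ is a valid test function with $U\le\phi$ near $x_0$ and equality at $x_0$, so Definition \ref{def:definition} yields $F\big(x_0,U(x_0),g,A,D^KU(x_0)\big)\le 0$. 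For a general semi-jet element I would take the defining approximating sequence $(x_n,g_n,A_n,h_n)$ with $(g_n,A_n,h_n)\in\mathcal{J}^{+}U(x_n)$ and $\big(x_n,U(x_n),g_n,A_n,h_n\big)\to\big(x_0,U(x_0),g,A,h\big)$, apply the previous step at each $x_n$, and let $n\to\infty$.

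The main obstacle --- and the only point where the non-standard setting really intervenes --- is this last limit. Because $U$ is merely upper semicontinuous, $D^KU(x_n)$ need not converge to $D^KU(x_0)$, so one cannot pass to the limit in $F$ naively. The way around it is the monotonicity of $F$ in its non-local argument (it is non-increasing, as used repeatedly in Section \ref{proof:comparison}): since $x_n^{+a}\to x_0^{+a}$ and $x_n^{+b}\to x_0^{+b}$ in $\mathcal{E}^K$ (translation is continuous in $L^1$ and $i$ is eventually constant), upper semicontinuity forces $\limsup_n D^KU(x_n)\le D^KU(x_0)$ componentwise, and monotonicity turns this one-sided bound into exactly what is needed: along a subsequence with $D^KU(x_n)\to\ell\le D^KU(x_0)$ one gets $F\big(x_0,U(x_0),g,A,D^KU(x_0)\big)\le F\big(x_0,U(x_0),g,A,\ell\big)=\lim_n F\big(x_n,U(x_n),g_n,A_n,D^KU(x_n)\big)\le 0$. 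The remaining continuity of $F$ in the variables that do converge is routine, the one subtlety being that $F$ sees $\theta^a,\theta^b$ only through the evaluations $\theta^a(t),\theta^b(t)$, whose convergence is supplied by Lemma \ref{lemma:topology_a}$(ii)$.
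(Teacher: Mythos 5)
Your proof follows the same two-step route as the paper's: in one direction you read off a jet element from a test function by expanding in $(t,p)$ and packaging the $z$-dependence into the continuous term $h$; in the other you realise a jet element by a test function (the quadratic part in $(t,p)$, plus $h$, plus a smooth majorant of the remainder) and pass to the semi-jet by approximation. Two points of comparison. At the construction step the paper argues differently: it forms the partial supremum $\tilde U(t,p)=\sup_{z}\big(U(t,p,z)-h(z-z_0)\big)$ over a compact $z$-ball, proves $\tilde U$ is upper semicontinuous using local compactness of the domain, checks that the first- and second-order data lie in the finite-dimensional super-jet of $\tilde U$ at $(t_0,p_0)$, and then quotes the classical lemma to get the $C^{1,2}$ part; your direct absorption of the $o$-remainder by a $C^{2}$ correction $\zeta$ is the standard alternative and works, under the same (uniform-in-$z$) reading of the remainder in the jet definition that the paper's argument also needs. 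On the passage from $\mathcal{J}^{+}$ to $\overline{\mathcal{J}}^{+}$ you are in fact more careful than the paper, which only says the inequality "extends by continuity of $F$": as you observe, $D^KU(x_n)$ need not converge to $D^KU(x_0)$, and your combination of $\limsup_n D^KU(x_n)\le D^KU(x_0)$ (upper semicontinuity, $U(x_n)\to U(x_0)$ from the semi-jet definition, and Lemma \ref{lemma:topology_a} $(ii)$ for the evaluations $\theta^{a}(t),\theta^{b}(t)$ entering $F$) with the monotonicity of $F$ in its non-local argument is exactly the argument needed; it mirrors what the paper does in the proofs of the comparison and convergence results.

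The one step that fails as written is your choice of $h$ in the easy direction, $h_\phi(w)=\phi(t_0,p_0,z_0-w)-\phi(t_0,p_0,z_0)$, which freezes the smooth variables at $(t_0,p_0)$. Since $h_\phi(z_0-v)=\phi(t_0,p_0,v)-\phi(t_0,p_0,z_0)$, the jet inequality you need amounts to $\phi(s,q,v)-\phi(t_0,p_0,v)\le \partial_t\phi(x_0)(s-t_0)+\partial_p\phi(x_0)(q-p_0)+\tfrac12\partial^2_{pp}\phi(x_0)(q-p_0)^2+o(|s-t_0|+|q-p_0|^2)$; but Taylor expanding the left-hand side at frozen $v$ produces the cross term $\big(\partial_p\phi(t_0,p_0,v)-\partial_p\phi(t_0,p_0,z_0)\big)(q-p_0)$, of size $\varepsilon(v)\,|q-p_0|$ with $\varepsilon(v)\to 0$ only as $v\to z_0$. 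Taking $|q-p_0|$ of the same order as $\varepsilon(v)$ shows this is not $o(|q-p_0|^2)$, and it cannot be absorbed by $h_\phi(z_0-v)$ either, so the triple $\big(\nabla\phi(x_0),\partial^2_{pp}\phi(x_0),h_\phi\big)$ need not belong to $\mathcal{J}^{+}U(x_0)$. The repair is what the paper does: take for $h$ a modulus of continuity of $\phi$ in the $z$-variable uniform over a compact $(t,p)$-neighbourhood $N$ of $(t_0,p_0)$, for instance $h(w)=\sup_{(s,q)\in N}\big(\phi(s,q,z_0-w)-\phi(s,q,z_0)\big)$, whose continuity and vanishing at $0$ follow from compactness (the same device as in the paper's proof that the partial supremum $\tilde U$ is upper semicontinuous). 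With that replacement the rest of your argument goes through.
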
 
	\begin{proof}
	
	We show it for sub-solutions, the demonstration is the same for super-solutions.\\	
	
 Consider $U$ a USC sub-solution of $(\mathbf{HJB})_K$ in the sense of Definition \ref{def:equivalent_definition}. Now consider a test function $ \phi $ such that $0 = U(t_0, x_0) - \phi(t_0, x_0) = \underset{\mathcal{V}}{\sup}~ U -\phi$ for $\mathcal{V}$ a neighborhood of $(t_0, x_0)$ in $\mathcal{E}^K$. We show that 
 $$
 	F\big(t_0 , x_0, U(t_0, x_0), \nabla \phi(t_0, x_0) , \partial_{pp}^2 \phi(t_0, x_0), D^K\phi(t_0, x_0) \big) \leq 0.
 $$
Writing $x = (p, z)\in \mathbb{R}\times \mathcal{Z}^K_T$ we have $ \phi(t, p, z)  = \phi(t_0, p_0, z_0) + \partial_t \phi(t_0, x_0) (t-t_0) +  \partial_p \phi(t_0, x_0) (p-p_0) + \partial_{pp}^2 \phi(t_0, x_0)\frac{(p-p_0)^2}{2} + o(|p-p_0|^2 + |t-t_0|^2) + h(z-z_0). $
 where $h$ is a modulus of continuity of $\phi$. Thus we have
 $$
 (\nabla \phi(t_0, x_0), \partial_{pp}^2 \phi(t_0, x_0), h)\in \mathcal{J}^{+}u(t_0, x_0).
 $$
 Consequently
 $$
 F(t_0, x_0, U(t_0, x_0), \nabla \phi(t_0, x_0), \partial^2_{pp} \phi(t_0, x_0), D^KU(t_0, x_0)) \geq 0.
 $$
 So $U$ is a viscocity sub-solution of $(\mathbf{HJB})_K$ in the sense of Definition \ref{def:definition}.\\
 
 Now we show the opposite implication. Consider $U$ a USC function sub-solution of $(\mathbf{HJB})_K$ in the sense of Definition \ref{def:definition}. Consider $(d, A, h)\in \mathcal{J}^{+}U(t_0, x_0)$, we built a test function $\phi$ dominating $U$ with equality at point $(t_0, x_0)$ and such that
$$
\big(\nabla \phi(x_0), \partial_{pp}^2 \phi(x_0)\big) = (d, A).
$$
We will then get the expected inequality that will extend directly to $\overline{\mathcal{J}}^{+}U(t_0, x_0)$ by continuity of $F$.\\

Using the notation $(t, x) = (t, p, z)\in [0, T]\times \mathbb{R}\times \mathcal{Z}^K_t$ we have
$$
U(t, x) \leq U(t_0, x_0) + d_1(t-t_0) +  d_2(p-p_0) + \frac{1}{2}A(p-p_0)^2 + h(z-z_0) + o(|p-p_0|^2) + o(|t-t_0|).
	$$
	hence
	$$
	U(t, p, z) - h(z-z_0) \leq U(t_0, x_0) + d_1(t-t_0) +  d_2(p-p_0) + \frac{1}{2}A(p-p_0)^2 + o(|p-p_0|^2) + o(|t-t_0|).
	$$
	We take the supremum on $z$ over a compact neighborhood of $z_0$, and consider
	$$
	\tilde{U}(t, p) = \underset{z\in B_r( z_0)\cap \mathcal{Z}^K_t}{\sup }U(t, p, z) - h(z-z_0).
	$$
	Since $\tilde{U}(t_0, p_0) = U(t_0, x_0)$ we get
	$$
	\tilde{U}(t, p) \leq \tilde{U}(t_0, p_0) +  d_1(t-t_0) +  d_2(p-p_0) + \frac{1}{2}A(p-p_0)^2 + o(|p-p_0|^2) + o(|t-t_0|).
	$$
	We prove at the end that $\tilde{U}$ is a USC function and assume this is true. The last equation means that $(d, A)\in \mathcal{J}^+\tilde{U}(t_0, p_0)$. Then by an argument developped for the analysis of viscosity solutions on $\mathbb{R}^d$ (see for example \cite{fleming2006controlled} Lemma 4.1.) we have existence of a function $\phi\in C^{1, 2}$ such that
	 $$
	 \tilde{U}(t, p) - \phi(t, p) \leq \tilde{U}(t_0, p_0) -\phi(t_0, p_0)\text{ with } (\nabla\phi(t_0, p_0), \partial_{pp}^2 \phi(t_0, p_0)) = (d, A).
	 $$
	 So finally we have on a compact neighborhood of $x_0$:
	 $$
	 U(t, p, e) - \phi(t, p) - h(e - e_0) \leq U(t_0, p_0, e_0) - \phi(t_0, p_0) - h(e_0-e_0).
	 $$
	 This local domination can then be extended to the whole domain $\mathcal{E}^K$. \\
	 
	 Finally we show that $\tilde{U}$ is a USC function. Fix $\varepsilon>0$ and $(t, p)$. Since $U$ is USC and $h$ continuous, for any $e\in B_r(e_0)$ we can find $r_{e}$ such that on $B_{r_e}(t, p, e)$ we have
	 $$
	 U + h(\cdot - e_0)\leq U(t, p, e) + h(e-e_0) + \varepsilon.
	 $$
	 The collection $\big( B_{\frac{r_e}{2}}(t, p, e) \big)_{e\in B_{r}(e_0)}$ forms an open covering of $\{t\}\times \{p\} \times B_{r}(e_0)$ which is a compact set by Lemma \ref{lemma:topology_a}. Thus we may find a finite sequence $\big( B_{\frac{r_{e_i}}{2}}(t, p, e_i) \big)_{1 \leq i \leq N}$ that covers $ \{t\}\times \{p\} \times B_{r}(e_0)$. Consider $r_{*} = \underset{1 \leq i \leq N}{\min} \frac{r_{e_i}}{2}$. Now take any $(s, q) \in B_{r_*}(t, p)$, then fo any $e\in B_{r}(e_0) $ there is some $i\in \{1, \dots, N\}$ such that $(t, p, e) \in B_{r_{e_i}/2}(t, p, e_i)$. Hence we get
	 $$
	 \|(s, q, e) - (t, p, e_i)\| \leq \frac{r_{e_i}}{2} + r^* \leq r_{e_i}
	 $$
	 so $(s, q, e)\in B_{r_{e_i}}(t, p, e_i)$ and consequently
	 $$
	 U(s, q, e) - h(e-e_0)\leq U(t, p, e_i) + h(e_i-e_0) + \varepsilon \leq \tilde{U}(t, p) + \varepsilon.
	 $$
	Passing to the supremum in $e\in B_{r}(e_0)$ in the LHS we get that $\tilde{U}$ is USC.
	\end{proof}

\section{Crandall Ishi's lemma}
\label{appendix:crandall}
The most crucial point to prove comparison result for viscosity solutions is the Crandall-Ishi's lemma that allows to deal with the second order terms. In the general case the Crandall Ishi's lemma is proved for subset of $\mathbb{R}^n$, see \cite{crandall1992user}. Hence our particular domain requires an adaptation of the classic version of the Lemma.
	\begin{lemma}	
	\label{lemma:crandall_ishi}
	Let $\phi_1\in C^{2}([0, T]^2)$, $\phi_2\in C^{1}(\mathbb{R}^2)$ and $\phi_3\in C^{0}\big((\mathcal{Z}_T^K)^2\big)$, $u\in USC(\mathcal{E}^K)$ and $v\in LSC(\mathcal{E}^K)$. Suppose we have $(t_0, p_0, z_0)\in \mathbb{R}^2\times \mathbb{R}^2\times (\mathcal{Z}^K_T)^2$ such that
	\begin{equation}
	\label{eq:crandall_enonce_a}
	\begin{array}{rl}
	&u(t^1_0, p^1_0, z^1_0)-v(t^2_0, p^2_0, z^2_0)-\phi_1(t_0) -\phi_2(p_0) - \phi_3(z_0) \\
	&~~ = \underset{t, p, z \in   \mathbb{R}^2 \times [0, T]^2 \times (\mathcal{Z}^K_T)^2}{\sup}	u(t^1, p^1, z^1)-v(t^2, p^2, z^2)-\phi_1(t) -\phi_2(p) - \phi_3(z).	
	\end{array}
	\end{equation}
	Then for any $\varepsilon$ there is $(A_{\varepsilon}, h)$ and $(B_{\varepsilon}, h)$ in $\mathbb{R} \times C^0(\mathcal{Z}^K_T)$ such that 
	$$
	\big( (\nabla_1 \phi_1(t_0), \nabla_1 \phi_2(p_0)), A_{\varepsilon}, h)\in \overline{\mathcal{J}}^{+}u(t_0^1, p^1_0, z^1_0),~~ \big((-\nabla_2 \phi_1(t_0), -\nabla_2 \phi_2(p_0)), B_{\varepsilon}, h)\in \overline{\mathcal{J}}^{-}v(t^2_0, p^2_0, z^2_0)
	$$
	 and that
	\begin{equation}
	\label{eq:crandall_ishi_statement_a}	
	-(\varepsilon^{-1} + |H \phi_2(p_0)|)I_{2}\leq \begin{pmatrix}
	A_{\varepsilon} & 0 \\ 0 & -B_{\varepsilon}
	\end{pmatrix} \leq H \phi_2 (p_0) + \varepsilon H \phi_2(p_0)^2
	\end{equation}
	where $H$ is the Hessian operator and $|A|$ denotes the spectral radius of the matrix $A$.
	\end{lemma}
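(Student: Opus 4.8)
The plan is to reduce the statement to the classical finite-dimensional Crandall–Ishi lemma (as stated, e.g., in \cite{crandall1992user}) by ``integrating out'' the infinite-dimensional variable $z$. The key observation is that $\phi_3$ enters the maximization \eqref{eq:crandall_enonce_a} only through the $z$-variable, and the second-order information we need concerns only the $(t,p)$-variables, where $u$ and $v$ are genuinely $C^2$-testable. So first I would fix the optimal $z_0 = (z_0^1, z_0^2)$ and define, exactly as in the proof of Lemma B.2 (the equivalence of the two notions of viscosity solution), the partially-maximized functions
\begin{align*}
\tilde u(t^1, p^1) &= \sup_{z^1 \in B_r(z_0^1)\cap \mathcal{Z}^K_{t^1}} u(t^1, p^1, z^1) - \phi_3^{1}(z^1 - z_0^1),\\
\tilde v(t^2, p^2) &= \inf_{z^2 \in B_r(z_0^2)\cap \mathcal{Z}^K_{t^2}} v(t^2, p^2, z^2) + \phi_3^{2}(z^2 - z_0^2),
\end{align*}
where $\phi_3^1, \phi_3^2$ are the pieces of $\phi_3$ after a suitable splitting. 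By the local compactness of $\mathcal{Z}^K_t$ (Lemma \ref{lemma:topology_a}(i)) the supremum and infimum are attained, and by the USC/LSC-preservation argument carried out at the end of the proof of Lemma \ref{lemma:equivalence_definitions}, $\tilde u \in USC$ and $\tilde v \in LSC$ on a neighbourhood of $(t_0^1, p_0^1)$, resp. $(t_0^2, p_0^2)$, in the finite-dimensional space $[0,T]\times\mathbb{R}$.

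Next I would note that \eqref{eq:crandall_enonce_a} forces $(t_0^1, p_0^1, t_0^2, p_0^2)$ to be a local maximum of
$$
\tilde u(t^1, p^1) - \tilde v(t^2, p^2) - \phi_1(t^1, t^2) - \phi_2(p^1, p^2)
$$
on $[0,T]^2 \times \mathbb{R}^2$. This is now exactly the hypothesis of the classical Crandall–Ishi lemma for semicontinuous functions on a subset of $\mathbb{R}^2$ (here the time variable is confined to $[0,T]$, but the statement localizes and $t_0 \in [0,T)$ in all applications, or one uses one-sided jets at the boundary). Applying it yields, for every $\varepsilon>0$, matrices $A_\varepsilon, B_\varepsilon \in \mathbb{R}$ (the $pp$-entries; the $tt$-entries are not needed since $U$ is only $C^1$ in $t$, consistent with $\nabla = (\partial_t, \partial_p)$ carrying only the first-order time derivative) with
$$
\bigl((\nabla_1\phi_1(t_0), \nabla_1\phi_2(p_0)), A_\varepsilon\bigr) \in \overline{\mathcal{J}}^{+}\tilde u(t_0^1, p_0^1), \quad
\bigl((-\nabla_2\phi_1(t_0), -\nabla_2\phi_2(p_0)), B_\varepsilon\bigr) \in \overline{\mathcal{J}}^{-}\tilde v(t_0^2, p_0^2),
$$
together with the matrix inequality \eqref{eq:crandall_ishi_statement_a}. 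Finally I would lift these jets of $\tilde u, \tilde v$ back to jets of $u, v$: from the definition of $\tilde u$ and the fact that the maximizing $z^1$ at $(t_0^1, p_0^1)$ equals $z_0^1$, any super-jet element of $\tilde u$ at $(t_0^1, p_0^1)$ produces a super-jet element of $u$ at $(t_0^1, p_0^1, z_0^1)$ with the \emph{same} $(\nabla, A)$ and with modulus-of-continuity component $h$ coming from $\phi_3$; symmetrically for $v$. A small approximation/perturbation of $\phi_3$ (as in the construction of $(\phi_n)$ after Definition \ref{def:definition}) guarantees that one can take the \emph{same} $h$ on both sides, which is the form asserted.

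The main obstacle is the last lifting step together with the semicontinuity of $\tilde u$ and $\tilde v$: one must check carefully that partial maximization over the $t$-dependent compact sets $B_r(z_0)\cap \mathcal{Z}^K_t$ does not destroy upper/lower semicontinuity as $t$ varies, and that the maximizer is exactly $z_0$ at the base point so that no first-order ``cross'' term in $z$ leaks into the finite-dimensional jet. Both points are handled by the arguments already developed in Appendix \ref{appendix:section:equivalent_definition} (local compactness from Lemma \ref{lemma:topology_a}(i)--(ii) and the explicit covering argument for USC-preservation), so modulo invoking those, the proof is a clean reduction to the Euclidean Crandall–Ishi lemma.
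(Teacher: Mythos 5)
Your overall strategy coincides with the paper's: replace $\phi_3$ by a modulus of continuity, partially maximize to eliminate the $\mathcal{Z}^K_T$-variable, invoke the Euclidean Crandall--Ishii lemma, and lift the resulting jets back. However, the two places where you deviate or stay silent are exactly where the work lies. The main gap is the lifting step. You argue that the maximizer at the base point is $z_0$ and that this lets you transport jets of $\tilde u$ at $(t_0^1,p_0^1)$ to jets of $u$ at $(t_0^1,p_0^1,z_0^1)$. But the Euclidean lemma only produces elements of the \emph{closure} semi-jets $\overline{\mathcal{J}}^{+}\tilde u$, i.e.\ limits of jets of $\tilde u$ at nearby points $(t_n,p_n)$. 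To land in $\overline{\mathcal{J}}^{+}u(t_0^1,p_0^1,z_0^1)$ you must choose, for each $n$, a maximizer $z^1_n$ in the definition of $\tilde u(t_n,p_n)$, build a jet of $u$ at $(t_n,p_n,z^1_n)$ with a modulus $h_n$, and then prove that $(t_n,p_n,z^1_n)\to(t_0^1,p_0^1,z_0^1)$, that $u(t_n,p_n,z^1_n)\to u(t_0^1,p_0^1,z_0^1)$, and that $h_n\to h$ locally uniformly at $0$. This convergence of near-maximizers, which relies on the strictness of the maximum (obtained by perturbing with a quadratic penalty), on the compactness from Lemma \ref{lemma:topology_a} and on upper semicontinuity of $u$, is the bulk of the paper's proof of Lemma \ref{lemma:crandall_ishi}; it is not contained in the Appendix \ref{appendix:section:equivalent_definition} argument you defer to, which only converts a jet into a test function at a single fixed point.

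The second issue is that keeping the time variables inside the Euclidean application changes what you obtain. Applied in the four variables $(t^1,t^2,p^1,p^2)$ with test function $\phi_1\oplus\phi_2$, the classical lemma returns full second-order jets in $(t,p)$ and a matrix inequality governed by $H(\phi_1\oplus\phi_2)$, not by $H\phi_2$ alone: the lower bound then involves $|H\phi_1(t_0)|$ as well, which is not the bound claimed in \eqref{eq:crandall_ishi_statement_a}, and extracting a jet of the paper's parabolic type (first order in $t$, error $o(|t-s|+|p-q|^2)$) from a full second-order jet requires absorbing the $t$--$p$ cross entries of the Crandall--Ishii matrices (a Young-inequality argument followed by a limit inside $\overline{\mathcal{J}}^{+}$), a step you do not address. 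The paper sidesteps both difficulties by also maximizing out the time variables, after replacing $\phi_1$ by its first-order expansion penalized by $-C\|t-t_0\|^2$: the Euclidean lemma is then applied in the $p$-variables only, so that exactly $H\phi_2(p_0)$ appears, and the first-order time components $\pm\nabla_j\phi_1(t_0)$ are recovered in the lifting through the perturbed linear terms. Either adopt that reduction, or supply the cross-term absorption and prove (and accept) a version of \eqref{eq:crandall_ishi_statement_a} featuring $H(\phi_1\oplus\phi_2)$.
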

 Note that even thought this extension is not straightforward we benefit from the fact that in $(\mathbf{HJB})_K$ the second order derivative concerns a real variable. Therefore to prove this result we are going to benefit from the usual Crandall Ishi's lemma.

	\begin{proof}
	
	Suppose there exists $\mathcal{V}$ a compact neighborhood of $(t_0, p_0, z_0)$ in $\mathcal{E}^K$ such that on $\mathcal{V}\backslash \{t_0, p_0, z_0\} $ we have 
	\begin{align*}
	(u - v)(t_0, p_0, z_0) & \geq (u-v)(t, p, z)  - \phi_2(p) + \phi_2(p_0) - \phi_1(t) + \phi_1(t_0)- \phi_3(z) + \phi_3(z_0)\\
	&\geq (u-v)(t, p, z)- \phi_2(p) + \phi_2(p_0) + \nabla \phi_1(t_0)(t_0 - t) + \mathcal{O}(\|t-t_0\|^2) - \phi_3(z) + \phi_3(z_0) \\
	&>(u-v)(x, y, z)- \phi_2(p) + \phi_2(p_0) + \nabla \phi_1(t_0)(t_0 - t) - C\|t-t_0\|^2 - h(z_0^1 - z^1) - h(z_0^2 - z^2)
	\end{align*}	
	where $h$ is any modulus of continuity of the function $\phi_3$ and $C$ a positive constant. For $x \in \mathbb{R}$ consider $g_j(x) = \partial_j \phi_1(t_0)x - Cx^2$ for $j=1$ or $2$. Hence on $\mathcal{V} \backslash (t_0, p_0, z_0)$ we have:
	\begin{equation}	
	\label{eq:crandall_proof_a}
	\begin{array}{ll}
	(u - v)(t_0, p_0, z_0) >& u(t^1, p^1, z^1) - v(t^2, p^2, z^2) - \phi_2(p) + \phi_2(p_0)\\
	& - h(z^1_0-z^1) - h(z^2_0-z^2) + g_1(t^1_0-t^1) + g_2(t^2_0-t^2)	
	\end{array}
	\end{equation}
	with equality at $(t_0, p_0, z_0)$ and with $	h(0) = g_i(0) = 0,~~g'_j(0) = \partial_j \phi_1(t_0)$.\\

	We can always assume that there exists $r>0$ so that $\mathcal{V}$ is of the form
	$$
	\mathcal{V} = \big(B_r(t_0)\times B_r(p_0)\times B_r(z_0)\big) \cap \mathcal{E}^K
	$$
	where $B_r(x)$ denotes the closed ball with center $x$ and radius $r$.	We define the following functions
	$$
	\begin{array}{ll}
		\tilde{u}(p^1) & = \underset{t, z \in  B_r(t_0)\times B_r(z_0) }{\sup } u(t^1, p^1, z^1) -  h(z^1_0-z^1) +  g_1(t^1_0-t^1)  \\
		\tilde{v}(p^2) & = \underset{t, z \in  B_r(t_0)\times B_r(z_0) }{\inf } v(t^2, p^2, z^2) -  h(z^2_0-z^2) +  g_2(t^2_0-t^2) 
	\end{array}
  	$$
  	where the supremums above are taken for $(t, z)$ such that $(t, p, z)\in \mathcal{E}^K$. The functions $\tilde{u}$ and $\tilde{v}$ are respectively USC and LSC functions since the supremums are taken over compact subsets (see the proof of Lemma \ref{lemma:equivalence_definitions}). And we have
	$$
	\tilde{u}(p^1) - \tilde{v}(p^2) - \phi_2(p) \leq \tilde{u}(p^1_0) - \tilde{v}(p^2_0) - \phi_2(p_0).
	$$
	Thus by the Crandall Ishi's lemma (see for example Theorem 6.1. in \cite{fleming2006controlled}) there exists $(A_{\varepsilon}, B_{\varepsilon})$ satisfying \eqref{eq:crandall_ishi_statement_a} such that
	$$
	(\partial_1 \phi_2(p_0), A_{\varepsilon} )\in \overline{\mathcal{J}}^{+}\tilde{u}(p^1_0) \text{ and }(-\partial_2\phi_2(p_0), B_{\varepsilon} )\in \overline{\mathcal{J}}^{-}\tilde{v}(p^2_0).
	$$
	Consequently there exist a sequence $(q_n, A_n,p^1_n, \tilde{u}(p^1_n))_{n \in \mathbb{N}}$ such that
	$$
	\underset{n\rightarrow + \infty}{\lim}(q_n, A_n, p^1_n, \tilde{u}(p^1_n)) = (\partial_1 \phi_2(p_0), A_{\varepsilon} , p^1_0, \tilde{u}(p^1_0)),~\text{ and }\forall~ n\geq 0,~ (q_n, A_n, p^1_n, \tilde{u}(p^1_n))\in \mathcal{J}^{+}\tilde{u}(p^1_n).
	$$
	So for any $n$ we have
	$$
	\tilde{u}(p^1) \leq \tilde{u}(p_n^1) + q_n(p^1-p^1_n) +  \frac{1}{2}A_n(p^1 - p^1_n)^2 + o(|p^1-p^1_n|^2).
	$$
	Consider $t^1_n$ and $z^1_n$ such that 
	$$
	\tilde{u}(p^1_n) = u(t^1_n, p^1_n, z^1_n) - h(z^1_0 - z^1_n) + g_1(t_0^1 - t^1_n)
	$$
	such maximizers exist by compacity. We show that $(t^1_n, p^1_n, z^1_n)$ converges towards $(t^1_0, p^1_0, z^1_0)$, we assume it for now. Equation \eqref{eq:crandall_proof_a} implies that for any $(t, p, z)$ we have
	\begin{align*}
	u(t^1, p^1, z^1) \leq & ~u(t^1_n , p^1_n, z^1_n) + q_n(p^1-p^1_n) + \frac{1}{2}A_n(p^1 - p^1_n)^2 + o(|p^1 - p^1_n|^2) \\
	&~  - h(z^1_0 - z^1_n) + h(z^1_0 - z^1)  +  g_1(t^1_0 - t^1_n)  - g_1(t^1_0 - t^1) .	
	\end{align*}
	Consider the function $h_n(z^1) = - h(z^1_0 - z^1_n) +  h(z^1_0 - z^1_n - z^1)$ such that $h_n(0) = 0$ and
	$$
	h_n(z^1-z^1_n) = -h(z^1_0 - z^1_n) +  h(z^1_0 - z^1).
	$$
	Since $z_n^1 $ converges towards $z^1$ the sequence $(h_n)_{n\geq 0}$ converges uniformly towards $h$ because $h$ is continuous and because we are working on compact neighborhood. Consider $ q_n^1 = \partial_1\phi_1(t_0) - 2C(t^1_0 - t^1_n)$ that converges towards $\partial_1\phi_1(t_0) $ 
	\begin{align*}
	g_1(t^1_0 - t^1_n)  - g_1(t^1_0 - t^1) &= q_n^1(t^1 - t^1_n)+ C(t^1_n - t^1)^2.
	\end{align*}
	Thus we have 
	\begin{align*}
	u(t^1,p^1, z^1) \leq& u(t^1_n,p^1_n, z^1_n) + q_n(p^1-p^1_n) + \frac{1}{2}A_n(p^1 - p^1_n)^2 + o(|p^1 - p^1_n|^2)\\
	&  + q^1_n(t^1 - t^1_n) + o(|t^1 - t^1_n|) + h_n(z^1 - z^1_n)	
	\end{align*}
	hence $\big((q_n^1, q_n), A_n, h_n\big)\in \mathcal{J}^{+}u(t^1_n, p^1_n, z^1_n)$ and 
	$$
	\big((q_n^1, q_n), A_n, h_n\big) \rightarrow \big((\partial_1\phi_1(t_0),\partial_1\phi_2(p_0) ), A_{\varepsilon}, h \big)
	$$
	
	 Finally we show that $(t^1_n, p^1_n, z^1_n)\underset{n\rightarrow + \infty}{\rightarrow} (t^1_0, p^1_0, z^1_0)$ which will imply the conclusion that
	$$
	\big( ( \partial_1 \phi_1(t_0),\partial_2 \phi_2(p_0)) , A_{\varepsilon}, h) \in \overline{\mathcal{J}}^{+}u(t^1_0, p^1_0, z^1_0).
	$$ 
	We have for any $n\geq 0$:
	$$
	\tilde{u}(p^1_n) = u(t_n^1, p^1_n, z^1_n) - h(z_0^1 - z_n^1) - g_1(t_0^1 - t_n^1).
	$$
	Consider any $(t^1, z^1)\in \overline{(t^1_n, z^1_n)}_{n\geq 0}$. Since $\tilde{u}(p^1_n)\rightarrow\tilde{u}(p^1_0)$, by upper semi-continuity of $u$ and by the definition of $\tilde{u}$ we get
	$$
	u(t^1, p^1_0, z^1) - h(z^1_0 - z^1) + g_1(t_0^1 - t^1) \geq  u(t^1_0, p^1_0, z^1_0).
	$$
	Which implies that $(t^1, z^1) = (t^1_0, z^1_0)$ since everywhere else the above inequality is false because of Equation \eqref{eq:crandall_proof_a}. Hence we get
	$$
	\big( (q^1_n, q_n),A_n, h_n, (x^1_n, y^1_n, z^1_n) \big) \underset{n \rightarrow \infty}{\rightarrow} \big(( \partial_1\phi_1(t_0), \partial_1\phi_2(p_0)),A_{\varepsilon}, h,(t_0^1,p^1_0, z^1_0) \big)
	$$
	and so
	$$
	\big( (\partial_1 \phi_1(t_0), \partial_1 \phi_2(p_0)), A_{\varepsilon}, h \big)\in \overline{\mathcal{J}}^{+}u(t^1_0, p_0^1, z_0^1).
	$$
	Similarly we get 
	$$
	\big( (-\partial_2 \phi_1(t_0), -\partial_2 \phi_2(p_0) ), B_{\varepsilon}, h\big)\in \overline{\mathcal{J}}^{-}v(t^2_0, p_0^2, z_0^2).
	$$
	This concludes the proof.
	\end{proof}

\section{Existence of $\mathcal{R}^{\alpha, \gamma}$}
\label{appendix:existence_change_representation}

Consider $t\geq 0$ we have for any $j\geq0$
\begin{equation}
\label{eq:existence_change_a}
\theta^{a(j)}_t(T)=\sum_{i = 1}^{n} c^{a,i}_t(-\gamma_i)^j e^{-\gamma_i(T-t) }\text{ and }\theta^{b(j)}_t(T)=\sum_{i = 1}^{n} c^{b,i}_t(-\gamma_i)^j e^{-\gamma_i(T-t) }.
\end{equation}
So let $A$ be the matrix with coefficient $A_{ij} = (-\gamma_i)^j$. This is a Vandermonde matrix which is invertible. By Equation \eqref{eq:existence_change_a} we have
$$
c^{a,i}_t  = e^{\gamma_i(T-t)} \sum_{j = 1}^n (A^{-1})_{ij} \theta_t^{a(j)}(T)\text{ and }c^{b,i}_t  = e^{\gamma_i(T-t)} \sum_{j = 1}^n (A^{-1})_{ij} \theta_t^{b(j)}(T).
$$
So we define $\mathcal{R}^{\alpha, \gamma}$ for $(t, x) \in \mathcal{E}^{K_{\alpha, \gamma}}$ by
$$
\mathcal{R}^{\alpha, \gamma}(t, x) = \big(t, p, i, c^a(t, x), c^b(t,x)\big)
$$
where
$$
c^a(t, x) = \big( e^{\gamma_i(T-t)} \sum_{j = 1}^n (A^{-1})_{ij} \theta^{a(j)}(T) \big)_{1\leq j \leq n} \text{ and }c^b(t,x) = \big( e^{\gamma_i(T-t)} \sum_{j = 1}^n (A^{-1})_{ij} \theta^{b(j)}(T) \big)_{1\leq j \leq n}.
$$
By Lemma \ref{lemma:topology_a} the map $\mathcal{R}^{\alpha, \gamma}$ is continuous and by construction we have for any $t\geq 0$
$$
\mathcal{R}^{\alpha, \gamma}(t, X_t) = (t, Y^{\alpha, \gamma}_t).
$$

\section{Proof of Lemma \ref{lemma:existence_converging_sequence}}
\label{appendix:existence_converging_sequence}

We are going to approximate the integral in Equation \eqref{eq:completely_monotone_rep} by Riemann sum. We take $A_n = \sqrt{n}$  and $(a_i)_{0\leq i \leq n -1}$ a regular grid of $[0, A_n]$ with mesh $\frac{1}{\sqrt{n}}$. We set
	$$
	K_n(t) = \sum_{i = 0}^{n - 1}e^{- a_{i+1} t}\int_{a_i}^{a_{i + 1}}m(\mathrm{d}u)\leq K(t).
	$$
	For $t\in \mathbb{R}_+$, we have
	$$
	K(t) - K_n(t)  = \sum_{i = 0}^{n - 1} \int_{a_{i}}^{a_{i+1}}  m(\mathrm{d}u) \int_{u}^{a_{i+1}}  te^{-tv} \mathrm{d}v  - \int_{A_n}^{+\infty}e^{-tu} m( \mathrm{d}u).
	$$
	 Hence for any $T$ and $t\leq T$:
	\begin{align*}
	|K_n(t) - K(t)| &\leq \sum_{i = 0}^{n - 1} T \int_{a_i}^{a_{i + 1}}m(\mathrm{d}u) (a_{i + 1}-a_i)  + \int_{A_n}^{+\infty} m(\mathrm{d}u)	\\
	& \leq \frac{T}{\sqrt{n}}\int_{0}^n m(\mathrm{d}u)  + \int_{A_n}^{+\infty} m(\mathrm{d}u)\\
	& \leq \frac{T}{\sqrt{n}}\int_{0}^{+\infty} m(\mathrm{d}u)  + \int_{A_n}^{+\infty} m(\mathrm{d}u)
	\end{align*}
	which goes to $0$ when $n$ goes to infinity, uniformly on $t\in [0, T]$. Hence the sequence $K_n$ converges uniformly towards $K$ and is dominated by $K$ so $K_n$ converges in $\mathbb{L}_1$ towards $K$. \\
	
	Set $\alpha_n = K(0) - K_n(0)$ and $\beta_n = \frac{\alpha_n}{\|K\|_1 - \|K_n\|_1}$ and consider $\tilde{K}_n = K_n + \alpha_ne^{-\beta_n\cdot}$, we have for any $n$
$$
\tilde{K}_n(0) = K(0)\text{ and }\|\tilde{K}_n\|_1 = \|K\|_1
$$
and $\tilde{K}_n \rightarrow K$ in $\mathbb{L}_1$. Thus the sequence $(\tilde{K}_n)_{n\geq 0}$ gives the result.

\section{Choice of an approximating sequence for Section \ref{subsec:high_dimension_method}}
\label{appendix:large_dimension}

Consider $f^{\alpha, \lambda}$ the Mittag-Leffler density function, see \cite{haubold2011mittag} for details, and $\mathcal{L}$ the Laplace transform operator we have
$$
\mathcal{L}[f^{\alpha, \lambda}](t) = \frac{\lambda}{\lambda + t^{\alpha}}.
$$
Moreover for any $\beta\in (0, 1)$ we have
$$
\mathcal{L}[D^{\beta}f^{\alpha, \lambda}](t) = \frac{\lambda}{\lambda + t^{\alpha}}\frac{1}{t^{\beta}},
$$
where $D^{\beta}$ is the fractional derivative operator, see \cite{samko1993fractional} for details. Hence we have
$$
\frac{\lambda}{\lambda + (t+\varepsilon)^{\alpha}}\frac{1}{(t+\varepsilon)^{\beta}} = \int_0^{+\infty} e^{-pt} e^{-p\varepsilon} D^{\beta}f^{\alpha, \lambda}(p) \mathrm{d}p.
$$
Therefore to build the $(\alpha_n, \beta_n)_{n\geq 0}$ we simply use Riemman sums. More precisely for any $n$ we set
$$
K_n(t) = \sum_{i = 0}^{n-1} \big( e^{- c_{i+1} \varepsilon} D^{\beta}f^{\alpha, \lambda}(c_{i+1}) (c_{i+1} - c_{i}) \big) e^{-c_{i+1} t},
$$
where $c_i= i / \sqrt{n}$. Hence we have
$$
\alpha^n_i = \frac{i+1}{\sqrt{n}} \text{ and } \gamma^n_i = e^{- c_{i+1} \varepsilon} D^{\beta}f^{\alpha, \lambda}(c_{i+1}) (c_{i+1} - c_{i}).
$$
Finally we rescale the $\alpha^n$ to obtain equality of the $L^1$ norm.

\section{Probabilistic representation of PIDE in high dimension}
\label{sec:branching_method}

We are going to use a probabilistic representation based on branching processes. This method is insensitive to the dimension of the domain of the PIDE. Theoretically the method works for any semi-linear PIDE admitting a strong solution and with a generator that can be written as a power serie. Thought this is not the case for $(\mathbf{HJB})_{\alpha, \gamma}$, in order to implement this method we approximate the generator of the PIDE by a second order polynomial and assume that the approximated PIDE have a strong solution. Thus we are left with an PIDE of the form
	\begin{equation*}
	(\mathbf{HJB})'_{\alpha, \gamma}:~~-\partial_tU -\mathcal{L}U - f(U, D^{\alpha}U) = 0, ~~u(T, \cdot) = 0 \text{ on }\mathbb{Z} \times \mathbb{R}^n \times \mathbb{R}^n 	
	\end{equation*}
	where
	\begin{align*}
	f(U, D^{\alpha}U)(t, x) =& f_0(t, x) + f_1(t, x) U(t, x) \\
	&+ f^a_{2, 1}(t, x) D^{\alpha}_aU(t, x) + f^b_{2, 1}(t, x) D^{\alpha}_bU(t, x) \\
	&+ f^a_{2, 2}(t, x) D^{\alpha}_aU(t, x)^2 + f^b_{2, 2}(t, x) D^{\alpha}_bU(t, x)^2.
	\end{align*}
	The operator $\mathcal{L}$ is defined by $\mathcal{L}U(t, x) = - \langle \gamma, \nabla^c_a U(t,x) \rangle - \langle \gamma, \nabla^c_b U(t,x) \rangle $.\\
	
	Consider a process $\tilde{X}^{t,x}$ starting at time $t$ with initial state $x$ such that $(t, x) \in \mathcal{E}^{n}$ and whose dynamics is driven by the infinitesimal generator $\mathcal{L}$. The Feynman-Kac formula gives
	\begin{equation}
	\label{eq:branching_eq_b}
	U(t, x) = \mathbb{E}[ \frac{f(U, D^{\alpha}U)(\tau,\tilde{X}^{t,x}_{t+\tau})}{\rho(\tau)}\mathbf{1}_{t+\tau < T} ]	
	\end{equation}
	where $\tau$ is a positive random variable with density $\rho$.\\
	
	 We show in Appendix \ref{appendix:branching_proba} that there exists an appropriate probability measure $\mathbb{P}_{\mathcal{T}}$ on the set 
	$$
	\mathcal{T} = \big\{  0, 1, (2, j, d, \varepsilon), \text{ with }  d\in \{0, 1, 2\}, ~ j\in \{a, b\},~\varepsilon\in \{0, 1\}^d  \big\}
	$$
	and a set of functions $(g_{\tau})_{\tau \in \mathcal{T}}$ from $ [0, T]\times \mathbb{Z} \times \mathbb{R}_+^n \times \mathbb{R}_+^n $ such that for any random variable $\xi$ with law $\mathbb{P}_{\mathcal{T}}$ we have
	\begin{equation}
	\label{eq:branching_eq_a}
	f(U, D^{\alpha}U)(t, x) = \mathbb{E}[g_{\xi}(U, D^{\alpha}U)(t, x)].
	\end{equation}
	The set $(g_{\tau})_{\tau \in \mathcal{T}}$ is defined by
$$
	g_{0}(U, D^{\alpha}U)(t, x) = f_{0}(t, x) \mathbb{P}(l = 0)^{-1},~ g_{1}(U, D^{\alpha}U)(t, x) = f_{1}(t, x) \mathbb{P}(l = 1)^{-1} U(t, x)
$$
and
$$
g_{(2, j, d, \varepsilon)}(U, D^{\alpha}U)(t, x) = f^j_{2, d}(t, x)\mathbb{P}\big(l = (2, j, d, \varepsilon)\big)^{-1} \displaystyle \prod_{k = 1}^d U(t, x+\Delta^j\varepsilon_k)(-1)^{1-\varepsilon_k},
$$
	where $\Delta^a$ (resp. $\Delta^b$) is the jump corresponding to an ask (resp. bid) market order, namely $\Delta^a = (-1, \alpha, 0)\text{ and } \Delta^b = (1,0,  \alpha)$ (We recall that the price variable is no longer part of the domain).\\
	
We now define a branching process in the following way: any particle is noted by $(t,x, l_0, l_1, \dots, l_n)$ where $(x,t)\in \mathcal{E}^n$ and the $l_i$'s belong in $\mathcal{T}$. The variable $x$ denotes the initial position of the particle and $t$ its birth time, $l_n$ is the label of the particle, $l_{n-1}$ the label of its parent, and so on. The lifetime of the particles are i.i.d random variables with density $\rho$\\
 	
We now describe the evolution of the particle. Consider a particle born at time $s$ at the state $x$ with lifetime $\tau$. During its lifetime the particle state is described by its position: $\big((i^{s, x}_t, c^{s, x;a}_t, c^{s, x;b}_t)\big)_{s\leq t \leq s+\tau }$ in $\mathbb{Z}\times \mathbb{R}_+^n\times \mathbb{R}_+^n$. The dynamics of the particle position is given by
 	$$
 	\mathrm{d}c^{j, i}_t = -\gamma_i c^{j, i}_t\mathrm{d}t, \text{ for }i\in\{1, \dots, n\}\text{ and }j=a\text{ or }b.
 	$$
 	The other components are constants.	Note that this dynamics corresponds to the infinitesimal generator $\mathcal{L}$. When the particle dies it gave birth to independent particles. The number and type of children particles depend on the label $l_n$ of the particle:
	\begin{itemize}
	\item if $l_n = 0$: $0$ child
	\item if $l = 1$: $1$ child
	\item if $l_n = (2, d, j, \varepsilon)$: $d$ children \begin{itemize}
	\item if $j = a$ the initial state of the $i-th$ child particle is $X^{s;x}_{s + \tau} + \Delta^a \varepsilon_i$
	\item if $j = b$ the initial state of the $i-th$ child particle is $X^{s;x}_{s + \tau} + \Delta^b \varepsilon_i$
	\end{itemize}
	\end{itemize}
	The labels of the children particles are i.i.d. random variables with law $\mathbb{P}_{\mathcal{T}}$. We note $\mathcal{C}_p$ the set of the children particles.\\
	
	Considering a particle starting at point $(t, x)$, Equations \eqref{eq:branching_eq_a} and \eqref{eq:branching_eq_b} give
	$$
	U(t, x) = \mathbb{E}[\frac{a(l,t+\tau, X^{t;x}_{t+\tau })}{\rho(\tau)} \displaystyle\prod_{c\in \mathcal{C}_p} U(t+\tau, X_c) \mathbf{1}_{t+\tau < T} ]
	$$
	where $X_c$ denotes the initial position of the child particle $c$ and where $a$ is defined by
	\begin{align*}
	a(i, t, x) &= f_{0}(t, x) \mathbb{P}(l = i)^{-1}, \text{ for } i = 1 \text{or }2\\
	a\big((2, j, d, \varepsilon), t, x \big) &= f^j_{2, d}(t, x)\mathbb{P}\big(l = (2, j, d, \varepsilon)\big)^{-1} \prod_{k = 1}^d (-1)^{1-\varepsilon_k} 
	\end{align*}
	By iterating the above equality to the descendents of the particle and assuming that the number of descendent particles born before the time horizon $T$ is almost surely finite we can evaluate $U(t, x)$ using Monte Carlo simulation. For more details on this method we refer to \cite{henry2019branching}.\\

\subsection{Existence of a measure for the particle method}
\label{appendix:branching_proba}
We have 
$$
f(u, Du)(t, x) = \mathbb{E}[f_I(u, Du)(t, x)]
$$
where $I$ is a random variable with values in $\{ 0, 1, 2\}$, and
\begin{align*}
f_0(u, Du)(x) &= f_0(t, x) \mathbb{P}(I = 0)^{-1} \\
f_1(u, Du)(x) &= f_1(t, x) u(t, x) \mathbb{P}(I = 1)^{-1}\\
f_2(u, Du)(x) &= \mathbb{E}[f_{l}(u, Du)(t, x)] \mathbb{P}(I = 2)^{-1}
\end{align*}
	where $l$ is a random variable with values in $\{(a,1), (b, 1), (a, 2) , (b, 2)\}$ and
	\begin{align*}
	f_{(j, d)}(u, Du)(t, x) = f^j_{2, d}(t, x)D^ju(t, x)^d \mathbb{P}(l = (j, d))^{-1}.
	\end{align*}
	Finally we have
	$$
	D^ju(t, x)^d = 2^d ~\mathbb{E}[\displaystyle\prod_{k = 1}^d u(t, x+\Delta^j\varepsilon_k)(-1)^{1-\varepsilon_k} ]
	$$
	with $(\varepsilon_i)_{1\leq i \leq d}$ i.i.d. random variables with law $Ber(\frac{1}{2})$. Thus finally
	$$
	f(u, Du)(t, x) = \mathbb{E}[g_{l}(u, Du)(t, x)]
	$$
	with $l$ is a random variable whose law is the uniform probability measure on the set $\mathcal{L} = \big\{ 0, 1, (2, j, d, \varepsilon), \text{ with }  d\in \{0, 1, 2\}, ~ j\in \{a, b\},~\varepsilon\in \{0, 1\}^d  \big\}$ and where
	\begin{align*}
	g_{0}(u, Du)(t, x) &= f_{0}(t, x) ~\mathbb{P}(l = 0)^{-1}\\
	g_{1}(u, Du)(t, x) &= f_{1}(t, x)u(t, x) ~ \mathbb{P}(l = 1)^{-1} \\
	g_{(2, j, d, \varepsilon)}(u, Du)(t, x) &= f^j_{2, d}(t, x) \mathbb{P}\big(l = (2, j, d, \varepsilon)\big)^{-1} \displaystyle\prod_{k = 1}^d u(t, x+\Delta^j\varepsilon_k)(-1)^{1-\varepsilon_k} .
	\end{align*}

\bibliographystyle{plain}
\bibliography{biblio}

\begin{thebibliography}{10}

\bibitem{jaber2018lifting}
Eduardo Abi~Jaber.
\newblock Lifting the {H}eston model.
\newblock {\em Quantitative Finance}, 19(12):1995--2013, 2019.

\bibitem{alfonsi2016dynamic}
Aur{\'e}lien Alfonsi and Pierre Blanc.
\newblock Dynamic optimal execution in a mixed-market-impact {H}awkes price
  model.
\newblock {\em Finance and Stochastics}, 20(1):183--218, 2016.

\bibitem{avellaneda2008high}
Marco Avellaneda and Sasha Stoikov.
\newblock High-frequency trading in a limit order book.
\newblock {\em Quantitative Finance}, 8(3):217--224, 2008.

\bibitem{bacry2016estimation}
Emmanuel Bacry, Thibault Jaisson, and Jean-Fran{\c{c}}ois Muzy.
\newblock Estimation of slowly decreasing {H}awkes kernels: application to
  high-frequency order book dynamics.
\newblock {\em Quantitative Finance}, 16(8):1179--1201, 2016.

\bibitem{cartea2015algorithmic}
{\'A}lvaro Cartea, Sebastian Jaimungal, and Jos{\'e} Penalva.
\newblock {\em Algorithmic and high-frequency trading}.
\newblock Cambridge University Press, 2015.

\bibitem{cartea2014buy}
{\'A}lvaro Cartea, Sebastian Jaimungal, and Jason Ricci.
\newblock Buy low, sell high: A high frequency trading perspective.
\newblock {\em SIAM Journal on Financial Mathematics}, 5(1):415--444, 2014.

\bibitem{crandall1992user}
Michael~G Crandall, Hitoshi Ishii, and Pierre-Louis Lions.
\newblock User’s guide to viscosity solutions of second order partial
  differential equations.
\newblock {\em Bulletin of the American mathematical society}, 27(1):1--67,
  1992.

\bibitem{dayri2015large}
Khalil Dayri and Mathieu Rosenbaum.
\newblock Large tick assets: implicit spread and optimal tick size.
\newblock {\em Market Microstructure and Liquidity}, 1(01):1550003, 2015.

\bibitem{fleming2006controlled}
Wendell~H Fleming and Halil~Mete Soner.
\newblock {\em Controlled {M}arkov processes and viscosity solutions},
  volume~25.
\newblock Springer Science \& Business Media, 2006.

\bibitem{gueant2016financial}
Olivier Gu{\'e}ant.
\newblock {\em The Financial Mathematics of Market Liquidity: From optimal
  execution to market making}, volume~33.
\newblock CRC Press, 2016.

\bibitem{gueant2013dealing}
Olivier Gu{\'e}ant, Charles-Albert Lehalle, and Joaquin Fernandez-Tapia.
\newblock Dealing with the inventory risk: a solution to the market making
  problem.
\newblock {\em Mathematics and financial economics}, 7(4):477--507, 2013.

\bibitem{haubold2011mittag}
Hans~J Haubold, Arak~M Mathai, and Ram~K Saxena.
\newblock Mittag-{L}effler functions and their applications.
\newblock {\em Journal of Applied Mathematics}, 2011, 2011.

\bibitem{henry2019branching}
Pierre Henry-Labordere, Nadia Oudjane, Xiaolu Tan, Nizar Touzi, Xavier Warin,
  et~al.
\newblock Branching diffusion representation of semilinear {PDE}s and {M}onte
  {C}arlo approximation.
\newblock In {\em Annales de l'Institut Henri Poincar{\'e}, Probabilit{\'e}s et
  Statistiques}, volume~55, pages 184--210. Institut Henri Poincar{\'e}, 2019.

\bibitem{hewlett2006clustering}
Patrick Hewlett.
\newblock Clustering of order arrivals, price impact and trade path
  optimisation.
\newblock In {\em Workshop on Financial Modeling with Jump processes, Ecole
  Polytechnique}, pages 6--8, 2006.

\bibitem{bachouch2018deep}
C{\^o}me Hur{\'e}, Huy{\^e}n Pham, Achref Bachouch, and Nicolas Langren{\'e}.
\newblock Deep neural networks algorithms for stochastic control problems on
  finite horizon, part i: convergence analysis.
\newblock {\em arXiv preprint arXiv:1812.04300}, 2018.

\bibitem{jacod1975multivariate}
Jean Jacod.
\newblock Multivariate point processes: predictable projection,
  {R}adon-{N}ikodym derivatives, representation of martingales.
\newblock {\em Zeitschrift f{\"u}r Wahrscheinlichkeitstheorie und verwandte
  Gebiete}, 31(3):235--253, 1975.

\bibitem{jacod2013limit}
Jean Jacod and Albert Shiryaev.
\newblock {\em Limit theorems for stochastic processes}, volume 288.
\newblock Springer Science \& Business Media, 2013.

\bibitem{jaisson2016rough}
Thibault Jaisson, Mathieu Rosenbaum, et~al.
\newblock Rough fractional diffusions as scaling limits of nearly unstable
  heavy tailed {H}awkes processes.
\newblock {\em The Annals of Applied Probability}, 26(5):2860--2882, 2016.

\bibitem{lillo2004long}
Fabrizio Lillo and J~Doyne Farmer.
\newblock The long memory of the efficient market.
\newblock {\em Studies in nonlinear dynamics \& econometrics}, 8(3), 2004.

\bibitem{madhavan1997security}
Ananth Madhavan, Matthew Richardson, and Mark Roomans.
\newblock Why do security prices change? a transaction-level analysis of {NYSE}
  stocks.
\newblock {\em The Review of Financial Studies}, 10(4):1035--1064, 1997.

\bibitem{merkle2014completely}
Milan Merkle.
\newblock Completely monotone functions: A digest.
\newblock In {\em Analytic Number Theory, Approximation Theory, and Special
  Functions}, pages 347--364. Springer, 2014.

\bibitem{reny1999existence}
Philip~J Reny.
\newblock On the existence of pure and mixed strategy {N}ash equilibria in
  discontinuous games.
\newblock {\em Econometrica}, 67(5):1029--1056, 1999.

\bibitem{samko1993fractional}
Stefan~G Samko, Anatoly~A Kilbas, Oleg~I Marichev, et~al.
\newblock {\em Fractional integrals and derivatives}, volume~1.
\newblock Gordon and Breach Science Publishers, Yverdon Yverdon-les-Bains,
  Switzerland, 1993.

\bibitem{sokol2013optimal}
Alexander Sokol et~al.
\newblock Optimal {N}ovikov-type criteria for local martingales with jumps.
\newblock {\em Electronic Communications in Probability}, 18, 2013.

\bibitem{touzi2012optimal}
Nizar Touzi.
\newblock {\em Optimal stochastic control, stochastic target problems, and
  backward SDE}, volume~29.
\newblock Springer Science \& Business Media, 2012.

\bibitem{wyart2008relation}
Matthieu Wyart, Jean-Philippe Bouchaud, Julien Kockelkoren, Marc Potters, and
  Michele Vettorazzo.
\newblock Relation between bid--ask spread, impact and volatility in
  order-driven markets.
\newblock {\em Quantitative finance}, 8(1):41--57, 2008.

\bibitem{chen2020optimal}
Ge~Zhang Ying~Chen, Zexin~Wang and Chao Zhou.
\newblock Optimal high frequency trading with thinned {H}awkes process.
\newblock {\em Working paper}.

\end{thebibliography}
	
\end{document}